\pdfoutput=1
\documentclass[11pt]{article}

\usepackage[USenglish]{babel}
\usepackage[T1]{fontenc}
\usepackage[utf8]{inputenc}
\usepackage{textcomp}
\usepackage{lmodern}
\usepackage[margin=1in]{geometry}
\usepackage{microtype}
\usepackage{xcolor}
\usepackage{amsmath,amssymb,amsfonts,mathtools}
\usepackage{graphicx}
\usepackage{array,multirow,tabularx,threeparttable,longtable}
\usepackage{enumerate}
\usepackage{comment}
\usepackage{tikz}
\usepackage[linesnumbered,ruled,vlined]{algorithm2e}
\usepackage{float}
\usepackage{caption}
\usepackage{subcaption}
\usepackage{rotating}
\usepackage{textpos}
\usepackage{amsthm}
\usepackage{thmtools}
\usepackage{thm-restate}
\usepackage[textsize=footnotesize,colorinlistoftodos]{todonotes}
\usepackage{hyperref}
\usepackage{cleveref}
\usetikzlibrary{calc}

\numberwithin{equation}{section}

\theoremstyle{plain}
\newtheorem{theorem}{Theorem}[section]
\newtheorem{lemma}[theorem]{Lemma}
\newtheorem{proposition}[theorem]{Proposition}

\newtheorem{fact}[theorem]{Fact}
\theoremstyle{definition}
\newtheorem{definition}[theorem]{Definition}
\theoremstyle{remark}
\newtheorem{remark}[theorem]{Remark}
\newtheorem*{theorem*}{Theorem}
\newtheorem*{corollary*}{Corollary}

\crefname{observation}{Observation}{Observations}
\Crefname{observation}{Observation}{Observations}
\crefname{fact}{Fact}{Facts}
\Crefname{fact}{Fact}{Facts}
\crefname{algorithm}{Algorithm}{Algorithms}
\Crefname{algorithm}{Algorithm}{Algorithms}

\hypersetup{
  colorlinks=true,
  linkcolor=blue!75!black,
  citecolor=green!75!black,
  urlcolor=blue!75!black,
}

\definecolor{MidnightBlack}{rgb}{0.1,0.1,.34}
\definecolor{MidnightBlue}{rgb}{0.1,0.1,0.43}
\definecolor{Black}{rgb}{0,0, 0}
\definecolor{Blue}{rgb}{0, 0 ,1}
\definecolor{Red}{rgb}{1, 0 ,0}
\definecolor{White}{rgb}{1, 1, 1}
\definecolor{grey}{rgb}{.6, .6, .6}
\definecolor{Mygreen}{rgb}{.0, .7, .0}
\definecolor{Yellow}{rgb}{.55,.55,0}
\definecolor{Mustard}{rgb}{1.0, 0.86, 0.35}
\definecolor{applegreen}{rgb}{0.55, 0.71, 0.0}
\definecolor{darkturquoise}{rgb}{0.0, 0.81, 0.82}
\definecolor{celestialblue}{rgb}{0.29, 0.59, 0.82}
\definecolor{green_yellow}{rgb}{0.68, 1.0, 0.18}
\definecolor{crimsonglory}{rgb}{0.75, 0.0, 0.2}
\definecolor{darkmagenta}{rgb}{0.30, 0.0, 0.30}
\definecolor{magenta}{rgb}{0.50, 0.0, 0.50}
\definecolor{internationalorange}{rgb}{1.0, 0.31, 0.0}
\definecolor{darkorange}{rgb}{1.0, 0.55, 0.0}
\definecolor{ao}{rgb}{0.0, 0.5, 0.0}
\definecolor{awesome}{rgb}{1.0, 0.13, 0.32}
\definecolor{darkcyan}{rgb}{0.0, 0.50, 0.50}
\definecolor{violet}{rgb}{0.93, 0.51, 0.93}
\definecolor{brown}{rgb}{0.65, 0.16, 0.16}
\definecolor{orange}{rgb}{1.0, 0.65, 0.0}
\definecolor{cornflowerblue}{rgb}{0.39, 0.58, 0.93}

\newcommand{\blue}[1]{{\color{Blue}#1}}
\newcommand{\red}[1]{{\color{Red}#1}}

\usetikzlibrary{arrows.meta,calc,positioning,backgrounds}

\definecolor{navy}{RGB}{33,42,110}
\definecolor{lightblue}{RGB}{198,214,241}
\definecolor{lightgreen}{RGB}{200,238,210}
\definecolor{lightyellow}{RGB}{243,221,123}
\definecolor{edgegreen}{RGB}{16,150,102}
\definecolor{edgeblue}{RGB}{66,111,204}
\definecolor{edgered}{RGB}{238,70,70}
\definecolor{textgray}{RGB}{140,156,183}
\definecolor{arrowgray}{RGB}{95,112,140}
\definecolor{bggray}{RGB}{246,246,246}

\makeatletter
\newcommand{\linkedstatement}[3]{%
  \begingroup
    \let\thmt@origtrivialref\thmt@trivialref
    \def\thmt@trivialref##1##2{%
      \ifnum\pdfstrcmp{##1}{thmt@@#1}=0\relax
        \hyperref[#2]{\thmt@origtrivialref{##1}{##2}}%
      \else
        \thmt@origtrivialref{##1}{##2}%
      \fi
    }%
    #3%
  \endgroup
}
\makeatother
\tikzset{
  basevertex/.style={circle, draw=navy, line width=1.6pt, minimum size=11mm, inner sep=0pt, font=\bfseries\Large, text=navy},
  leftvertex/.style={basevertex, fill=lightblue},
  vzero/.style={basevertex, fill=lightblue},
  vone/.style={basevertex, fill=lightgreen},
  auxvertex/.style={basevertex, fill=lightyellow},
  darkedge/.style={draw=navy, line width=1.8pt},
  gedge/.style={draw=edgegreen, line width=1.7pt},
  bedge/.style={draw=edgeblue, line width=1.5pt},
  title/.style={font=\rmfamily\bfseries\fontsize{23}{25}\selectfont, text=navy},
  sideanno/.style={font=\itshape\small, text=black, align=left},
  bodytext/.style={font=\large},
  legendtext/.style={font=\large, align=left}
}

\title{W-state graphs: Structure and Algorithms}

\author{
Rishikesh Gajjala\thanks{Supported by Center for Quantum and Topological Systems, NYUAD.}\\
New York University Abu Dhabi\\
\texttt{r.gajjala@nyu.edu}
\and
Saurabh Ray$^*$\\
New York University Abu Dhabi\\
\texttt{saurabh.ray@nyu.edu}
\and
Dimitrios M. Thilikos\thanks{Supported by the Franco-Norwegian project PHC AURORA 2024-25 (Projet n°\! 51260WL) and the French National Research Agency (ANR) under project GODASse ANR-24-CE48-4377 and under the France 2030 grant reference number ANR-24-RRII-0002 operated by the Inria Quadrant Program.}\\
LIRMM, Univ Montpellier, CNRS, Montpellier, France\\
\texttt{sedthilk@thilikos.info}
}

\date{}
\newcommand{\paperkeywords}{W-state graphs, perfect matchings, matching-covered graphs, factor-critical graphs, structural graph theory}

\begin{document}
\maketitle

\begin{abstract}
\noindent We study the class of edge-coloured graphs arising from the graph-theoretic representation of quantum photonic experiments that generate multipartite W-states. Abstracting away physical amplitudes and phases, we introduce \emph{W-state graphs}: matching-covered graphs equipped with a half-edge $2$-colouring such that every perfect matching contains exactly one bichromatic edge and every vertex is incident with a \red{red} half-edge. Our main contribution is a complete structural characterization of W-state graphs. We show that a graph is a W-state graph if and only if each of its $3$-connected components is a W-cone, a simple and rigid building block defined by a universal vertex and a factor-critical base. This characterization implies that no W-state graph is simple and yields a recognition algorithm running as fast as verifying whether a graph is matching-covered. We also show that the natural generalization to Dicke states encounters a complexity barrier: verifying one of the two Dicke state conditions is itself coNP-complete, resolving an open problem of Vardi and Zhang [IJCAI 2023]. Our results place W-state graphs firmly within classical matching theory and precisely delineate the combinatorial structures capable of realizing idealized W-states in the experiment-graph framework.
\end{abstract}

\smallskip
\noindent\textbf{Keywords:} \paperkeywords
\medskip

\newpage 

\section{Introduction}
Motivated by the graph-theoretic representation of photonic quantum experiments, we study the  graphs arising  when the coincidence structure of an experiment is required to realize an idealized $n$-partite W-state. Abstracting away all physical amplitudes and phases, we isolate a purely combinatorial notion—W-state graphs—defined by local colouring constraints and a global restriction on perfect matchings. The central question we address is structural: which graphs satisfy these constraints, and how rigid are they? Our main result gives a complete characterization of W-state graphs in terms of classical matching-theoretic structure, yielding both conceptual clarity and algorithmic consequences. This characterization  delineates the exact combinatorial boundary of W-state realizability within the experiment-graph framework~\cite{Quantum_graphs, Quantum_graphs_3}.

\subsection{Motivation}

Multipartite entangled states such as GHZ and W-states form two of the most prominent and inequivalent classes of quantum entanglement. While GHZ states display extreme non‑locality, they are very fragile under particle loss; in contrast, W-states retain bipartite entanglement even if any one subsystem is traced out, which makes them attractive resources for quantum communication, distributed sensing, and other multi‑party protocols~\cite{DurVidalCirac00,wstatefml}. This has led to a sustained effort, especially in photonic platforms, to design experiments that generate high‑quality W-states in a scalable and resource‑efficient way.

Krenn, Gu and Zeilinger introduced a method to represent a large class of quantum photonic experiments as edge‑weighted, half‑edge–coloured multigraphs \cite{Quantum_graphs}. In this representation, the vertices correspond to optical modes, the edges to photon‑pair sources, and perfect matchings encode coincidence detection events; the quantum state produced by the experiment is then a coherent superposition of basis states indexed by perfect matchings. The combinatorial structure of perfect matchings and edge colours largely determines which multipartite entangled state is realized in the laboratory. Additionally, every edge-coloured edge-weighted graph can be translated into a concrete experimental setup.

This technique has led to the discovery of new quantum interference effects and connections to quantum computing \cite{Quantum_graphs_3}. Furthermore, it has been used as the representation of efficient AI-based design methods for new quantum experiments \cite{AIquantum1, AIquantum2}. The states formed by this framework were also experimentally demonstrated \cite{Feng:23, qian2023multiphoton}. This graph-based representation was also used to demonstrate many more systems beyond post-selected states (like NOON states and heralded states) \cite{AIquantum2}. This representation and another closely related graph-based representation have also been used for quantum circuit representation and computation \cite{AIquantum1, anand2022information}. The largest integrated photonic chip experiment (with several applications) presented so far \cite{Bao23} also follows a graph-based representation. 

On the other hand, most of the graph‑theoretic work so far has focused on the Greenberger–Horne–Zeilinger (GHZ) states, where the amplitudes of all \emph{monochromatic vertex colourings} are equal, and all other colourings cancel. For this case, several structural, extremal and algorithmic questions have been answered: which graphs admit such a colouring \cite{CG1}, how large the “dimension’’ (number of distinct monochromatic colourings) can be \cite{Quantum_graphs, CG2}, and can one efficiently verify these properties \cite{vardi1, vardi2}. In contrast, the analogous question for W-states is far less understood: which graphs encode experiments whose perfect matchings coherently realize an $n$‑partite W-state?

In this paper, we take a purely graph‑theoretic approach to this question. We isolate the combinatorial core of the W‑state condition and arrive at the notion of a \emph{W‑state graph}, a matching‑covered half‑edge 2‑coloured multigraph in which every perfect matching contains \emph{exactly one bichromatic edge} and every vertex is incident with at least one \emph{\red{red} half‑edge} in a perfect matching (\Cref{def_Wstate}). This definition abstracts away amplitudes, phases, and other physical details while retaining exactly those constraints that are forced by the target W-state. Our main goal is to understand the structure of such graphs in classical graph-theoretic terms and thereby obtain a clean characterization that can be used in the design of quantum photonic experiments.

\subsection{Related work}


 The combinatorial constraint we study is closely related to the classical Exact Matching (EM) problem of Papadimitriou and Yannakakis~\cite{DBLP:conf/icalp/PapadimitriouY79}. In the EM problem one is given an edge-coloured graph with colours \red{red} and \blue{blue}, and an integer $k$, and the goal is to decide whether the graph contains a perfect matching with exactly $k$ \red{red} edges. In their seminal work, Mulmuley, Vazirani and Vazirani~\cite{DBLP:journals/combinatorica/MulmuleyVV87} showed that EM is in $\mathsf{RP}$ (indeed, in $\mathsf{RNC}$) via the isolation lemma. It is one of the canonical natural combinatorial problems known to be in $\mathsf{RP}$ but not known to lie in $\mathsf{P}$.  Despite several efforts, the progress has been limited to a few restricted graph classes \cite{DBLP:conf/stacs/Maalouly23,DBLP:journals/algorithmica/Yuster12}. We refer the reader to \cite{DBLP:phd/basesearch/Maalouly24} to get a more detailed overview of this area. The W-state graphs (see \Cref{def_Wstate}) satisfy a universal (“for all”) variation of EM with parameter $k=1$: \emph{every} perfect matching (rather than at least one perfect matching) contains exactly one \emph{bichromatic edge} (analogous to a \red{red} edge).

Vardi and Zhang \cite{vardi1, vardi2} developed a complexity‑theoretic perspective formalising \emph{perfect matching under vertex‑colour constraints} (PM‑VC) as graph problems in which the legal vertex colourings encode target quantum states (such as GHZ, W, or Dicke states) and studied their algorithmic complexity on bicoloured graphs. Chandran and Gajjala initiated a systematic study of graphs that encode GHZ-states \cite{CG1}. They gave a structural characterization of edge-coloured graphs that have only monochromatic perfect matchings. This yields, in particular, a complete description of \textit{GHZ-state graphs}. For the more general edge-weighted version of GHZ-state graphs, the structure \cite{CGI3} and bounds on the dimension were also studied \cite{CerveraLierta2022designofquantum, CG2}. 



To the best of our knowledge, there is no prior structural characterization of graphs that encode W-states in the experiment‑graph sense. Only a small subclass of W-state graphs has been found so far \cite{Quantum_graphs_3}, and several advanced automated
search methods are being used to find more \cite{AIquantum1, AIquantum2}. Our work fills this gap by identifying the right purely combinatorial notion (W‑state graphs) and describing exactly which bicoloured graphs satisfy it.

\subsection{Quantum photonic experiments as graphs}\label{quanttograph}

We briefly recall the correspondence between quantum photonic experiments and half‑edge-coloured graphs, specializing to the case relevant for W-states. A typical setup consists of several nonlinear crystals acting as photon‑pair sources, linear‑optical elements (beam splitters, phase shifters, interferometers), and a collection of single‑photon detectors. Each crystal, when pumped, probabilistically emits a pair of photons into two optical paths; these photons may carry a discrete “mode number’’ (for example, an orbital‑angular‑momentum mode). A \emph{coincidence} event is a run of the experiment in which each detector registers exactly one photon.

The experiment‑graph formalism associates to such a setup a half‑edge-coloured, edge-weighted multigraph $H$. Vertices of $H$ correspond to optical modes or output ports, and each crystal is represented by an edge whose endpoints are the two modes to which the crystal can send photons. If a crystal can emit a photon in different modes on either arm, these modes are recorded as colours on the corresponding half‑edges. Additionally, each edge carries a complex weight encoding the amplitude of the photon pairs. Under mild assumptions on the experimental setup, there is then a one‑to‑one correspondence between coincidences and perfect matchings of $H$: activating exactly those crystals whose edges form a perfect matching results in one photon per detector.

Given a perfect matching $P$ of $H$, one obtains an \emph{inherited vertex colouring} by assigning to each vertex the colour of the half‑edge incident with it in $P$ and an amplitude equal to the product of the edge weights along $P$. The overall quantum state generated by the experiment is a superposition over all perfect matchings $P$, where each basis vector is labelled by an inherited vertex colouring and its coefficient is the sum of amplitudes of all perfect matchings that induce this colouring. Target states such as GHZ, W, and Dicke states can therefore be expressed as conditions on which vertex colourings appear with non‑zero amplitude and how their amplitudes relate to each other.
The description of an $n$‑partite W-state is given by \[
\lvert W_n \rangle
= \frac{1}{\sqrt{n}}
\left(
\lvert 10\ldots 0 \rangle
+ \lvert 010\ldots 0 \rangle
+ \cdots
+ \lvert 0\ldots 01 \rangle
\right).
\]
The computational basis states that appear are precisely those in which exactly one party is in the excited state and the remaining $n-1$ parties are in the ground state and all these basis states have equal amplitude up to a global phase. In the experiment‑graph picture, and in the absence of destructive interference, this suggests a combinatorial constraint: every perfect matching should induce a vertex colouring with exactly one vertex in colour~$1$ (say \red{red}) and $n-1$ vertices in colour~$0$ (say \blue{blue}), and for each vertex there should exist some perfect matching in which that vertex carries colour~$1$. When we identify the excited state with the colour carried by a \red{red} half‑edge, these constraints translate directly into the definition of a W‑state graph (\Cref{def_Wstate}): every perfect matching uses exactly one bichromatic edge (the location of the “excited’’ vertex) and every vertex is incident with a \red{red} half‑edge on at least one edge. It is easy to see that the crystals (edges) which are not part of a coincidence (perfect matching) are redundant. Also the crystals (edges) in which both the photons are excited (monochromatic \red{red} edges) are redundant. Therefore, we restrict our attention to edge-coloured graphs which are matching-covered and have no monochromatic \red{red} edges.   

The rest of the paper develops the consequences of this abstraction. We forget about the quantum mechanical terms and work purely with matching‑covered half‑edge 2‑coloured graphs satisfying the matching and vertex conditions of \Cref{def_Wstate}. We show that this class of graphs admits a clean structural description in terms of classical matching theory. {Given an uncoloured graph $G$, our structural result leads to an algorithm for deciding whether there exists a half-edge
$2$-colouring $c$ such that $(G=(V,E),c)$ is a W-state graph,  
in $O(|V|\cdot |E|)$ time.}
From the physical perspective, this characterization describes precisely which connectivity patterns of photon‑pair sources and modes can realize idealized W-states in the experiments.

\section{Graph-theoretic formulation and our result}
\subsection{Notation and Preliminaries}
Graphs considered in this paper may have multiple edges but no loops. 
Let \(G=(V,E)\) be a finite undirected graph where $V$ is the set of its vertices, denoted by $V(G)$, and $E$ is the set of its edges, denoted by $E(G)$. We refer to $|V|$ as the \emph{order} of $G$. We call a vertex in G that is adjacent to all other vertices of G a \emph{universal} vertex. We denote the set of all perfect matchings of $G$ as $\mathcal{PM}(G)$. 
Given a vertex set $S\subseteq V(G)$ we denote by $G[S]$ the subgraph of $G$ 
induced by $S$.  We also refer to $V(G)\setminus S$ as the \emph{complement}
of $S$ in $V(G)$. For two disjoint vertex sets $X, X' \subseteq V$, the \emph{cut} between them is denoted as $\delta(X, X') \coloneqq  \{\, \{u, v\} \in E : u \in X,\; v \in X' \,\}$. When $X'=V(G)\setminus X$, we simplify the notation of $\delta(X, X')$ as $\delta(X)$. If $G$ has odd order $n$, a matching of $G$ containing $\frac{n-1}{2}$ edges is called a \emph{near-perfect} matching of $G$.

\begin{definition}[Matching-covered graph]\label{def_MC}
A graph \(G\) is \emph{matching-covered} if and only if \emph{every} edge of \(G\) lies in at least one perfect matching of \(G\).
\end{definition}


\begin{definition}[Half-edge $2$-colouring]\label{def_half2}
A \emph{half–edge $2$–colouring} of \(G\) is a function
\[
  c:\Bigl\{(e,w)\mid e=\{u,v\}\in E,\;w\in\{u,v\}\Bigr\}\longrightarrow\{0,1\}.
\]
\end{definition}
In this paper, we refer to the colour $0$ as \blue{blue} and the colour $1$ as \red{red}. For $e=\{u,v\}$, one can visualize the half-edge $2$-colouring \(c(e,v)\) as the colour assigned to the half of edge \(e\) incident to \(v\).

\begin{definition}[Monochromatic edge]
We call an edge $e=\{u,v\}$ \emph{monochromatic} if \(c(e,u)=c(e,v)\)  and \emph{bichromatic} otherwise. 
\end{definition}
For a monochromatic edge \(e=\{u,v\}\), we define \(c(e)=c(e,u)=c(e,v)\). We can partition the edge set $E(G)$ as \(E=E_m(G)\sqcup E_b(G)\), where \(E_m(G)\) contains all monochromatic edges
and \(E_b(G)\) contains all bichromatic edges.
\begin{remark}
We note that, throughout, unless explicitly stated otherwise, the monochromatic edges are always \blue{blue}, i.e., there are no \red{red} monochromatic edges as discussed in \Cref{quanttograph}. 
\end{remark}

\begin{definition}[Factor-critical or Hypomatchable Graphs \cite{lucchesi2024perfect}]
 A graph \(G=(V,E)\) is called \emph{factor-critical} if and only if, for every vertex \(u\in V\), the graph $G[V\setminus \{u\}]$ admits a perfect matching.   
\end{definition}

\begin{definition}[W-state graph]\label{def_Wstate}
A \emph{W-state graph} is a half-edge $2$-coloured matching-covered graph \((G,c)\) such that
\begin{align}
 &\forall M\in \mathcal{PM}(G),\;\;|M\cap E_b(G)|=1.
  &\!\!\!\text{\emph{(matching condition)}}\label{eq_MC}\\
  &\forall v \in V(G),\;\exists e \in \delta(\{v\}) \;:\; c(e,v)= \red{1}.
  &\!\!\!\text{\emph{(vertex condition)}}\label{eq_VC}
\end{align}
\end{definition}
One can see that the graphs in \Cref{fig_cone} are matching-covered and satisfy both \Cref{eq_MC} and \Cref{eq_VC}. Therefore, they are W-state graphs.

\begin{restatable}{proposition}{WStateTwoConnected}
\label{prop:wstate-two-connected}
W-state graphs are $2$-connected.     
\end{restatable}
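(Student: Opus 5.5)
The plan is to prove the statement in two stages: first that a W-state graph $(G,c)$ is connected, and then that it has no cut vertex. Only the first stage uses the colouring. The second is the classical fact that connected matching-covered graphs are $2$-connected, which I would reprove directly from parity.

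\emph{Connectivity.} Suppose $G$ has two distinct components $C_1$ and $C_2$. Since $G$ has a perfect matching, every perfect matching of $G$ is the union of perfect matchings of its components. Conversely, any choice of one perfect matching in each component gives a perfect matching of $G$.
Fix a vertex $v_i\in V(C_i)$ for $i=1,2$. By the vertex condition \eqref{eq_VC}, some edge $e_i$ at $v_i$ has $c(e_i,v_i)=1$. There are no red monochromatic edges, so $e_i$ is bichromatic. Since $G$ is matching-covered, $e_i$ lies in some perfect matching $M_i$ of $G$. Restricting $M_i$ to $C_i$ gives a perfect matching $M_i'$ of $C_i$ that contains the bichromatic edge $e_i$.
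Now take $M_1'$ on $C_1$, $M_2'$ on $C_2$, and arbitrary perfect matchings on the remaining components. This is a perfect matching of $G$ with at least two bichromatic edges, contradicting the matching condition \eqref{eq_MC}. Hence $G$ is connected.

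\emph{No cut vertex.} Suppose $v$ is a cut vertex, and let $C_1,\dots,C_k$ with $k\ge 2$ be the components of $G-v$. Since $|V(G)|$ is even, $|V(G)|-1$ is odd, so an odd number of the $C_j$ have odd order.
Take any perfect matching $M$ and let $u$ be the partner of $v$ in $M$, with $u\in C_i$. Every other $C_j$ is perfectly matched within itself by $M$, so it has even order; hence $C_i$ is the \emph{unique} odd component, say $C_1$. Consequently every edge of $M$ at $v$ goes into $C_1$.
Because $G$ is matching-covered, every edge at $v$ lies in some perfect matching, so all edges at $v$ go into $C_1$. In particular $v$ has no neighbour in $C_2$. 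Then $C_2$ is a component of $G$ itself, contradicting connectivity.

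Hence $G$ is $2$-connected. The degenerate case of order $2$ (a bundle of parallel edges, which can indeed be a W-state graph) has no cut vertex and is $2$-connected by the usual multigraph convention; note that $K_2$ with a single edge violates \eqref{eq_VC}. I expect no real obstacle. The only point needing care is the connectivity step, where one must use that the red half-edge at each vertex sits on a bichromatic edge of some perfect matching, and that matchings from distinct components can be combined freely.
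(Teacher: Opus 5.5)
Your proof is correct and follows essentially the same route as the paper. For connectivity, you combine perfect matchings of two different components, each containing a bichromatic edge (forced by the vertex condition and matching-coveredness), into one perfect matching with two bichromatic edges. For the cut-vertex step, you use the standard parity argument for connected matching-covered graphs; your version first identifies the unique odd component of $G-v$ rather than comparing neighbours of $v$ in two components, but it is the same idea.
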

\begin{proof}
We first claim that W-state graphs are connected. Suppose not. Towards a contradiction, let there be two or more connected components in a W-state graph $G$. Observe that there exists a perfect matching on each component, containing a bichromatic edge (from the vertex condition and being matching-covered). One could take a disjoint union of these perfect matchings over the components to form a new perfect matching on $G$. Such a perfect matching would contain more than one bichromatic edge, which contradicts the matching condition. So we know that every W-state graph is connected.

It is known that a matching-covered graph cannot have a cut vertex (see \Cref{cut_ver_mcg}). Therefore, W-state graphs are $2$-connected.
\end{proof}

\begin{definition}[W-cone]\label{conedef}
A {\em W-cone} is a half-edge $2$-coloured matching-covered graph $(G,c)$ 
containing a special universal vertex $v$ (referred to as apex vertex) such that 
\begin{align}
 &\delta(\{v\}) = E_b(G)\ \text{ and }\ E\bigl(G[V(G)\setminus \{v\}]\bigr)= E_m(G).
  &\!\!\!\text{\emph{(edge partition condition)}}\label{eq_edge_partition}\\
 &\forall u \in V(G),\;\exists e \in \delta(\{u\}) \;:\; c(e,u)= \red{1}.
  &\!\!\!\text{\emph{(vertex condition)}}\label{eq_vertex_condition}
\end{align}

\end{definition}

\begin{figure}[t]
\centering

\begin{minipage}[t]{0.47\textwidth}
\centering
\begin{tikzpicture}[line cap=round, line join=round]
  \tikzset{
    vert/.style={circle, draw=black, fill=white,
                 minimum size=7mm, inner sep=0pt, font=\footnotesize},
    thickedge/.style={line width=1.1pt},
  }

  \coordinate (v1) at (0,0);
  \coordinate (v2) at (2,0);
  \coordinate (v3) at (5.8,1.6);
  \coordinate (v4) at (5.8,-1.6);

  \coordinate (m12) at ($ (v1)!0.5!(v2) $);
  \coordinate (m14) at ($ (v1)!0.5!(v4) $);

  \coordinate (c13a) at (2.5,1.9);
  \coordinate (c13b) at (2.5,0.9);
  \coordinate (m13a) at (2.7,1.35);
  \coordinate (m13b) at (2.7,0.85);

  \draw[thickedge, blue] (v1) -- (m12);
  \draw[thickedge, red]   (m12) -- (v2);

  \draw[thickedge, blue] (v1) -- (m14);
  \draw[thickedge, red]   (m14) -- (v4);

  \draw[thickedge, blue] (v2) -- (v3);
  \draw[thickedge, blue] (v2) -- (v4);
  \draw[thickedge, blue] (v3) -- (v4);

  \draw[thickedge,blue]
    (v1) .. controls ($(v1)!0.5!(c13a)$) .. (m13a);
  \draw[thickedge,red]
    (m13a) .. controls ($(c13a)!0.5!(v3)$) .. (v3);

  \draw[thickedge, red]
    (v1) .. controls ($(v1)!0.5!(c13b)$) .. (m13b);
  \draw[thickedge, blue]
    (m13b) .. controls ($(c13b)!0.5!(v3)$) .. (v3);

  \node[vert] at (v1) {1};
  \node[vert] at (v2) {2};
  \node[vert] at (v3) {3};
  \node[vert] at (v4) {4};
\end{tikzpicture}

\caption*{(a) A $4$-vertex $W$-cone with universal vertex labelled $1$ and monochromatic $K_3$}
\label{figconea}
\end{minipage}
\hfill
\begin{minipage}[t]{0.47\textwidth}
\centering
\begin{tikzpicture}[line cap=round, line join=round]
  \tikzset{
    vert/.style={circle, draw=black, fill=white,
                 minimum size=7mm, inner sep=0pt, font=\footnotesize},
    thickedge/.style={line width=1.1pt},
  }

  \coordinate (v1) at (0,0);
  \coordinate (v2) at (5,0);
      \coordinate (m13a) at (1.8,1.2); 

  \coordinate (v3) at (3.6,1.6);
  \coordinate (v4) at (6.4,1.6);
  \coordinate (v5) at (3.6,-1.6);
  \coordinate (v6) at (6.4,-1.6);

  \foreach \i in {2,3,4,5,6} {
    \coordinate (m1\i) at ($(v1)!0.5!(v\i)$);
    \draw[thickedge, blue] (v1) -- (m1\i);
    \draw[thickedge, red]   (m1\i) -- (v\i);
  }

  \draw[thickedge, blue] (v2) -- (v3);
  \draw[thickedge, blue] (v3) -- (v4);
  \draw[thickedge, blue] (v4) -- (v2);

  \draw[thickedge, blue] (v2) -- (v5);
  \draw[thickedge, blue] (v5) -- (v6);
  \draw[thickedge, blue] (v6) -- (v2);

    \draw[thickedge,red ]
    (v1) .. controls ($(v1)!0.5!(c13a)$) .. (m13a);
  \draw[thickedge,blue]
    (m13a) .. controls ($(c13a)!0.5!(v3)$) .. (v3);

  \node[vert] at (v1) {1};
  \node[vert] at (v2) {2};
  \node[vert] at (v3) {3};
  \node[vert] at (v4) {4};
  \node[vert] at (v5) {5};
  \node[vert] at (v6) {6};
\end{tikzpicture}

\caption*{(b) A $6$-vertex $W$-cone with universal vertex labelled $1$ and monochromatic 5-vertex Friendship graph $F_2$}
\end{minipage}
\caption{ Coloured examples of $W$-cones.}
\label{fig_cone}
\end{figure} 
\begin{restatable}{proposition}{WConeIsWState}
\label{wconiswstat}
W-cones are W-state graphs.     
\end{restatable}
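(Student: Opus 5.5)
We check the three requirements of \Cref{def_Wstate} one at a time. Two of them come directly from \Cref{conedef}. A W-cone is by definition a half-edge $2$-coloured matching-covered graph. Its vertex condition \Cref{eq_vertex_condition} is literally the vertex condition \Cref{eq_VC}. So the only thing left to prove is the matching condition \Cref{eq_MC}.

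Fix a perfect matching $M\in\mathcal{PM}(G)$ and let $v$ be the apex vertex. The key step is to use the edge partition condition \Cref{eq_edge_partition}. It says that the bichromatic edges are exactly the edges incident with $v$, so $E_b(G)=\delta(\{v\})$. Hence $M\cap E_b(G)=M\cap\delta(\{v\})$.

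Since $M$ is a perfect matching, $v$ is covered by exactly one edge of $M$. Therefore $|M\cap\delta(\{v\})|=1$. This holds for every $M\in\mathcal{PM}(G)$, which establishes \Cref{eq_MC}.

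The only point needing care is the role of multiple edges. Several parallel edges may join $v$ to the same neighbour, but $M$ still contains exactly one edge at $v$. Also, \Cref{eq_edge_partition} guarantees that there are no bichromatic edges inside $G[V(G)\setminus\{v\}]$, so no other edge of $M$ can be bichromatic. There is no real obstacle; the statement is an immediate unpacking of the definitions. Universality of $v$ and factor-criticality of the base are not needed here; they matter only for the converse characterization.
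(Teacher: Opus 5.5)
Your proof is correct and matches the paper's argument: the vertex condition carries over verbatim from \Cref{eq_vertex_condition}, and the matching condition follows from $E_b(G)=\delta(\{v\})$ in \Cref{eq_edge_partition} together with the fact that a perfect matching covers the apex by exactly one edge. Your remark that neither universality of the apex nor factor-criticality of the base is needed here is also right.
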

\begin{proof}
The vertex condition of W-state graphs is satisfied due to \Cref{eq_vertex_condition}.
The matching condition follows from \Cref{eq_edge_partition} as every perfect matching can have exactly one edge incident on the universal vertex.
\end{proof}

Some examples of W-cones are shown in \Cref{fig_cone}.

\begin{restatable}{observation}{FactorCriticalEdgesNearPerfect}
\label{fac_maximal}
In a factor-critical graph, every edge is part of a near-perfect matching. 
\end{restatable}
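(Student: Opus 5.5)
Let $G$ be factor-critical of order $n$, so $n$ is odd, and let $e=\{u,v\}$ be any edge. The plan is to delete one endpoint of $e$ and then attach $e$ to a perfect matching of what remains, using only the definition of factor-criticality.

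First, delete $u$. Since $G$ is factor-critical, $G[V\setminus\{u\}]$ has a perfect matching $M_u$. This matching covers every vertex except $u$, and in particular it covers $v$. Let $f=\{v,w\}$ be the edge of $M_u$ at $v$. Here $w\neq u$, since $u$ is not a vertex of $G[V\setminus\{u\}]$.

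Next, delete $w$ as well. The set $M_u\setminus\{f\}$ is a perfect matching of $G[V\setminus\{u,v,w\}]$. Add $e$ to it and set
\[
M \coloneqq (M_u\setminus\{f\})\cup\{e\}.
\]
This is a matching: $e$ covers $u$ and $v$, and neither of them is covered by $M_u\setminus\{f\}$. It has $\frac{n-1}{2}$ edges, covering every vertex except $w$. So $M$ is a near-perfect matching containing $e$.

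The only points needing care are that $u$ is uncovered by $M_u$ and that the removed edge $f$ is precisely the one at $v$. Both follow immediately from the construction, so there is no real obstacle. Multiple edges cause no problem, because we swap the specific edge $f$ for the specific edge $e$.
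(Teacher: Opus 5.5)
Your proof is correct and is essentially identical to the paper's: take a perfect matching of $G-u$, find the edge $vw$ covering $v$, and swap it for $uv$ to get a near-perfect matching containing $e$ (missing only $w$).
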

\begin{proof}
Let \(F\) be a factor-critical graph and \(e=uv\in E(F)\) be an arbitrary edge. Since \(F\) is factor-critical, the graph \(F-u\) has a perfect matching \(M\).
Hence \(v\) is matched in \(M\), say to some vertex \(w\), so \(vw\in M\). Now replace the edge \(vw\) of \(M\) by \(uv\): \(M' := \bigl(M\setminus\{vw\}\bigr)\cup\{uv\}
\). Observe that \(M'\) is a near-perfect matching of \(F\) and contains the edge \(e\). The claim follows.
\end{proof}

From \cref{fac_maximal}, intuitively, a W‑cone is obtained by taking a factor‑critical graph $F$ and adding a new universal vertex $v$, making every edge incident with $v$ bichromatic (such that every vertex has a \red{red} half-edge incident on it), and keeping all edges within $F$ monochromatic \blue{blue}.

\begin{remark}\label{wcome_multi}
Observe that every W-cone has a multi-edge incident on its apex vertex.
\end{remark}

\subsection{Our Results: Characterization of W-state graphs}
Gu, Chen, Zeilinger and Krenn discovered that adding a universal vertex satisfying the vertex condition (\Cref{eq_vertex_condition}) to a monochromatic odd clique or a monochromatic friendship graph results in a W-state graph \cite{Quantum_graphs_3}. Note that both of these are W-cones. However, despite efforts to understand other constructions \cite{AIquantum2}, little was known. We resolve this by fully characterizing W-state graphs. 

\begin{restatable}{theorem}{mainthm}
\label{mainstructthm} A connected half-edge $2$-coloured matching-covered graph $(G,c)$ is a W-state graph if and only if
\begin{enumerate}
    \item Every vertex of $G$ has an incident red half-edge.
    \item All components of the standard $3$-connected decomposition of $G$, with the colouring inherited from $c$ on the real edges and virtual edges coloured bichromatically, are W-cones.
\end{enumerate}
\end{restatable}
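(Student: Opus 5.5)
The plan is to induct on the number of $2$-separations of $G$, which reduces the theorem to two separate facts. The first is a gluing lemma showing that the matching condition behaves well under $2$-sums once virtual edges are coloured bichromatically. The second is a base case showing that a $3$-connected W-state graph is a W-cone. Condition~1 of \Cref{mainstructthm} is exactly the vertex condition \Cref{eq_VC}, so throughout we only need to track the matching condition \Cref{eq_MC} together with matching-coveredness.

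\emph{Gluing lemma.} Let $\{x,y\}$ be a $2$-separation of the matching-covered graph $G$. Write $G=G_1\oplus_{xy}G_2$, where $G_i$ is the side $A_i\cup\{x,y\}$ together with a virtual edge $xy$ coloured bichromatically. First I will do the parity analysis. Because $G$ is matching-covered and $2$-connected (\Cref{prop:wstate-two-connected}), the two sides $A_1,A_2$ have the same parity, and every perfect matching of $G$ matches $x$ and $y$ in a way determined by that parity. If $|A_i|$ is even, $x$ and $y$ are matched into the same side; if $|A_i|$ is odd, one is matched into each side. I will also check that each $G_i$ is again matching-covered.

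Next I will set up a correspondence. In the odd case, a perfect matching $M$ of $G$ restricts to $M_1\cup\{xy\}$ in $G_2$ and to $M_2\cup\{xy\}$ in $G_1$, up to symmetry. In the even case, the analogous statement pairs a perfect matching of $G_i - \{x,y\}$ with a perfect matching of $G_j$. In both cases every perfect matching of $G_1$ or $G_2$ that uses the virtual edge arises in this way. Counting bichromatic edges on both sides of the correspondence, and using that the virtual edge contributes exactly one bichromatic edge, shows that $G$ satisfies \Cref{eq_MC} if and only if both $G_1$ and $G_2$ do. The vertex condition passes down to the pieces because virtual edges carry a red half-edge at both ends, and it passes back up because Condition~1 is assumed for $G$. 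Iterating over the standard decomposition reduces both directions of the theorem to the $3$-connected components. Here I will also verify that the components which are cycles ($S$-nodes) cannot be matching-covered W-state graphs of length at least $3$, so they must disappear or be absorbed. Bonds ($P$-nodes) are W-cones on two vertices, with a single-vertex base that is trivially factor-critical.

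\emph{Base case (main obstacle).} Here $(G,c)$ is $3$-connected and W-state, and I must show it is a W-cone. The key point is that $E_b$ meets every perfect matching exactly once. I would first try the Lovász/Edmonds lattice description of such edge sets. By that description, the incidence vector of $E_b$ agrees on $\mathcal{PM}(G)$ with a combination of the form $\sum_X \lambda_X\chi^{\delta(X)}$, which should force $E_b$ to be a tight cut $\delta(X)$. Any nontrivial tight cut in a $3$-connected matching-covered graph would, via the Edmonds--Lovász--Pulleyblank/ear-decomposition theory, either produce a $2$-separation or make $G$ bipartite. The bipartite case is excluded separately: a bipartite matching-covered graph with a tight cut equal to $E_b$ and every vertex incident with $E_b$ yields a perfect matching with $\ge 2$ bichromatic edges. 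Hence $E_b=\delta(\{v\})$ for some vertex $v$.

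Once $E_b=\delta(\{v\})$ is established, the rest is routine. Every $u\neq v$ carries a red half-edge, which can only lie on a bichromatic edge, so $u$ is adjacent to $v$; thus $v$ is universal and \Cref{eq_edge_partition} holds. Finally, $G-v$ is factor-critical: for each $u$, a perfect matching of $G$ through an edge $uv$ restricts to a perfect matching of $G-\{u,v\}$. This makes $G$ a W-cone, and conversely W-cones are W-state by \Cref{wconiswstat}. I expect the tight-cut step, which identifies $E_b$ as a trivial cut using $3$-connectivity, to be where most of the work lies.
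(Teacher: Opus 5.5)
Your overall architecture (2-sums with bichromatic virtual edges, then a $3$-connected base case via tight cuts and bricks) mirrors the paper's, and your ``if'' direction is sound; it is essentially \Cref{lem_paste-over-edge}. The ``only if'' direction, however, has a genuine gap: the vertex condition is what forces the structure, and you use it only at the very end.

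\textbf{Base case.} The step from the lattice description to ``$E_b$ is a tight cut'' fails for edge sets that merely meet every perfect matching exactly once. Take the Petersen graph, a $3$-connected brick, and let $I$ be an independent set of size $4$. Then $V\setminus I$ induces three disjoint edges. Every perfect matching matches $I$ into $V\setminus I$, so it contains exactly one of these three edges. Their incidence vector is $\tfrac12\sum_{v\notin I}\chi^{\delta(v)}-\tfrac12\sum_{v\in I}\chi^{\delta(v)}$, so it has exactly the form you describe. Yet they form no cut, since the only $3$-edge cuts of the Petersen graph are vertex stars.

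Your next step is also false as stated. Edmonds--Lov\'asz--Pulleyblank gives, from a nontrivial tight cut, a $2$-separation cut \emph{or a nontrivial barrier cut}, and barriers do occur in $3$-connected non-bipartite graphs. For example, replace one vertex of $K_{3,3}$ by a triangle, each triangle vertex taking one of its three edges. The result is $3$-connected and non-bipartite, and the triangle cut is a nontrivial tight cut.

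The paper closes both holes with the vertex condition:
\begin{enumerate}
\item The vertex condition makes every vertex of $G_m$ inessential. Gallai--Edmonds then gives exactly two factor-critical components $X,X'$ with $E_b=\delta(X,X')$ (\Cref{thm_two-fc}).
\item If $|X|,|X'|\ge 3$, factor-criticality plus a red half-edge at every vertex gives bicriticality (\Cref{lem_G-bicrit}). A $3$-connected $G$ would then be a brick by \Cref{thm_brickchar}, contradicting the nontrivial tight cut $\delta(X,X')$.
\end{enumerate}

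\textbf{Gluing lemma.} The forward half (matching condition for $G$ implies it for the pieces) does not follow from counting. Counting only gives $|N_1\cap E_b|+|P_2\cap E_b|=1$, where $N_1$ is a perfect matching of $G[A_1\cup\{x,y\}]$ and $P_2$ one of $G[A_2]$. This does not exclude $|P_2\cap E_b|=1$. Consider the $6$-cycle $x a_1 a_2 y b_2 b_1$ with $E_b=\{a_1a_2,b_1b_2\}$. Every perfect matching has exactly one bichromatic edge. But the piece $x a_1 a_2 y$ plus a bichromatic virtual edge $xy$ has perfect matchings with $0$ and with $2$ bichromatic edges.

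What is missing is the parity argument of \Cref{thm_W-blocks-are-W}:
\begin{enumerate}
\item By the argument of \Cref{lem_2-cut}, every $2$-cut $\{x,y\}$ has one vertex in $X$ and one in $X'$.
\item Factor-criticality then forces each side to meet $X$ and $X'$ in an even number of vertices.
\item Hence each piece has odd shores $Y,Y'$ with $E_b=\delta(Y,Y')$, so every perfect matching of the piece has an odd number of bichromatic edges.
\end{enumerate}
This evenness also shows your ``odd case'' never arises. As written that case cannot be right anyway, since $G_i$ would have odd order and no perfect matching.

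Finally, a single bichromatic virtual edge has a red half-edge at only one end. To keep the vertex condition in a piece you may need two parallel virtual edges, as in \Cref{def_W-block-decomp}.
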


\begin{remark}\label{aux_remark}
We emphasize that when we
decompose $G$ along a $2$-vertex cut $\{x,x'\}$, the induced subgraphs on each side need not themselves satisfy the vertex condition. 
The addition of virtual bichromatic edges between $x$ and $x'$ restores the vertex condition inside
each piece; this operation is formalized in
\Cref{def_W-block-decomp}. 
\end{remark}

Experimentally, a multi-edge requires a two-photon correlation with only one specific mode combination in the computational basis. It was not known whether one can create a W-state graph without this obstacle. Using our structural characterization, we resolve this.
\begin{restatable}{corollary}{NoSimpleWStateGraph}
\label{simplegraphcor}
No W-state graph is simple. Equivalently,
every W-state graph contains at least one pair of parallel edges.  
\end{restatable}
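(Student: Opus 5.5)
The corollary should follow from \Cref{mainstructthm} together with \Cref{wcome_multi}. Let $(G,c)$ be a W-state graph. By \Cref{prop:wstate-two-connected} it is $2$-connected, so in particular connected, and \Cref{mainstructthm} applies: every component of its standard $3$-connected decomposition, with virtual edges coloured bichromatically, is a W-cone. By \Cref{wcome_multi}, each such component $H$ has a pair of parallel edges at its apex $v_H$. The whole task is to show that at least one such pair survives as a pair of \emph{real} parallel edges of $G$, rather than involving virtual edges.

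If the decomposition is trivial, i.e.\ $G$ itself is a W-cone, we are done immediately by \Cref{wcome_multi}. Otherwise I would look at the decomposition tree and pick a leaf component $H$, which contains exactly one virtual edge $f=\{x,x'\}$. Parallel edges arise at the apex because the apex $v$ is universal and every edge at $v$ is bichromatic. Every neighbour $u$ needs a red half-edge (vertex condition), and every edge of $\delta(\{v\})$ must be used in perfect matchings. If all edges at $v$ were simple, each edge $vu$ would have to be red at $u$ and therefore blue at $v$. Then $v$ would have no red half-edge, contradicting the vertex condition. So some neighbour $u$ of $v$ is joined to it by an edge red at $v$ and, by the vertex condition at $u$, also by a different edge red at $u$. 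The same counting shows there are at least two such ``doubled'' neighbours, or the apex red half-edge sits on a multi-edge in some other way. I would reprove \Cref{wcome_multi} in this refined form: the set of vertices $u$ with two parallel edges to the apex is nonempty.

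The main obstacle is ensuring that the doubled pair in a leaf $H$ is not $\{v,x\}$ using the virtual edge. In a leaf component at most one pair $\{x,x'\}$ is virtual, so if the doubled pair is real we are done. If the apex $v\in\{x,x'\}$ and the only red-at-$v$ edge is the virtual one, I would argue using the other leaf. The decomposition tree has at least two leaves, and the vertices of $H$ other than $x,x'$ must receive their red half-edges from real edges. Applying the counting argument to $H$ with the virtual edge deleted, I would try to show that the apex $v$, which is universal to at least one vertex outside $\{x,x'\}$ since a leaf has at least $3$ vertices, still needs a real red half-edge or else forces a real doubled neighbour. If this case analysis becomes delicate, I would instead invoke the decomposition definition (\Cref{def_W-block-decomp}). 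Since virtual edges are bichromatic and replace a whole side, I would argue that the red half-edge at each endpoint of a virtual edge corresponds to a real red half-edge in the opposite piece. I would then choose the leaf so that its apex's red half-edge is real, which yields a real parallel pair in $G$.
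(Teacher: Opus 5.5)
\textbf{There is a genuine gap.} It lies in the step that is supposed to produce a parallel pair that is \emph{real} in $G$, and the leaf-based strategy cannot be repaired as stated.

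In a leaf $H$ with adhesion pair $\{x,x'\}$ and apex $x$, you treat the case where the red-at-$x$ edge is virtual. There is a second bad case: the only doubled neighbour of $x$ is $x'$, and the edge that is red at $x'$ is the virtual one. (Under \Cref{def_W-block-decomp} a leaf may even carry two virtual $xx'$ edges.) Your final move, ``choose the leaf so that its apex's red half-edge is real, which yields a real parallel pair'', fails exactly in this second case, and it can fail at every leaf simultaneously.

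Here is an example:
\begin{itemize}
\item Take vertices $k,u_1,u_2,w,p_1,q_1,p_2,q_2$ and blue triangles $u_1u_2w$, $kp_1q_1$, $kp_2q_2$.
\item For $i=1,2$, add bichromatic edges $ku_i$ (red at $u_i$) and $u_ip_i$, $u_iq_i$ (red at $p_i$, $q_i$).
\item Add two parallel edges $kw$, one red at each end.
\end{itemize}
This is a W-state graph with $X=\{k,p_1,q_1,p_2,q_2\}$ and $X'=\{u_1,u_2,w\}$. A perfect matching with three bichromatic edges would have to use $kw$ and leave one vertex of $\{p_1,q_1\}$ and one of $\{p_2,q_2\}$, which are non-adjacent. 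Its only parallel pair is $kw$.

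Decomposing along $\{k,u_1\}$ and $\{k,u_2\}$ gives a path of three components:
\begin{itemize}
\item Each leaf, on $\{u_i,k,p_i,q_i\}$, is a W-cone with apex $u_i$. Its only parallel pair is the real edge $ku_i$ (red at $u_i$) together with a virtual edge red at $k$.
\item The real pair $kw$ lies in the internal component on $\{k,u_1,u_2,w\}$.
\end{itemize}
Both leaves have a real red half-edge at the apex, yet neither contains a real parallel pair. So no argument confined to leaves can succeed. Separately, your claim that a W-cone has at least two doubled neighbours is false (see \Cref{fig_cone}(b)); only non-emptiness holds, which is all you actually use.

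\textbf{The missing idea} is to select the piece by following a real red half-edge across the cut, and to argue by minimal counterexample rather than by leaves. This is the paper's proof. It needs only \Cref{lem_2-cut}, \Cref{thm_W-blocks-are-W} and \Cref{wcome_multi}, not the full \Cref{mainstructthm}.
\begin{itemize}
\item Let $G$ be a vertex-minimal simple W-state graph. It is not a W-cone, so it has a $2$-cut $\{x,x'\}$ with $x\in X$, $x'\in X'$.
\item Simplicity allows at most one edge $xx'$. So, without loss of generality, $x$ gets its red half-edge from a real edge $xw$ with $w\neq x'$.
\item In the W-block $H_1$ containing $w$, the vertex $x$ already has a red half-edge. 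Hence the construction leaves $x$ and $x'$ joined by at most one edge in $H_1$.
\item $H_1$ is a smaller W-state graph, so by minimality it has a parallel pair. That pair is not $\{x,x'\}$, so it is already present in $G$, a contradiction.
\end{itemize}
In the example above, this selection lands on the internal component, which is precisely what the leaf strategy misses. If you want to stay in the tree picture, you would have to turn this selection into a walk in the decomposition tree and prove that it terminates; the induction is what provides that.
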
 
Our structural characterization also gives an algorithm running as fast as verifying whether a graph is matching-covered. 

\begin{theorem}\label{thm_algorithm}
There exists an $O(|V|\cdot |E|)$-time algorithm that, given an uncoloured graph $G$, decides whether there exists a half-edge $2$-colouring $c$ such that $(G,c)$ is a W-state graph.
\end{theorem}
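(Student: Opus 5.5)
The algorithm will rest entirely on \Cref{mainstructthm}, which turns the existence of a colouring into a purely structural check on the uncoloured graph. First we test whether $G$ is matching-covered. This can be done in $O(|V|\cdot|E|)$ time: compute one perfect matching $M$, orient the edges relative to $M$, and test, for every edge $e\notin M$, whether $e$ lies in an $M$-alternating cycle. This reduces to reachability in an auxiliary directed graph, and with a single search per matched edge (one of $|V|/2$ searches, each linear) the total cost is $O(|V|\cdot|E|)$. If $G$ is not matching-covered we reject, since a W-state graph must be matching-covered. We also reject if $G$ is disconnected, by \Cref{prop:wstate-two-connected}.

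Next we compute the standard $3$-connected (Tutte/SPQR) decomposition in linear time, for example with the Hopcroft--Tarjan algorithm as corrected by Gutwenger--Mutzel. The uncoloured question then becomes whether the real edges admit colourings such that every component, with its virtual edges bichromatic, is a W-cone and $G$ satisfies the vertex condition.

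The key step, and the main obstacle, is a local criterion for a single component $H$ (bonds, cycles and $3$-connected pieces alike). $H$ should admit such a colouring if and only if it has a vertex $v$ adjacent to all other vertices such that:
\begin{enumerate}
\item every edge of $H$ not incident with $v$ is a real edge, since these edges must be blue monochromatic while virtual edges are forced to be bichromatic;
\item $H-v$ is factor-critical;
\item every neighbour $u\neq v$ of $v$ can receive a red half-edge at $u$ on some edge $uv$, and $v$ can receive a red half-edge on some edge.
\end{enumerate}
Condition (3) is where multi-edges enter, consistent with \Cref{wcome_multi}. For a virtual edge, the placement of red is not fixed by the theorem, so I would argue that each virtual edge may be chosen red on either side, and that the vertex condition for $G$ then follows because every vertex of every piece gets a red half-edge. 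The delicate part is checking that the choices of apex in adjacent pieces sharing a virtual edge are independent. Virtual edges are always incident to the apex, because non-apex edges must be real. That constraint is purely local, so I expect no global coupling and the pieces can be treated separately.

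Each test is polynomial. Factor-criticality of $H-v$ is tested by one maximum matching plus a Gallai--Edmonds check that $H-v$ is connected with deficiency one and every vertex lies in the $D$-set. Each candidate $v$ must be universal, so only vertices of degree at least $|V(H)|-1$ qualify; if a piece has several, trying each is still cheap. The cost of a piece with $n_H$ vertices and $m_H$ edges is then $O(n_H m_H)$. Summing over pieces, whose total size is $O(|V|+|E|)$, the whole procedure stays within $O(|V|\cdot|E|)$, dominated by the matching-covered test.
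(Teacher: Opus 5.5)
Your per-piece test is the right one: for a universal $v$, factor-criticality of $H-v$ is equivalent to $H$ being matching-covered, and your condition (3) amounts to $v$ lying on a pair of parallel edges, which is exactly the paper's set $A(K)$. The gap is the claim that the vertex condition of $G$ ``then follows because every vertex of every piece gets a red half-edge'', so that the pieces can be treated independently. A red half-edge on a virtual edge is not a half-edge of $G$. Condition~1 of \Cref{mainstructthm} must be witnessed by real edges, and whether an adhesion vertex gets such an edge depends on the apex choices and orientations in all pieces containing it.

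Concretely, take vertices $x,y,a_1,\dots,a_4,b_1,\dots,b_4$, the $5$-cycles $xa_1a_2a_3a_4x$ and $xb_1b_2b_3b_4x$, and join $y$ to every $a_i$ and $b_i$, doubling $ya_2$ and $yb_2$; there is no edge $xy$. This $G$ is connected and matching-covered, and $\{x,y\}$ is a $2$-cut. Each piece is a wheel with hub $y$, a rim $C_5$ through $x$, the virtual spoke $xy$ and one doubled spoke. Each piece passes your test with $v=y$: $x$ takes its red half-edge on the virtual spoke, and $a_2$ and $y$ share the doubled spoke. Yet $G$ is not W-realizable. In a W-cone colouring of a piece the apex must meet the virtual edge $xy$, and $x$ is not universal (it misses $a_2,a_3$, resp.\ $b_2,b_3$). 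So the apex is $y$ in both pieces, every real edge at $x$ is a blue rim edge, and $x$ has no red half-edge in $G$. Your algorithm therefore accepts a NO-instance.

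This is exactly what the paper's compatibility step (\Cref{compatabilitylemma}) repairs. It roots the decomposition tree and, for each piece and each apex candidate (at most two once a virtual edge is present, \Cref{lem:apex-candidates}), records which vertices of the parent adhesion pair are supported by a red half-edge on a real edge from inside the subtree. A virtual edge is deleted only when \Cref{def_W-union}(ii) permits. The tables have constant size, so this costs only linear time on top of the matching tests.

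Two smaller points. First, the pieces must be cut along $2$-vertex cuts with parallel edges kept inside them. If, as in a literal Tutte/SPQR decomposition, parallel classes are split off into bonds, every R-node is simple and cannot satisfy your condition (3): its $|V(H)|$ vertices would need red half-edges on only $|V(H)|-1$ single edges at the apex. Then even the W-cone of \Cref{fig_cone}(a) would be rejected. The paper's $A(K)$, which demands parallel edges at the apex, tacitly relies on the same convention.

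Second, for non-bipartite $G$ the matching-covered test is not plain directed reachability. A directed cycle in the auxiliary digraph may correspond to a closed walk that is not an alternating cycle. You need a blossom-aware search (for instance, one Edmonds search from the mate of $u$ in $G-u$ for each vertex $u$), or the Carvalho--Cheriyan algorithm that the paper cites.
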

We also give a fast algorithm that can verify whether a given half-edge $2$-coloured graph $(G,c)$ is a W-state graph in $O(|V|\cdot |E|)$ time. 

Finally, we show the limitations of generalizing our approach to Dicke states, a generalization of W-states. We formally describe Dicke \textsc{FORALL-PMVC} in \Cref{dicke_section}. We prove that this problem is coNP-complete, thereby resolving an open problem posed by Vardi and Zhang~\cite{vardi2}.

\begin{restatable}{theorem}{DickeForallPMVCCoNPComplete}
\label{thm:dicke-forall-pmvc-conp-complete}
Dicke \textsc{FORALL-PMVC} is coNP-complete.
\end{restatable}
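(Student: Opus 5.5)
We prove membership in coNP and coNP-hardness separately; the hardness reduction is the substantial part.

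\textbf{Membership.} Dicke \textsc{FORALL-PMVC} asks whether \emph{every} perfect matching of the given half-edge coloured graph induces a vertex colouring with exactly $k$ red vertices. Its complement is therefore witnessed by a single perfect matching $M$ whose inherited colouring has a number of red vertices different from $k$. Given $M$, we can check in polynomial time that $M$ is a perfect matching and count the red half-edges of $M$. So the complement is in NP, and the problem is in coNP. If the formal definition in \Cref{dicke_section} also contains the existential (``every Dicke basis state appears'') condition, we restrict to the universal condition, since the statement concerns only that one.

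\textbf{Hardness.} We reduce from a coNP-complete problem. The most natural choice is the complement of Exact Matching with a \emph{fixed} target, or the complement of a restricted EM variant that is NP-hard. Plain EM is in RP, so it is unlikely to be NP-hard; we need a variant that is. Our first attempt is the following. Deciding whether a graph has a perfect matching with \emph{at least} or \emph{not exactly} $k$ red edges is easy. The universal question ``do all perfect matchings have exactly $k$ red edges'' is also easy for edge colourings, since it reduces to checking that the weight function is constant on the perfect matching polytope, which is an LP. So the hardness must exploit the half-edge/vertex structure: the count is over red \emph{vertices}, and inputs may use both colours at the same vertex across different edges.

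Our plan is therefore to reduce from 3-SAT, mapping a formula $\varphi$ to a graph $G_\varphi$ with an integer $k$ such that
\[
\varphi \text{ is satisfiable} \iff \text{some perfect matching of } G_\varphi \text{ has a red count} \neq k.
\]
The construction uses the following gadgets:
\begin{enumerate}
\item For each variable, an even cycle whose two perfect matchings encode true and false.
\item For each clause, a gadget attached to the three literal cycles. Its local perfect matchings can pick up an extra red vertex only if some literal is true, and they contribute a fixed baseline red count otherwise.
\end{enumerate}
The key design constraint is that red counts must be \emph{exactly} $k$ for all matchings that correspond to a clause failure. We try to make the red count equal $k$ plus an indicator that \emph{every} clause gadget used its ``satisfied'' option. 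To avoid a sum of indicators, we make each clause gadget choose between exactly two modes, ``satisfied'' with red excess $+1$ and ``unsatisfied'' with excess $0$. We add one global compensating gadget with a $-m$ option, where $m$ is the number of clauses, that is forced by parity or cut structure to be taken unless some clause is unsatisfied. This global thresholding is the hard part, because linear red counts combine additively and naturally behave like EM. A backup plan is to reduce from the complement of 3-colourability or to use an existing NP-hardness result for the existential PM-VC variant from Vardi and Zhang, adapting their gadgets so that the existential ``bad'' pattern becomes a counterexample to the universal condition.

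\textbf{Main obstacle.} The main obstacle is this non-additivity. We must show that half-edge colourings can express a condition that edge colourings cannot: both colours occur at one vertex, so the vertex colour depends on \emph{which} incident edge is matched, not merely on whether a red edge is used. We would verify carefully that in the gadgets every perfect matching falls into the intended cases. Finally, we would confirm that the reduction is polynomial.
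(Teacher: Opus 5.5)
There is a genuine gap, and it starts with the problem statement: you are attacking the wrong condition. In this paper, Dicke \textsc{FORALL-PMVC} asks whether, for \emph{every} vertex colouring $\varphi\in\mathcal{C}_k(G)$ (exactly $k$ red vertices), the graph $G_\varphi$ has a perfect matching. Equivalently, it asks whether every $k$-set $S$ is realized as $R(M)$ by some perfect matching $M$.

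The property you work with, ``every perfect matching has $|R(M)|=k$'', is the paper's \textsc{EXISTS-PMVC} condition (the Vardi--Zhang names are easy to swap). It is solvable in polynomial time. Write $|R(M)|=\sum_{e\in M} r(e)$, where $r(e)\in\{0,1,2\}$ is the number of red half-edges of $e$; then one only has to compare a minimum- and a maximum-weight perfect matching with $k$. Your LP remark for edge colourings applies verbatim with weights $0,1,2$. The ``non-additivity'' you hope half-edge colourings provide does not exist: the red-vertex count is still a linear function of the matching, because each matched edge contributes exactly its number of red half-edges.

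Consequently, no reduction of the kind you sketch can exist unless $\mathsf{P}=\mathsf{NP}$. In any case, your global thresholding gadget is never actually constructed. Your membership argument is also for that other problem. The correct NO-certificate is a colouring $\varphi\in\mathcal{C}_k(G)$ for which $G_\varphi$ has no perfect matching, checked by one matching computation.

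The missing idea is that the true problem has quantifier pattern $\forall\varphi\,\exists M$ with a polynomial inner test. So hardness must come from the choice of the $k$ red vertices, which naturally encodes a covering problem. The paper reduces from \textsc{Vertex Cover} with the following construction:
\begin{enumerate}
\item For each $v\in V_H$, create $v^0,v^1$ joined by four parallel edges of types $(0,0),(0,1),(1,0),(1,1)$, so $v^0v^1$ survives in every $G_\varphi$.
\item Add two non-adjacent vertices $a,b$, each joined to every $v^1$ by four such parallel edges.
\item Add one $(0,0)$ edge $u^0v^0$ for each $uv\in E_H$.
\end{enumerate}
Since $a$ and $b$ must take two of the $v^1$'s, two $0$-copies must be matched to each other through a surviving $(0,0)$ edge. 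Hence $G_\varphi$ has a perfect matching if and only if $\{v:\varphi(v^0)=1\}$ is not a vertex cover of $H$. Placing any surplus red vertices on the $v^1$'s and on $a,b$ then shows that $H$ has a vertex cover of size at most $k$ if and only if some $\varphi\in\mathcal{C}_k(G)$ is a counterexample.
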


\subsection{Outline of the proof}
At a high level, the proof of \Cref{mainstructthm} proceeds in four steps, using some classical matching theory results along with the specific structure forced by the W-state conditions.

\smallskip
\noindent\textbf{Monochromatic edges form two factor--critical pieces.}
Given a W-state graph $(G,c)$, let $G_m$ be the spanning subgraph of $G$ consisting of all monochromatic (necessarily \blue{blue}) edges. Using the vertex condition (\Cref{eq_VC}), we show that every vertex in $G_m$ is inessential (\Cref{inessential_obs}). Using this along with Gallai-Edmonds decomposition, we show that $G_m$ consists of two connected components, each of which is factor-critical (in the case of W-cones, one of the components will be a trivial graph). Moreover, $E_b(G)=\delta(X,X')$ where $X$ and $X'$ are the vertex sets of the two connected components of $G_m$.

\smallskip
\noindent\textbf{3--connectivity and W--cones.}
Using the matching condition in \Cref{def_Wstate}, we get that $\delta(X,X')$ is a \emph{tight cut} (\Cref{def_tight}). If both $|X|$ and $|X'|$ are at least $3$, this cut is non-trivial, and one can show that $G$ is bicritical (\Cref{lem_G-bicrit}). By the Edmonds--Lovász--Pulleyblank characterization of bricks (\Cref{thm_brickchar}), a $3$-connected bicritical matching-covered graph is a brick and hence has no nontrivial tight cuts. This forces the following dichotomy:
\begin{itemize}
  \item either $G$ is not $3$--connected and has a $2$--vertex cut $\{x,x'\}$ with $x\in X$, $x' \in X'$ (\Cref{lem_2-cut});
  \item  or $G$ is $3$--connected, in which case one of $X,X'$ must have size $1$.
\end{itemize}
In the latter case, \Cref{lem_3conn-wcone} shows that every $3$--connected W-state graph is a W--cone.

\smallskip
\noindent\textbf{Decomposing along $2$--vertex cuts.}
For a general W-state graph that is not $3$--connected we work along a minimal $2$--vertex cut $\{x,x'\}$ with $x\in X$, $x'\in X'$. Removing $\{x,x'\}$ splits $G$ into components $C_1,\dots,C_t$. For each $C_i$ we form the induced subgraph on $V(C_i)\cup\{x,x'\}$ and ensure that $x$ and $x'$ are joined inside it by a bichromatic edge, adding one or, if needed to restore the vertex condition, two virtual parallel bichromatic edges between $x$ and $x'$. These graphs $(H_i,c_i)$ are the W-blocks of $(G,c)$ with respect to the cut (\Cref{def_W-block-decomp}). We show that each W-block is again a W-state graph (\Cref{thm_W-blocks-are-W}). Thus, by recursively decomposing along $2$-vertex cuts, we arrive at a collection of smaller, $3$-connected W-state graphs; by \Cref{lem_3conn-wcone}, each of these is a W-cone.

\smallskip 
\noindent\textbf{Glueing along bichromatic edges.} To prove the converse direction and obtain the full characterization, we introduce a natural glueing operation: the W-union of two W-state graphs along a pair of bichromatic edges (\Cref{def_W-union}). This operation identifies two bichromatic edges (possibly creating parallel edges between their endpoints) and preserves the W-state property (\Cref{lem_paste-over-edge}). Since every block in the above decomposition contains a bichromatic edge between the two vertices of the adhesion pair, the standard decomposition of a graph into $3$-connected components with virtual edges can be realized as a tree of such components glued by W-unions. Hence, any matching-covered graph whose $3$-connected components (with these virtual bichromatic edges) are W-cones, and in which the vertex condition is satisfied, can be constructed by iterated W-unions of W-cones and is therefore a W-state graph. This completes the proof of \Cref{mainstructthm}. Using \Cref{mainstructthm}, we prove \Cref{simplegraphcor} and \Cref{thm_algorithm}.



\section{Merging and Decomposing W-state graphs}
\subsection{Matching theory preliminaries}

\begin{definition}[Tight cut]\label{def_tight}
In a matching-covered graph \(G\), a non-empty proper subset
\(X\subsetneq V(G)\) is called \emph{tight} if
\(\lvert M\cap\delta(X)\rvert = 1\) for all $M \in \mathcal{PM}(G)$.
\end{definition}

If $|X|=1$, then $\delta(X)$ is called the trivial tight cut. Recursively contracting both shores of every non-trivial tight cut
produces the \emph{tight-cut decomposition} of a matching-covered
graph into \emph{bricks} (non-bipartite) and \emph{braces}
(bipartite). In their seminal paper, Lovász and Plummer proved that the tight-cut decomposition is unique up to the ordering of its components~\cite{LovaszPlummer}.

\begin{definition}[Brick]\label{def_brick}
A \emph{brick} is a non-bipartite matching-covered graph that has no
non-trivial tight cuts.
\end{definition}

\begin{definition}[Bicritical graph]\label{def_bicrit}
A graph \(G\) with \(|V(G)|\ge4\) is \emph{bicritical} if
\(G-\{u,v\}\) has a perfect matching for every distinct
\(u,v\in V(G)\).
\end{definition}

Using LP duality, Edmonds, Lovász and Pulleyblank \cite{DBLP:journals/combinatorica/EdmondsLP82} proved that bricks are precisely the 3-connected bicritical graphs.
\begin{theorem}[{\cite{DBLP:journals/combinatorica/EdmondsLP82, DBLP:journals/jct/Lovasz87, DBLP:journals/combinatorica/Szigeti02}}]\label{thm_brickchar}
A graph is a brick if and only if it is $3$-connected and bicritical.
\end{theorem}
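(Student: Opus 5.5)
The theorem has two directions. For ``brick $\Rightarrow$ $3$-connected and bicritical'' I would argue by contraposition, turning each failure into a non-trivial tight cut. The other direction is the real content of the Edmonds--Lov\'asz--Pulleyblank theorem.

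\textbf{Brick $\Rightarrow$ bicritical.} Let $G$ be a brick and suppose some $G-\{u,v\}$ has no perfect matching. By Tutte's theorem there is $S'$ with $o(G-\{u,v\}-S')>|S'|$. By parity this forces $o\ge |S'|+2$. Hence $B:=S'\cup\{u,v\}$ satisfies $o(G-B)\ge|B|$. Since $G$ has a perfect matching, equality holds, so $B$ is a barrier with $|B|\ge 2$. In a matching-covered graph a barrier is independent and $G-B$ has no even components: an edge inside $B$, or a vertex of an even component, could not lie in any perfect matching. So every perfect matching matches $B$ bijectively to the odd components of $G-B$, each of which then sends exactly one edge across its boundary. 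Thus $\delta(K)$ is tight for every odd component $K$. If some $K$ has $|K|\ge3$, this cut is non-trivial because $|V\setminus K|\ge |B|+1\ge 3$, a contradiction. Otherwise every component is a singleton and $B$ is independent, so $G$ is bipartite, contradicting the definition of a brick. Finally, $|V(G)|\ge4$, since a non-bipartite matching-covered graph has at least four vertices.

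\textbf{Brick $\Rightarrow$ $3$-connected.} A matching-covered graph is $2$-connected (\Cref{cut_ver_mcg}), so suppose $\{a,b\}$ is a $2$-cut. Each component of $G-\{a,b\}$ has even order: otherwise a parity count shows $a$ and $b$ must both be matched into the odd component(s), and then some edge cannot lie in any perfect matching. Write $G-\{a,b\}=C\cup D$ with $C$ one component. Set $X:=V(C)\cup\{a\}$, which is odd. A perfect matching crosses $\delta(X)$ an odd number of times, using only edges $C$--$b$ and $a$--$D$; there are at most two such edges in a matching, so exactly one. Hence $X$ is tight. It is non-trivial as soon as $|C|\ge2$ and $|D|\ge2$, which holds because components are even and non-empty.

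\textbf{$3$-connected and bicritical $\Rightarrow$ brick.} Bicriticality gives matching-coveredness: for an edge $uv$, take a perfect matching of $G-\{u,v\}$ and add $uv$. It also gives non-bipartiteness: deleting two vertices from the same side of a bipartite graph with $|V|\ge 4$ leaves unequal sides. It remains to exclude a non-trivial tight cut. The plan is to prove the ELP lemma: \emph{if a matching-covered graph has a non-trivial tight cut, then it has a non-trivial barrier (size $\ge2$) or a $2$-separation.}
\begin{itemize}
\item A non-trivial barrier $B$ contradicts bicriticality: deleting two vertices of $B$ leaves $|B|$ odd components but only $|B|-2$ barrier vertices to match them.
\item A $2$-separation contradicts $3$-connectivity.
\end{itemize}
For the lemma I would follow Szigeti's route. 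Take a non-trivial tight cut $\delta(X)$ and consider the two contractions $G/X$ and $G/\overline{X}$, which are matching-covered. For $u,v$ on the same shore, analyse $G-\{u,v\}$ via Gallai--Edmonds. Tightness of $\delta(X)$ forces the deficiency to concentrate on one shore, and the Tutte set either yields a barrier of size $\ge2$ or collapses to a $2$-vertex separator. I expect this step to be the main obstacle. If it resists a direct argument, I would fall back on the original LP-duality proof: a non-trivial tight cut is an optimal dual cut in the perfect matching polytope, and complementary slackness produces the barrier or $2$-separation. Alternatively, since the paper only invokes the result, one can cite \cite{DBLP:journals/combinatorica/EdmondsLP82,DBLP:journals/combinatorica/Szigeti02}.
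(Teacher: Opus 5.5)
The paper gives no proof of this theorem. It imports it from Edmonds--Lov\'asz--Pulleyblank (with the Lov\'asz and Szigeti references) as a black box, so the question is whether your proposal stands on its own.

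\textbf{The easy direction.} Your two implications out of ``brick'' are essentially right but need two repairs.

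First, your reason why every component of $G-\{a,b\}$ is even is false. A matching-covered graph can have a $2$-cut leaving exactly two odd components while every edge still lies in a perfect matching. Take a triangle $c_1c_2c_3$ and vertices $a,b,d$ with edges $ac_1,ac_3,bc_2,ad,bd$. The perfect matchings $\{c_1c_2,ac_3,bd\}$, $\{c_2c_3,ac_1,bd\}$, $\{c_1c_3,bc_2,ad\}$ cover every edge, yet $\{a,b\}$ separates the triangle from $d$. The fix is one line from your previous paragraph: a brick is bicritical, so $G-\{a,b\}$ has a perfect matching and all its components are even. Your description of $\delta(X)$ also omits the $ab$-edges, which is harmless.

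Second, Definition~\ref{def_MC} does not require connectivity. Your ``no even components of $G-B$'' step and your appeal to Observation~\ref{cut_ver_mcg} both need it. Add that a disconnected non-bipartite matching-covered graph is never a brick: for a non-bipartite component $C$ and $w\in V(C)$, the cut $\delta(V(C)\setminus\{w\})$ is tight with both shores of size at least $3$.

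\textbf{The converse.} This is the genuine gap, and it is the entire content of the theorem. Your reduction is correct: once non-trivial barriers are excluded, what remains is exactly that \emph{a bicritical graph with a non-trivial tight cut has a $2$-separation}. The mechanism you sketch cannot prove this. If $G$ is bicritical, every $G-\{u,v\}$ has a perfect matching. Its Gallai--Edmonds decomposition then has no deficiency and no Tutte set, so there is nothing to ``concentrate on one shore'' or to ``collapse to a separator''.

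Reading your sketch as applying to the contractions does not rescue it. Take the graph of Figure~\ref{ex_wout_edge}, with shores $\{11',3,4\}$ and $\{22',3',4'\}$:
\begin{itemize}
\item $G$ is bicritical by Observation~\ref{lem_G-bicrit}, and the cut between the shores is non-trivial and tight;
\item both contractions are copies of $K_4$ with parallel edges, hence bicritical, so every two-vertex-deleted subgraph of either contraction also has a perfect matching;
\item yet $\{11',22'\}$ is a $2$-separation of $G$.
\end{itemize}
So the separator has to be produced by a genuinely different argument, either the LP-duality proof of Edmonds, Lov\'asz and Pulleyblank or Szigeti's combinatorial one, and your proposal contains neither. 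Falling back on the citation is legitimate and is exactly what the paper does. But then what you have actually proved is only the easy direction.
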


\begin{definition}
A vertex is defined to be \emph{essential} if it is covered by all maximum matchings. Otherwise, it is defined to be \emph{inessential}.   
\end{definition}

The following fact follows from the Gallai–Edmonds decomposition.

\begin{fact}\label{egdecom}
If every vertex of a graph is inessential, then each connected component is factor-critical. 
\end{fact}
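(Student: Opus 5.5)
The plan is to read this off the Gallai–Edmonds structure theorem, after first reducing to one connected component at a time. Recall that the theorem partitions $V(G)$ into three sets:
\begin{enumerate}
\item $D(G)$, the set of inessential vertices;
\item $A(G)$, the set of vertices outside $D(G)$ that have a neighbour in $D(G)$;
\item $C(G)$, the remaining vertices.
\end{enumerate}
Among its conclusions is that every connected component of $G[D(G)]$ is factor-critical. Under the hypothesis, $D(G)=V(G)$, so $A(G)=C(G)=\emptyset$ and $G[D(G)]=G$. Hence every connected component of $G$ is factor-critical, which is exactly the claim.

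For a self-contained argument that does not cite the full theorem, I would work component by component. A maximum matching of $G$ is precisely a union of maximum matchings of its components. So a vertex is inessential in $G$ if and only if it is inessential in its own component, and it suffices to prove \emph{Gallai's lemma}: a connected graph $H$ in which every vertex is inessential is factor-critical. Since every vertex is missed by some maximum matching, it is enough to show that every maximum matching of $H$ misses exactly one vertex. Then $\nu(H)=(|V(H)|-1)/2$, and for each $u$ a maximum matching missing $u$ is a perfect matching of $H-u$.

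For the lemma, suppose for contradiction that some maximum matching misses at least two vertices.
\begin{enumerate}
\item Among all pairs $(M,\{u,v\})$, where $M$ is a maximum matching missing distinct vertices $u$ and $v$, choose one minimizing $\mathrm{dist}_H(u,v)$. This distance is finite because $H$ is connected.
\item The distance is at least $2$, since an edge $uv$ could be added to $M$, contradicting maximality. So a shortest $u$–$v$ path has an interior vertex $w$.
\item By minimality, $M$ covers $w$; otherwise the pair $\{u,w\}$ would be closer. By hypothesis some maximum matching $N$ misses $w$; choose such an $N$ maximizing $|M\cap N|$.
\item Consider $M\triangle N$. Minimality forces $N$ to cover both $u$ and $v$; otherwise $\{w,u\}$ or $\{w,v\}$ would be a closer missed pair for $N$. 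The component of $M\triangle N$ containing $w$ is an alternating path starting at $w$ with an $M$-edge. Since $|M|=|N|$, this path has even length. It can end at most at one of $u$ and $v$, say it avoids $u$.
\item Swapping $M$ and $N$ along that path gives a maximum matching $N'$ that misses $u$ but covers $w$, and has larger intersection with $M$.
\item Continuing with the component of $u$ in $M\triangle N$, one obtains a maximum matching missing $w$ and either $u$ or $v$, which contradicts the minimal distance, or one missing $w$ with more edges in common with $M$, which contradicts the choice of $N$.
\end{enumerate}

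The only delicate point is the case analysis on the alternating components of $M\triangle N$ in the last three steps. Since the paper presents this as a known fact, I would simply invoke the Gallai–Edmonds theorem as in the first paragraph, or cite Gallai's lemma (e.g., from Lov\'asz--Plummer), rather than reprove it.
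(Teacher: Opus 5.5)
Your main argument is the same as the paper's. The paper gives no proof beyond saying the fact follows from the Gallai--Edmonds decomposition, and with $D(G)=V(G)$ the conclusion that every component of $G[D(G)]$ is factor-critical is exactly the statement, so this part is correct and complete.

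The optional self-contained sketch, however, fails at step 5 as written. Let $P$ be the component of $M\triangle N$ containing $w$, and let $y\neq u$ be its far endpoint.
\begin{itemize}
\item Swapping inside $N$ gives $N\triangle P$, which still covers $u$, because $N$ covers $u$ and $u\notin V(P)$.
\item Swapping inside $M$ gives $M\triangle P$, which misses $w$.
\end{itemize}
So neither swap yields a maximum matching that misses $u$, covers $w$, and shares more edges with $M$. Moreover, a matching that covers $w$ could not contradict your choice of $N$ in any case. Step 6 is too imprecise to verify as stated.

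The repair is to swap inside $M$, and it shortens the argument:
\begin{itemize}
\item $P$ has even length because $N$ is maximum, which rules out an $N$-augmenting path; the bare count $|M|=|N|$ does not give this.
\item Hence $M\triangle P$ is a maximum matching that misses $w$ and every $M$-exposed vertex other than $y$, in particular $u$.
\item Since $\mathrm{dist}_H(u,w)<\mathrm{dist}_H(u,v)$, this contradicts the minimal choice in step 1 directly.
\end{itemize}
With this fix, choosing $N$ to maximize $|M\cap N|$, the claim that $N$ covers $u$ and $v$, and step 6 all become unnecessary. Since you rely on the Gallai--Edmonds citation anyway, this does not affect your proof of the fact.
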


\begin{restatable}{observation}{MatchingCoveredNoCutVertex}
\label{cut_ver_mcg}
A connected matching-covered graph $G$ has no cut vertex. In particular, $G$
is $2$-connected.
\end{restatable}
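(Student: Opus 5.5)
We argue by contradiction using the parity structure of perfect matchings. Let $G$ be a connected matching-covered graph, and suppose $v$ is a cut vertex of $G$. Then $G-v$ has at least two components, say $C_1,\dots,C_k$ with $k\ge 2$. Since $G$ has a perfect matching, $|V(G)|$ is even, so $|V(G-v)|$ is odd.

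The first step is a parity count. Fix any perfect matching $M$ of $G$. Because $v$ is a cut vertex, every edge of $G$ not incident with $v$ lies inside a single component $C_i$. The vertex $v$ is matched by $M$ to a vertex in exactly one component, say $C_j$. For every $i\neq j$, the set $V(C_i)$ is perfectly matched by edges of $M$ lying inside $C_i$, so $|V(C_i)|$ is even. For $C_j$, removing the partner of $v$ leaves a set perfectly matched internally, so $|V(C_j)|$ is odd. Hence exactly one component of $G-v$ has odd order, and this component is determined by the graph alone, independently of $M$. Call it $C_{\mathrm{odd}}$. Consequently, every perfect matching of $G$ matches $v$ into $C_{\mathrm{odd}}$.

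The second step uses connectivity to find an edge that no perfect matching can contain. Since $G$ is connected and $k\ge 2$, there is a component $C_i\neq C_{\mathrm{odd}}$, and since $G$ is connected, $v$ has at least one neighbour $u\in V(C_i)$. The edge $vu$ cannot lie in any perfect matching, because every perfect matching matches $v$ into $C_{\mathrm{odd}}$. This contradicts the assumption that $G$ is matching-covered. Therefore $G$ has no cut vertex.

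For the ``in particular'' claim, note that a connected graph without a cut vertex and with at least three vertices is $2$-connected. A matching-covered connected graph on two vertices is a single edge, or parallel edges between the same two vertices. This degenerate case is either excluded by convention or regarded as $2$-connected for multigraphs. The only point needing care is the parity argument, namely that the odd component is independent of $M$; once that is established, the rest is immediate.
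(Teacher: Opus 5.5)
Your proof is correct and is essentially the paper's parity argument. The paper takes perfect matchings through edges from $v$ into two different components of $G-v$ and gets conflicting parities, while you use one matching to identify the unique odd component and then note that an edge from $v$ into any other component lies in no perfect matching; the two are rearrangements of the same idea.
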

\begin{proof}
Suppose, for contradiction, that $v$ is a cut vertex in $G$. Let $C_1,\dots,C_k$ ($k \ge 2$) be the components of
$G - \{v\}$. For each $i$, choose a neighbour $x_i \in V(C_i)$ of $v$; since
$G$ is matching-covered, the edge $vx_i$ lies in some perfect matching $M_i$.
In $M_i$, all vertices of $C_j$ ($j \ne i$) are matched within $C_j$, so
$|V(C_j)|$ is even, while $V(C_i) \setminus \{x_i\}$ is perfectly matched
inside $C_i$, so $|V(C_i)|$ is odd. Choosing two different indices
$i \ne j$ yields a parity contradiction. Thus, $G$ has no cut vertex and is
$2$-connected.
\end{proof}

\subsection{Decomposing into smaller W-state graphs}
Throughout the rest of the paper, we fix a W-state graph \((G,c)\) and write \(n=|V(G)|\). Let \(G_m\coloneqq (V,E_m(G))\). Note that every edge of \(G_m\) is \blue{blue}. Using \cref{egdecom}, we prove below that $G_m$ has exactly two components, both of which are factor-critical (and hence odd).

\begin{restatable}{observation}{obsmm}
\label{obs_mm-size}
The cardinality of a maximum matching in \(G_m\) is \(n/2-1\).
\end{restatable}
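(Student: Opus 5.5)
We show the two inequalities $\nu(G_m)\le n/2-1$ and $\nu(G_m)\ge n/2-1$ separately, where $\nu(G_m)$ denotes the maximum matching size of $G_m$. Throughout, $n$ is even because $G$ is matching-covered and hence has a perfect matching.

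\emph{Upper bound.} Suppose $G_m$ had a matching of size $n/2$. Such a matching is a perfect matching of $G$ all of whose edges are monochromatic. It would therefore contain no bichromatic edge, contradicting the matching condition \eqref{eq_MC}. Hence $\nu(G_m)\le n/2-1$.

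\emph{Lower bound.} Since $G$ is matching-covered and nonempty, $\mathcal{PM}(G)\neq\emptyset$; pick any $M\in\mathcal{PM}(G)$. By \eqref{eq_MC}, $M$ contains exactly one bichromatic edge $e$. Then $M\setminus\{e\}$ consists of $n/2-1$ pairwise disjoint edges, all lying in $E_m(G)=E(G_m)$. This is a matching of $G_m$ of size $n/2-1$.

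Combining the two bounds gives $\nu(G_m)=n/2-1$. There is no real obstacle here. The only point needing care is noting that a perfect matching exists at all, which follows from $G$ being matching-covered and having at least one edge. The vertex condition is not needed for this observation. By the remark on colourings, every edge of $G_m$ is blue, but only the monochromatic/bichromatic distinction is used.
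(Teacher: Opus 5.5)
Your proof is correct and is essentially the paper's argument: a matching of size $n/2$ in $G_m$ would be a perfect matching of $G$ with no bichromatic edge, and deleting the unique bichromatic edge from any perfect matching of $G$ leaves a matching of size $n/2-1$ in $G_m$. One small correction: the fact that $G$ has at least one edge, which you need for a perfect matching to exist, is itself supplied by the vertex condition. So that condition is used implicitly, not entirely unneeded.
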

\begin{proof}
If \(G_m\) had a perfect matching, \(G\) would have a perfect matching containing no bichromatic edge, contradicting
the definition of $W$-state graphs. Conversely, any perfect matching of \(G\) becomes a matching of size \(n/2-1\) in \(G_m\) once its unique bichromatic edge is removed. Hence, the maximum matching in $G_m$ has size \(n/2-1\).
\end{proof}

\begin{restatable}{observation}{EveryVertexInessential}
\label{inessential_obs}
Every vertex in $G_m$ is inessential.     
\end{restatable}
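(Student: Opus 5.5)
To prove that every vertex $v$ of $G_m$ is inessential, we exhibit, for each $v$, a maximum matching of $G_m$ that leaves $v$ uncovered. By \Cref{obs_mm-size}, the maximum matchings of $G_m$ are exactly the matchings of size $n/2-1$ using only monochromatic edges. So it suffices to find a perfect matching of $G$ whose unique bichromatic edge is incident with $v$, and then delete that edge.

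Fix $v\in V(G)$. By the vertex condition \eqref{eq_VC}, some edge $e=\{v,u\}\in\delta(\{v\})$ has $c(e,v)=1$. Since there are no monochromatic red edges, $e$ cannot be monochromatic, for then it would be red. Hence $e$ is bichromatic, i.e.\ $e\in E_b(G)$.

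Because $G$ is matching-covered, $e$ lies in some perfect matching $M\in\mathcal{PM}(G)$. By the matching condition \eqref{eq_MC}, $e$ is the only bichromatic edge of $M$. Therefore $M\setminus\{e\}\subseteq E_m(G)$ is a matching of $G_m$ of size $n/2-1$. By \Cref{obs_mm-size} it is maximum, and it misses $v$. Thus $v$ is inessential.

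No step here is a genuine obstacle. The only point needing care is checking that the red half-edge at $v$ sits on a bichromatic edge, which is exactly where the standing convention excluding monochromatic red edges is used.
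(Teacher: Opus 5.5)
Your proof is correct and follows the same route as the paper's: take the edge carrying $v$'s red half-edge, extend it to a perfect matching $M$, and observe that $M\setminus\{e\}$ is a maximum matching of $G_m$ (by \Cref{obs_mm-size}) exposing $v$. You also spell out two points the paper leaves implicit: the convention excluding monochromatic red edges makes $e$ bichromatic, and the matching condition makes $M\setminus\{e\}$ lie entirely in $E_m(G)$.
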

\begin{proof}
Pick an arbitrary vertex $v$. Since $G$ is a W-state graph, there must be a bichromatic edge $e \in E(G)$ incident on $v$, and there exists a perfect matching $M$ containing $e$. From \Cref{obs_mm-size}, observe that $M\setminus \{e\}$ is a maximum matching of $G_m$ which exposes $v$, implying that $v$ is inessential.
\end{proof}

\begin{restatable}{lemma}{MonochromaticGraphTwoFactorCriticalComponents}
\label{thm_two-fc}
The graph \(G_m\) is the disjoint union of exactly two factor-critical components.
\end{restatable}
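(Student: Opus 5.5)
By \Cref{inessential_obs}, every vertex of $G_m$ is inessential, so \Cref{egdecom} tells us that every connected component of $G_m$ is factor-critical. It remains to count these components, and the plan is to do this with the Gallai--Edmonds formula for the matching number.

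When all vertices are inessential, the Gallai--Edmonds set $A$ is empty and $D(G_m)=V$. The deficiency formula then says that a maximum matching of $G_m$ leaves exactly one vertex of each (odd) component exposed. Alternatively, we can argue directly: each factor-critical component $C$ has odd order and a near-perfect matching, so a maximum matching of $G_m$ has size $\sum_C (|C|-1)/2 = (n-k)/2$, where $k$ is the number of components. \Cref{obs_mm-size} gives the maximum matching size as $n/2-1$, so $(n-k)/2 = n/2-1$, and hence $k=2$.

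The only point to be careful about is that the formula $(n-k)/2$ is both an upper bound (a matching inside a component of odd order misses at least one vertex, and there are no edges between components of $G_m$) and attained (via the near-perfect matchings guaranteed by factor-criticality). I do not expect any real obstacle here; the lemma essentially combines \Cref{egdecom} with \Cref{obs_mm-size}.

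One might add, for later use, that both components are odd. It would also be natural to note, though this is not required by the statement, that every bichromatic edge joins the two components. The reason is that a bichromatic edge $e$ with both ends in the same component $X$ would lie in a perfect matching $M$, and $M\setminus\{e\}$ would then have to perfectly match the odd set $X'$ inside $G_m$, which is impossible.
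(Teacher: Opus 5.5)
Your proposal is correct and follows the paper's proof: it uses \Cref{inessential_obs} and \Cref{egdecom} to make every component factor-critical, then sets the matching number $(n-k)/2$ equal to $n/2-1$ from \Cref{obs_mm-size} to force $k=2$. Your closing remark on bichromatic edges is also correct. It is essentially the paper's next result, \Cref{facdeco}, which the paper argues via a second edge crossing between the components, whereas you argue that the odd set $X'$ cannot be perfectly matched inside $G_m$.
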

\begin{proof}
From \Cref{inessential_obs}, every vertex of \(G_m\) is inessential; hence, by
\Cref{egdecom}, every component of \(G_m\) is factor-critical. If there are $c$ such factor-critical components $C_i$, then a maximum matching in $G_m$ leaves exactly one exposed vertex in each component, so the size of the maximum matching is $ \sum_i \frac{|C_i|-1}{2}=\frac{n-c}{2}.$ From \cref{obs_mm-size}, this should be $n/2-1$. Therefore, $c=2$ and hence \(G_m\) consists of two (necessarily odd) factor-critical components whose union is $G_m$.
\end{proof}

\begin{restatable}{observation}{facdeco}
\label{facdeco}
$E_b(G)$ is the cut between the two factor critical components of $G_m$.
\end{restatable}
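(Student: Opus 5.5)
Let $X$ and $X'$ be the vertex sets of the two factor-critical components of $G_m$ given by \Cref{thm_two-fc}. The statement asserts $E_b(G)=\delta(X,X')$, and I will prove the two inclusions separately.

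For the inclusion $\delta(X,X')\subseteq E_b(G)$: every edge of $G$ is either monochromatic or bichromatic. A monochromatic edge lies in $G_m$ by definition, so both its endpoints lie in the same connected component of $G_m$. Hence no monochromatic edge can join $X$ to $X'$, and every edge of $\delta(X,X')$ is bichromatic.

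For the reverse inclusion $E_b(G)\subseteq\delta(X,X')$, suppose for contradiction that some bichromatic edge $e=uw$ has both endpoints on the same side, say in $X$. Since $G$ is matching-covered, $e$ lies in some perfect matching $M$ of $G$. By the matching condition, $e$ is the only bichromatic edge of $M$, so every other edge of $M$ is monochromatic and therefore lies inside $X$ or inside $X'$.

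This gives a parity contradiction. The vertices of $X'$ are covered by $M$ using only edges with both endpoints in $X'$, so $|X'|$ must be even. But $X'$ is the vertex set of a factor-critical component, which has odd order (a factor-critical graph minus one vertex has a perfect matching). The case $u,w\in X'$ is symmetric, with $X$ in place of $X'$.

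The only step needing care is recalling that factor-critical components have odd order, which is already noted in \Cref{thm_two-fc}.
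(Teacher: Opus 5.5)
Your proof is correct and follows essentially the same route as the paper: take a perfect matching $M$ through a bichromatic edge lying inside one shore and derive a parity contradiction using the matching condition. The paper counts on $|X|$ to force a second bichromatic edge in $M$, while you count on $|X'|$ to force it to be even, which is the same argument. You also spell out the easy inclusion $\delta(X,X')\subseteq E_b(G)$, which the paper leaves implicit.
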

\begin{proof} 
By \Cref{thm_two-fc}, we may partition the vertex set \(V(G)=X\sqcup X'\) so that \(G[X]\) and \(G[X']\) are factor-critical and  $
  E_m(G)\;=\;E\bigl(G[X]\bigr)\sqcup E\bigl(G[X']\bigr)$. 
Towards a contradiction, say there is a bichromatic edge $e$ with both endpoints in $X$. Let $M$ be a perfect matching containing $e$. As $|X|$ is odd, there must exist a vertex in $X$ matching to a vertex in $X'$ in $M$. Such an edge must be bi-chromatic. This produces two bichromatic edges $M$, which is a contradiction. Therefore, $ 
  E_b(G)\;=\;\delta\bigl(X,X'\bigr)$.
\end{proof}

By \Cref{thm_two-fc} and \Cref{facdeco}, we may partition the vertex set
\(V(G)=X\sqcup X'\) so that \(G[X]\) and \(G[X']\) are factor-critical,  $
  E_m(G)\;=\;E\bigl(G[X]\bigr)\sqcup E\bigl(G[X']\bigr)$ and $ 
  E_b(G)\;=\;\delta\bigl(X,X'\bigr)$. 
  
We now prove that $G$ has a $2$-vertex cut or is a W-cone.

\begin{restatable}{observation}{SingletonShoreImpliesWCone}
\label{wcone_building}
If $|X|=1$ or $|X'|=1$, then $(G,c)$ is a W-cone. 
\end{restatable}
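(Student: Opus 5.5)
By symmetry I would assume $|X|=1$, say $X=\{v\}$, and show that $v$ is an apex vertex witnessing \Cref{conedef}. The ingredients are already in place: the partition $V(G)=X\sqcup X'$ with $E_m(G)=E(G[X])\sqcup E(G[X'])$ and $E_b(G)=\delta(X,X')$, from \Cref{thm_two-fc} and \Cref{facdeco}. Since $G[X]$ is a single vertex, it has no edges. Hence $E_b(G)=\delta(\{v\})$ and $E_m(G)=E(G[X'])=E(G[V(G)\setminus\{v\}])$, which is exactly the edge partition condition \Cref{eq_edge_partition}.

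The vertex condition \Cref{eq_vertex_condition} is literally the vertex condition \Cref{eq_VC} of the W-state graph. Matching-coveredness and the half-edge colouring are inherited unchanged.

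The one remaining requirement is that $v$ is universal, i.e.\ adjacent to every $u\in X'$. I would argue via the vertex condition. Every monochromatic edge is blue by the standing remark, so the red half-edge at $u$ guaranteed by \Cref{eq_VC} must lie on a bichromatic edge. Every bichromatic edge lies in $\delta(\{v\})$, so that edge joins $u$ to $v$. Hence $v$ is adjacent to every other vertex, and $(G,c)$ is a W-cone with apex $v$.

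I expect no real obstacle. The only subtle point is the universality step, which relies on the convention that there are no red monochromatic edges; I would cite that remark explicitly.
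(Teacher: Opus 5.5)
Your proposal is correct and follows essentially the same route as the paper: take $X=\{v\}$, read off the edge partition condition from \Cref{thm_two-fc} and \Cref{facdeco}, inherit the vertex condition, and get universality of $v$ because every vertex of $X'$ has an incident bichromatic edge, which must go to $v$. The paper leaves implicit why each vertex has an incident bichromatic edge; your explicit appeal to the vertex condition together with the convention that there are no red monochromatic edges fills in that step.
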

\begin{proof}
Without loss of generality, let $|X|=1$ and $X=\{x\}$.
Since every vertex in $X'$ must have an incident bichromatic edge whose other endpoint is $x$, $x$ is a universal vertex.
Therefore, the edge partition condition of \Cref{conedef} follows from \Cref{thm_two-fc}. The vertex condition holds as it is the same condition as for W-states.
\end{proof}

%


\begin{restatable}{observation}{NontrivialTightCutPreventsBrick}
\label{obs_tight-cut}
If $\min(|X|,|X'|)\geq 3$, then $G$ is not a brick. 
\end{restatable}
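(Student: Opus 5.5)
The plan is to show directly that $\delta(X)=\delta(X,X')$ is a non-trivial tight cut of $G$. By \Cref{def_brick}, this immediately rules out $G$ being a brick, whether or not $G$ is bipartite. (If $G$ happened to be bipartite, it would fail to be a brick for that reason alone, so no case split is needed.)

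\textbf{Tightness.} \Cref{facdeco} gives $E_b(G)=\delta(X,X')$. By the matching condition \eqref{eq_MC}, every perfect matching $M\in\mathcal{PM}(G)$ satisfies
\[
|M\cap\delta(X)| \;=\; |M\cap E_b(G)| \;=\; 1 .
\]
The set $X$ is non-empty and proper, since both $X$ and $X'$ are non-empty by \Cref{thm_two-fc}. Hence $X$ is tight in the sense of \Cref{def_tight}. The graph $G$ is matching-covered by definition, so the notion of tight cut applies.

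\textbf{Non-triviality.} The cut $\delta(X)$ has shores $X$ and $X'=V(G)\setminus X$. By hypothesis $\min(|X|,|X'|)\ge 3$, so neither shore is a singleton. The cut is therefore not trivial in the sense used before \Cref{def_brick}.

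\textbf{Conclusion.} A brick has no non-trivial tight cuts, so $G$ is not a brick.

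The only point needing care is the convention that a tight cut is trivial when either shore has size one, since $\delta(X)=\delta(X')$. Both shores having size at least $3$ handles this. There is no genuine obstacle. One could also argue through \Cref{thm_brickchar}, but that route is unnecessary.
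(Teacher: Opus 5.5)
Your proposal is correct and follows essentially the same route as the paper. Both arguments use $E_b(G)=\delta(X,X')$ together with the matching condition to show that $\delta(X,X')$ is a tight cut. Since both shores have size at least $3$, the cut is non-trivial, so by \Cref{def_brick} $G$ is not a brick.
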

\begin{proof}
Recall that $E_b(G)\;=\;\delta\bigl(X,X'\bigr)$. By the definition of a W-state graph, we know that $|M \cap E_b(G)|=1$ for all $M \in \mathcal{PM}(G)$. Therefore, $\delta\bigl(X,X'\bigr)$ is a tight cut. Moreover, this is non-trivial as $|X|,|X'|$ are at least $3$. Therefore, from \Cref{def_brick}, $G$ is not a brick.
\end{proof}

\begin{restatable}{observation}{LargeShoreImpliesBicritical}
\label{lem_G-bicrit}
If $\min(|X|,|X'|)\geq 3$, then \(G\) is bicritical.
\end{restatable}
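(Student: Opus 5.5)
We must show that for any distinct $u,v\in V(G)$, the graph $G-\{u,v\}$ has a perfect matching; note $n=|X|+|X'|\ge 6$, so the size requirement of \Cref{def_bicrit} holds. Since $G[X]$ and $G[X']$ are factor-critical and both $|X|,|X'|$ are odd, the plan is to build the matching from three pieces: one bichromatic edge $aa'$ with $a\in X$, $a'\in X'$, a perfect matching of $G[X]$ minus one vertex, and a perfect matching of $G[X']$ minus one vertex. We split into cases according to where $u$ and $v$ lie.

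\emph{Case 1: $u\in X$, $v\in X'$.} By factor-criticality, $G[X]-u$ and $G[X']-v$ have perfect matchings, and their union is a perfect matching of $G-\{u,v\}$.

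\emph{Case 2: $u,v\in X$ (the case $u,v\in X'$ is symmetric).} Here $X\setminus\{u,v\}$ has odd size, so we need one bichromatic edge $aa'$ with $a\in X\setminus\{u,v\}$, together with a perfect matching of $G[X]-\{u,v,a\}$ and one of $G[X']-a'$. The latter always exists by factor-criticality. So the key claim is:
\begin{quote}
there exist $a\in X\setminus\{u,v\}$ and $a'\in X'$ with $aa'\in E_b(G)$ such that $G[X]-\{u,v,a\}$ has a perfect matching.
\end{quote}

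This is the main obstacle, and we would prove it using matching-coveredness of $G$ plus the tight-cut structure. Take any edge $e$ of $G$ incident with $u$, chosen as follows.
\begin{itemize}
\item If $u$ has a neighbour in $X'$ via a bichromatic edge $e=uw'$, take a perfect matching $M$ containing $e$. By \Cref{facdeco} and the matching condition, $M$ contains no other bichromatic edge. Hence $M\setminus\{e\}$ perfectly matches $X\setminus\{u\}$ inside $G[X]$, and in particular matches $v$ to some $z\in X$. This alone does not give the required structure, so the choice should be made differently.
\item Better: pick a bichromatic edge $f=aa'$ with $a\neq u,v$, if one exists with the right properties. Alternatively, use the vertex condition: every vertex of $X$ has an incident bichromatic edge, since its red half-edge must lie on a bichromatic edge because red monochromatic edges are excluded. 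So every $a\in X$ has a bichromatic neighbour $a'\in X'$, and the claim reduces to finding some $a\in X\setminus\{u,v\}$ for which $G[X]-\{u,v,a\}$ has a perfect matching.
\end{itemize}

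For the reduced claim, take a perfect matching $N$ of $G[X]-u$, which exists because $G[X]$ is factor-critical. In $N$, the vertex $v$ is matched to some $a\in X\setminus\{u,v\}$. Then $N\setminus\{va\}$ is a perfect matching of $G[X]-\{u,v,a\}$. This $a$ has a bichromatic edge $aa'$ by the vertex condition, and $G[X']-a'$ has a perfect matching. The union of these three pieces is a perfect matching of $G-\{u,v\}$.

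The step needing care is confirming that every vertex of $X$ has an incident bichromatic edge. This holds because of the remark that no red monochromatic edges exist together with the vertex condition \eqref{eq_VC}. Everything else is direct from factor-criticality (\Cref{thm_two-fc}) and \Cref{facdeco}. The hypothesis $\min(|X|,|X'|)\ge 3$ is used only to guarantee $|V(G)|\ge 4$.
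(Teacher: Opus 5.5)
Your proof is correct and follows essentially the same argument as the paper. Case 1 is identical, and in Case 2 the paper likewise takes a perfect matching of $G[X\setminus\{u\}]$, removes the edge $va$ covering $v$, and re-covers $a$ by a bichromatic edge $aa'$ together with a perfect matching of $G[X'\setminus\{a'\}]$; the edge $aa'$ exists because red half-edges lie only on bichromatic edges, and $a'\in X'$ by \Cref{facdeco}. The abandoned first bullet in your Case 2 can simply be deleted.
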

\begin{proof}
Let $\{u,v\} \in V(G)=V$ be any two distinct vertices.
We will show that $G[V\setminus \{u, v\}]$ has a perfect matching.

\smallskip
\noindent\emph{Case 1: \(u\in X\) and \(v\in X'\).}
Since \(G[X]\) and \(G[X']\) are factor-critical there exist perfect
matchings \(M_X\) of \(G[X\setminus\{u\}]\) and \(M_{X'}\) of \(G[X'\setminus\{v\}]\).
Their union \(M_X\cup M_{X'}\) is a perfect matching of $G[V\setminus \{u, v\}]$.

\smallskip
\noindent\emph{Case 2: \(u,v\in X\) (the case \(u,v\in X'\) is
symmetric).}
Choose a perfect matching \(M_X\) of \(G[X\setminus\{u\}]\) and let \(e\in M_X\)
be the edge incident with \(v\). Let $v'$ be the other endpoint of $e$.
Let \(v'v''\) be a bichromatic edge whose \red{red} half-edge is incident on \(v'\).
Because \(G[X']\) is factor-critical, \(G[X'\setminus\{v''\}]\) has a perfect
matching \(M_{X'}\).
The set \((M_X\setminus\{e\})\cup\{v'v''\}\cup M_{X'}\) is a perfect
matching of $G[V\setminus \{u, v\}]$
\end{proof}

Using \cref{thm_brickchar}, along with \cref{obs_tight-cut} and \cref{lem_G-bicrit} we can prove that W-state graphs have a very specific 2-vertex cut. 

\begin{restatable}{lemma}{SpecificTwoVertexCut}
\label{lem_2-cut}
If $\min(|X|,|X'|)\geq 3$, then the graph \(G\) has a two-vertex cut \(\{x,x'\}\) with
\(x\in X\) and \(x'\in X'\).
\end{restatable}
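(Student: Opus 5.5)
The plan is to get non-$3$-connectedness from the brick characterization, and then use the colouring structure to place the two cut vertices on opposite sides of the partition $X\sqcup X'$.

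\textbf{Step 1: $G$ is not $3$-connected.} Assume $\min(|X|,|X'|)\ge 3$. By \Cref{obs_tight-cut}, $G$ is not a brick. By \Cref{lem_G-bicrit}, $G$ is bicritical; note that $|V(G)|\ge 6\ge 4$, so the definition applies. Then \Cref{thm_brickchar} forces $G$ not to be $3$-connected. By \Cref{prop:wstate-two-connected}, $G$ is $2$-connected. Since it has at least $6$ vertices, $G$ therefore has a vertex cut $\{a,b\}$ of size exactly two.

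\textbf{Step 2: the cut vertices lie on different sides.} It remains to rule out $a,b\in X$; the case $a,b\in X'$ is symmetric. Suppose $a,b\in X$.
\begin{itemize}
\item $G[X']$ is factor-critical, hence connected, and it avoids $\{a,b\}$. So $X'$ lies inside a single component $C$ of $G-\{a,b\}$.
\item Since $\{a,b\}$ is a cut, there is another component $D\neq C$. Then $D\subseteq X\setminus\{a,b\}$, and $D$ has no neighbours in $X'$.
\item Take any $d\in D$. The vertex condition gives an edge $e$ with $c(e,d)=1$. There are no red monochromatic edges, so $e$ is bichromatic. By \Cref{facdeco}, $e\in\delta(X,X')$, so its other endpoint lies in $X'$.
\end{itemize}
This makes $d$ adjacent to $X'\subseteq C$, contradicting that $D$ and $C$ are distinct components. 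Hence exactly one of $a,b$ lies in $X$ and the other in $X'$. Renaming them $x$ and $x'$ gives the statement.

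\textbf{Main obstacle.} The only genuine input is Step~1, which rests on the Edmonds--Lovász--Pulleyblank characterization (\Cref{thm_brickchar}). A little care is needed to pass from ``not $3$-connected'' to ``has a $2$-vertex cut''. This uses $2$-connectivity together with $G$ not being complete-like, which holds since $|V(G)|\ge 6$. Step~2 is elementary once one observes that every red half-edge lies on a bichromatic edge crossing $\delta(X,X')$.
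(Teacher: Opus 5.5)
Your proof is correct and follows the paper's argument. Like the paper, you first show $G$ is not $3$-connected by combining \Cref{lem_G-bicrit}, \Cref{thm_brickchar} and \Cref{obs_tight-cut}, and then place the cut vertices on opposite sides using that every red half-edge lies on a bichromatic edge into the connected factor-critical side. Your version is slightly more explicit than the paper's on two points: that the cut has size exactly two (via \Cref{prop:wstate-two-connected} and $|V(G)|\ge 6$), and that red half-edges cannot lie on monochromatic edges.
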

\begin{proof}
Towards a contradiction, let us assume that \(G\) is $3$-connected.
By \Cref{lem_G-bicrit}, \(G\) is bicritical. Hence, by
\Cref{thm_brickchar}, \(G\) is a brick, contradicting
\Cref{obs_tight-cut}.
Thus \(G\) is not $3$-connected; let \(\{x,x'\}\) be a minimal
vertex cut.

Towards a contradiction suppose that $\{x,x'\} \subseteq X$.
Since $|X|\geq 3$, $X\setminus \{x,x'\}$ is non-empty. Observe that every vertex in $X\setminus \{x,x'\}$ has a bichromatic edge incident on it with the other end-point in $X'$. Therefore, every vertex in $X-\{x,x'\}$ is connected to some vertex in $X'$. From \Cref{thm_two-fc}, we know that $G[X']$ is connected. Therefore, $G$ remains connected after the removal of $\{x,x'\}$. This contradicts the fact that $\{x,x'\}$ is a vertex cut. Therefore, $\{x,x'\} \not\subseteq X$. Similarly,  $\{x,x'\} \not\subseteq X'$. Therefore, one of the cut vertices lies in \(X\) and the other in \(X'\).
\end{proof}

\begin{restatable}{observation}{ThreeConnectedImpliesWCone}
\label{lem_3conn-wcone}
If $G$ is $3$-connected, then $(G,c)$ is a W-cone.
\end{restatable}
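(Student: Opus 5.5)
The plan is a short contrapositive argument built on the dichotomy already established for the partition $V(G)=X\sqcup X'$ given by \Cref{thm_two-fc} and \Cref{facdeco}. Assume $G$ is $3$-connected.

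First I rule out the case $\min(|X|,|X'|)\geq 3$. If it held, \Cref{lem_2-cut} would give a two-vertex cut $\{x,x'\}$ with $x\in X$ and $x'\in X'$, contradicting $3$-connectivity. (Equivalently, by \Cref{lem_G-bicrit} $G$ would be bicritical, hence a brick by \Cref{thm_brickchar}, contradicting \Cref{obs_tight-cut}.) So $\min(|X|,|X'|)\le 2$.

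Next I use parity. By \Cref{thm_two-fc} both $G[X]$ and $G[X']$ are factor-critical, so $|X|$ and $|X'|$ are odd. A set of size at most $2$ with odd size has size exactly $1$. Hence $|X|=1$ or $|X'|=1$, and \Cref{wcone_building} shows directly that $(G,c)$ is a W-cone.

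There is no real obstacle here: the argument only combines \Cref{lem_2-cut}, the oddness of the factor-critical pieces, and \Cref{wcone_building}. The one point to check is that the shore sizes cannot be $2$, and the parity of factor-critical graphs settles that.
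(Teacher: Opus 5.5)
Your proposal is correct and follows the paper's argument. You rule out $\min(|X|,|X'|)\ge 3$ via \Cref{lem_2-cut} and finish with \Cref{wcone_building}, and your parity step (factor-critical shores have odd order, so a shore of size at most $2$ has size $1$) makes explicit what the paper leaves implicit when it concludes that one of $|X|,|X'|$ equals $1$.
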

\begin{proof}
If both $\lvert X\rvert$ and $\lvert X'\rvert$ were at least $3$, then from \Cref{lem_2-cut}, we get that $G$ has a $2$-vertex cut, contradicting the
assumption that $G$ is $3$-connected. Hence at least one of $|X|,|X'|$ is $1$ in which case \Cref{wcone_building}
implies that $(G,c)$ is a W-cone.
\end{proof}

It follows that \Cref{mainstructthm} is true when $|X|=1$  or $|X'|=1$ (or when $G$ is $3$-connected), and it is sufficient to prove \Cref{mainstructthm} when $|X|$ and $|X'|$ are at least $3$.

We are now ready to describe the decomposition over the {2}-vertex cuts. 
\begin{definition}[Block decomposition of a W-state graph]\label{def_W-block-decomp}

Let $(G, c)$ be a W-state graph with partition $V(G)=X \sqcup X'$ as in \Cref{thm_two-fc}, and let $\{x, x'\}$ be the two-vertex cut given by \Cref{lem_2-cut}, where $x \in X$ and $x' \in X'$. Let $C_1, \ldots, C_t$ be the connected components of $G-\{x,x'\}$. For each $i \in [t]$, let $\widehat{H}_i := G[V(C_i)\cup\{x,x'\}]$. We obtain $(H_i,c_i)$ by initializing $H_i = \widehat{H}_i$ and picking a colouring $c_i$ which agrees with $c$ on the original half-edges in $G$ and making the following changes:
\begin{enumerate}
    \item If $x$ (respectively $x'$) does not have a red half-edge incident on it in $(H_i,c_i)$, we add a red-blue edge between $x$ and $x'$ with the red half-edge incident on $x$ (respectively $x'$).
    \item If ${H}_i$ still does not contain a bichromatic edge between $x$ and $x'$, we add a virtual red-blue edge between $x$ and $x'$ (with the red half-edge incident on either).
\end{enumerate}
We call $(H_i,c_i)$ the \emph{W-blocks} of $(G,c)$ with respect to the cut $\{x,x'\}$.
\end{definition}

\begin{remark}
This process may create a new multi-edge between $x$ and $x'$, and in general may create up to two virtual parallel bichromatic edges between $x$ and $x'$.
\end{remark}

\begin{figure}[ht] 
\centering
\begin{tikzpicture}[line cap=round, line join=round]
  \tikzset{
    vert/.style={circle, draw=black, fill=white,
                 minimum size=7mm, inner sep=0pt, font=\footnotesize},
    thickedge/.style={line width=1.1pt},
  }

  \coordinate (v1) at (0,0);
  \coordinate (v2) at (-2,0);
  \coordinate (v3) at (-5.8,1.6);
  \coordinate (v4) at (-5.8,-1.6);
    \coordinate (u3) at (3.8,1.6);
  \coordinate (u4) at (3.8,-1.6);

  \coordinate (m12) at ($ (v1)!0.5!(v2) $);

  \coordinate (c13a) at (-2.5,1.9);
  \coordinate (c13b) at (-2.5,0.9);
  \coordinate (m13a) at (-2.7,1.35);
  \coordinate (m13b) at (-2.7,0.85);

\coordinate (f12a) at (-1,0.15);

\coordinate (f12aL) at ($0.5*(v1) + 0.5*(f12a)$);
\coordinate (f12aR) at ($0.5*(f12a) + 0.5*(v2)$);
\coordinate (g12a)  at ($0.5*(f12aL) + 0.5*(f12aR)$);


  \coordinate (m14) at ($ (v1)!0.5!(v4) $);
  \draw[thickedge, blue] (v1) -- (m14);
  \draw[thickedge, red]   (m14) -- (v4);

    \coordinate (n14) at ($ (v2)!0.5!(u4) $);
  \draw[thickedge, blue] (v2) -- (n14);
  \draw[thickedge, red]   (n14) -- (u4);

  \draw[thickedge, blue] (v2) -- (v3);
  \draw[thickedge, blue] (v2) -- (v4);
  \draw[thickedge, blue] (v3) -- (v4);

    \draw[thickedge, blue] (v1) -- (u3);
  \draw[thickedge, blue] (v1) -- (u4);
  \draw[thickedge, blue] (u3) -- (u4);

  \draw[thickedge,blue]
    (v1) .. controls ($(v1)!0.5!(c13a)$) .. (m13a);
  \draw[thickedge,red]
    (m13a) .. controls ($(c13a)!0.5!(v3)$) .. (v3);

  \draw[thickedge, red]
    (v1) .. controls ($(v1)!0.5!(c13b)$) .. (m13b);
  \draw[thickedge, blue]
    (m13b) .. controls ($(c13b)!0.5!(v3)$) .. (v3);

  \coordinate (q13a) at (0.5,1.9);
  \coordinate (q13b) at (0.5,0.9);
  \coordinate (n13a) at (0.7,1.35);
  \coordinate (n13b) at (0.7,0.85);
  \draw[thickedge,blue]
    (v2) .. controls ($(v2)!0.5!(q13a)$) .. (n13a);
  \draw[thickedge,red]
    (n13a) .. controls ($(q13a)!0.5!(u3)$) .. (u3);

  \draw[thickedge, red]
    (v2) .. controls ($(v2)!0.5!(q13b)$) .. (n13b);
  \draw[thickedge, blue]
    (n13b) .. controls ($(q13b)!0.5!(u3)$) .. (u3);

  \node[vert] at (v1) {$22'$};
  \node[vert] at (v2) {$11'$};
  \node[vert] at (v3) {$3$};
  \node[vert] at (v4) {$4$};
    \node[vert] at (u3) {$3'$};
  \node[vert] at (u4) {$4'$};
\end{tikzpicture}
\caption{A $6$-vertex W-state graph. Vertices $11'$ and $22'$ act as the $2$-vertex cut and the decomposition along this cut gives two new W-state graphs shown in \Cref{ex_decomp}.}
\label{ex_wout_edge}
\end{figure}

\begin{figure}[ht]
\centering

\begin{minipage}[t]{0.47\textwidth}
\centering
\begin{tikzpicture}[line cap=round, line join=round]
  \tikzset{
    vert/.style={circle, draw=black, fill=white,
                 minimum size=7mm, inner sep=0pt, font=\footnotesize},
    thickedge/.style={line width=1.1pt},
  }

  \coordinate (v1) at (0,0);
  \coordinate (v2) at (-2,0);
  \coordinate (v3) at (-5.8,1.6);
  \coordinate (v4) at (-5.8,-1.6);

  \coordinate (m12) at ($ (v1)!0.5!(v2) $);
  \coordinate (m14) at ($ (v1)!0.5!(v4) $);

  \coordinate (c13a) at (-2.5,1.9);
  \coordinate (c13b) at (-2.5,0.9);
  \coordinate (m13a) at (-2.7,1.35);
  \coordinate (m13b) at (-2.7,0.85);

  \draw[thickedge, blue] (v1) -- (m12);
  \draw[thickedge, red]   (m12) -- (v2);

  \draw[thickedge, blue] (v1) -- (m14);
  \draw[thickedge, red]   (m14) -- (v4);

  \draw[thickedge, blue] (v2) -- (v3);
  \draw[thickedge, blue] (v2) -- (v4);
  \draw[thickedge, blue] (v3) -- (v4);

  \draw[thickedge,blue]
    (v1) .. controls ($(v1)!0.5!(c13a)$) .. (m13a);
  \draw[thickedge,red]
    (m13a) .. controls ($(c13a)!0.5!(v3)$) .. (v3);

  \draw[thickedge, red]
    (v1) .. controls ($(v1)!0.5!(c13b)$) .. (m13b);
  \draw[thickedge, blue]
    (m13b) .. controls ($(c13b)!0.5!(v3)$) .. (v3);

  \node[vert] at (v1) {2};
  \node[vert] at (v2) {1};
  \node[vert] at (v3) {3};
  \node[vert] at (v4) {4};
\end{tikzpicture}

\caption*{(a) A $W$-cone with monochromatic $K_3$}
\end{minipage}
\hfill
\begin{minipage}[t]{0.47\textwidth}
\centering
\begin{tikzpicture}[line cap=round, line join=round]
  \tikzset{
    vert/.style={circle, draw=black, fill=white,
                 minimum size=7mm, inner sep=0pt, font=\footnotesize},
    thickedge/.style={line width=1.1pt},
  }

  \coordinate (v1) at (0,0);
  \coordinate (v2) at (2,0);
  \coordinate (v3) at (5.8,1.6);
  \coordinate (v4) at (5.8,-1.6);

  \coordinate (m12) at ($ (v1)!0.5!(v2) $);
  \coordinate (m14) at ($ (v1)!0.5!(v4) $);

  \coordinate (c13a) at (2.5,1.9);
  \coordinate (c13b) at (2.5,0.9);
  \coordinate (m13a) at (2.7,1.35);
  \coordinate (m13b) at (2.7,0.85);

  \draw[thickedge, blue] (v1) -- (m12);
  \draw[thickedge, red]   (m12) -- (v2);

  \draw[thickedge, blue] (v1) -- (m14);
  \draw[thickedge, red]   (m14) -- (v4);

  \draw[thickedge, blue] (v2) -- (v3);
  \draw[thickedge, blue] (v2) -- (v4);
  \draw[thickedge, blue] (v3) -- (v4);

  \draw[thickedge,blue]
    (v1) .. controls ($(v1)!0.5!(c13a)$) .. (m13a);
  \draw[thickedge,red]
    (m13a) .. controls ($(c13a)!0.5!(v3)$) .. (v3);

  \draw[thickedge, red]
    (v1) .. controls ($(v1)!0.5!(c13b)$) .. (m13b);
  \draw[thickedge, blue]
    (m13b) .. controls ($(c13b)!0.5!(v3)$) .. (v3);

  \node[vert] at (v1) {$1'$};
  \node[vert] at (v2) {$2'$};
  \node[vert] at (v3) {$3'$};
  \node[vert] at (v4) {$4'$};
\end{tikzpicture}

\caption*{(b) A $W$-cone with monochromatic $K_3$}
\end{minipage}
\caption{Block decomposition of the W-state in \Cref{ex_wout_edge}}
\label{ex_decomp}
\end{figure}

\begin{remark}\label{ver_cond_rem}
The bichromatic edge(s) between $x$ and $x'$ in \Cref{def_W-block-decomp} play two roles. First, they ensure that each W-block satisfies the vertex condition. Second, they make the W-union of \Cref{def_W-union} the inverse of the block decomposition. The original graph $(G,c)$ is recovered from the collection of W-blocks by pasting them along $\{x,x'\}$ and deleting precisely those virtual bichromatic edges that were not present in $G$.
\end{remark}

\begin{restatable}[W-blocks are W-state graphs]{lemma}{WBlocksAreWStateGraphs}
\label{thm_W-blocks-are-W}
Let $(G,c)$ be a W-state graph and let
$
( H_1, c_1),\dots,( H_t,c_t)$
be the W-blocks obtained from $(G,c)$ by the block decomposition of \Cref{def_W-block-decomp}. Then each $( H_i, c_i)$ is a W-state graph.
\end{restatable}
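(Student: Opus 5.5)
The plan is to check the three defining properties of a W-state graph for each W-block $(H_i,c_i)$ (matching-covered, vertex condition, matching condition) by transferring perfect matchings between $G$ and $H_i$ along the cut $\{x,x'\}$. Write $X_i:=X\cap V(H_i)$ and $X'_i:=X'\cap V(H_i)$.

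\textbf{Preliminary parity and structure facts.}
\begin{enumerate}
\item[(a)] For every component $C_j$, both $|X\cap V(C_j)|$ and $|X'\cap V(C_j)|$ are even, and $C_j$ has a perfect matching consisting of monochromatic edges only. Since $G[X]$ is factor-critical (\Cref{thm_two-fc}), $G[X\setminus\{x\}]$ has a perfect matching $M_X$. Its edges lie inside $X\setminus\{x\}$, so they avoid $x$ and $x'$, and hence none joins two different components of $G-\{x,x'\}$. Thus $M_X$ restricts to a perfect matching of each $X\cap V(C_j)$. The same argument with $G[X'\setminus\{x'\}]$ handles $X'$, and taking the union gives (a).
\item[(b)] Every perfect matching $M$ of $G$ either matches $x$ and $x'$ into the same component $C_j$ (possibly $x$ to $x'$ directly, via an edge of $G$) and matches every other component internally, or matches $x$ to $x'$. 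This follows from (a) by parity: each $C_j$ has even order and its only external neighbours are $x$ and $x'$.
\item[(c)] In $H_i$, an edge is bichromatic exactly when it lies in $\delta(X_i,X'_i)$. For real edges this is \Cref{facdeco}, and the virtual $xx'$ edges are bichromatic by construction. By (a), $|X_i|=1+|X\cap V(C_i)|$ is odd, so every perfect matching of $H_i$ uses an odd number of bichromatic edges.
\end{enumerate}

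\textbf{Vertex condition and matching-covered.} Let $v\in V(C_i)$. The bichromatic edge carrying the red half-edge at $v$ has its other endpoint in $C_i\cup\{x,x'\}$, so that edge lies in $H_i$. For $x$ and $x'$, step 1 of \Cref{def_W-block-decomp} supplies a red half-edge whenever one is missing.

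For matching-coveredness, take a real edge $e$ of $H_i$ and a perfect matching $M$ of $G$ containing $e$, and apply (b).
\begin{itemize}
\item If $x,x'$ are matched into $C_i$ or to each other, then $M\cap E(H_i)$ is a perfect matching of $H_i$ containing $e$.
\item Otherwise $C_i$ is matched internally. Then $(M\cap E(C_i))\cup\{f\}$, where $f$ is any bichromatic $xx'$ edge of $H_i$ (one exists by step 2), is a perfect matching of $H_i$ containing $e$.
\end{itemize}
A virtual edge $f$ is covered by the perfect matching of $C_i$ from (a) together with $f$.

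\textbf{Matching condition (main obstacle).} By (c), it suffices to rule out a perfect matching $N$ of $H_i$ with at least $3$ bichromatic edges. The strategy is to lift $N$ to a perfect matching of $G$ with at least $3$ bichromatic edges, contradicting the matching condition for $G$.
\begin{itemize}
\item If $N$ matches $x$ and $x'$ into $C_i$, or uses a real edge $xx'$, complete $N$ by the monochromatic perfect matchings of every $C_j$ with $j\neq i$ from (a).
\item The delicate case is when $N$ uses a virtual edge $xx'$. Then $N\cap E(C_i)$ is a perfect matching of $C_i$ that already contains at least $2$ bichromatic edges; by (c) it has an even number of them, at least $2$.
\end{itemize}
To handle the delicate case, use that $t\ge 2$ and that $\{x,x'\}$ is a minimal cut, so $x$ has a neighbour $y$ in some $C_j$ with $j\ne i$. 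Since $G$ is matching-covered, pick $M\in\mathcal{PM}(G)$ with $xy\in M$. By (b), $x'$ is also matched into $C_j$ and $C_i$ is matched internally in $M$. Replacing $M\cap E(C_i)$ by $N\cap E(C_i)$ yields a perfect matching of $G$ with at least $2$ bichromatic edges inside $C_i$. The cut $\delta(X,X')$ is odd-parity tight in the sense of (c) applied to $G$, so this matching has at least $3$ bichromatic edges, a contradiction. Hence every perfect matching of $H_i$ has exactly one bichromatic edge, and $(H_i,c_i)$ is a W-state graph.
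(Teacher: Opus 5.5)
Your proof is correct and follows essentially the same route as the paper. Parity across $\delta(X_i,X_i')$ forces an odd number of bichromatic edges, and a perfect matching of $H_i$ with three or more bichromatic edges is lifted to one of $G$: either by adding perfect matchings of the other components (the paper's $P^c_H$), or, when an $xx'$ edge is used, by replacing that edge with a matching covering $V(G)\setminus V(C_i)$ (the paper's $P^c_C$). Your construction of that matching from a neighbour of $x$ in another component and the matching-coveredness of $G$ fills in a step the paper only asserts as ``easy to see.''
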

\begin{proof}
We use $H,C$ to represent $H_i,C_i$, respectively for some arbitrary $i \in [t]$. Let $Y \coloneqq  (X \cap V(C)) \cup \{x\}$ and $Y' \coloneqq  (X' \cap V(C)) \cup \{x'\}$,
so $V(H) = Y \sqcup Y'$.
Observe that the bichromatic edges between $x,x'$ in $H$ are between vertices in $Y$ and $Y'$. All the other edges in $H$ are contained in $G$ and in $G$ all
monochromatic edges lie inside $X$ or inside $X'$, while all bichromatic edges lie between
$X$ and $X'$. So, we have
\[
E_m(H) = E(H[Y]) \sqcup E(H[Y']) \quad\text{and}\quad E_b(H) = \delta(Y,Y').
\]
Recall that $G[X]$ and $G[X']$ are factor-critical graphs. On removal of $x$ (respectively $x'$) from $G[X]$ (respectively $G[X']$), we see that there exists a perfect matching on each of the connected components of $G[X\setminus \{x\}]$ (respectively $G[X'\setminus \{x'\}]$). It is now easy to see the following perfect matchings:
\begin{itemize}
    \item $P_C$: A matching which covers the $V(C)$.
    \item $P^c_{H}$: A matching which covers $V(G)\setminus \ V(H)$.
    \item $P^c_{C}$: A matching which covers $V(G)\setminus V(C)$.
\end{itemize}
It follows that $|Y|$ and $|Y'|$ must be odd.

\medskip
\noindent\emph{(I) $H$ is matching-covered}. Consider any edge $e \in E(H)$. If $e$ is between $x$ and $x'$, it is part of the perfect matching $P_C \cup \{e\}$ on $H$. If $e$ is not between $x$ and $x'$, it must have been part of the matching-covered graph $G$. Therefore, there exists a perfect matching $P$ on $G$ containing $e$. As $|Y|$ and $|Y'|$ have the same parity, $P$ must either have both the vertices $x,x'$ matched to vertices in $H$ or both the vertices $x,x'$ matched to vertices outside $H$. If it is of the former type $P \cap E(H)$ is a perfect matching on $H$. If it is the latter type, then $(P \cap E(H)) \cup \{e'\}$ is a perfect matching on $H$ where $e'$ is an edge between $x,x'$.

\medskip
\noindent\emph{(II) $H$ satisfies the vertex condition.} $x,x'$ satisfy the vertex condition by construction (see \Cref{ver_cond_rem}). Every vertex of $H$ apart from $x,x'$ has the same incident
edges and half-edge colours as in $G$, and $(G,c)$ satisfies the vertex
condition. Therefore, the vertex condition is satisfied for all vertices in $H$.

\medskip
\noindent\emph{(III) $H$ satisfies the matching condition.}
Now let $M$ be an arbitrary perfect matching of $H$, and set
$k \coloneqq  |M \cap E_b(H)|$. Counting vertices of $Y$ covered by $M$, we see that
monochromatic edges in $E(H[Y])$ cover $2$ vertices of $Y$ and each bichromatic edge
covers $1$ vertex of $Y$. Since $|Y|$ is odd, $k$ must be odd; in particular $k \ge 1$. Thus every perfect matching of $H$
contains at least one bichromatic edge.

We claim that in fact $k \le 1$. Suppose, towards a contradiction, that there
exist $M \in \mathcal{PM}(H)$ with $k \ge 3$.

We construct $M'$ by tweaking $M$, depending on whether there is an edge between $x,x'$ in $M$.
\begin{enumerate}
    \item Suppose $M$ does not contain an edge between $x,x'$. Then, define $M' \coloneqq  M \;\cup\; P^c_H$. 
    \item Suppose $M$ contains an edge $e$ between $x,x'$. Then define $M' \coloneqq  M \setminus \{e\} \;\cup\; P^c_C$.
\end{enumerate}
It is easy to see that $M'$ is a perfect matching of $G$. As all bichromatic edges of $M$ except for $e$ are also part of $M'$, we get that $|M' \cap E_b(G)|\geq 2$, which contradicts the
W-state property of $(G,c)$.

Thus no such $M$ can exist, and for every $M \in \mathcal{PM}(H)$ we have
$|M \cap E_b(H)| \le 1$. Combined with the parity argument ($k$ is odd), this gives
$|M \cap E_b(H)| = 1$ for all perfect matchings $M$ of $H$. Therefore, $H$ is a W-state graph.
\end{proof}

\subsection{Merging two W-states}
\begin{definition}[W-union along a bichromatic edge]\label{def_W-union}
Let $(G_1,c_1)$ and $(G_2,c_2)$ be W-state graphs, and let
$e_1 = x_1x_1' \in E_b(G_1)$ and $e_2 = x_2x_2' \in E_b(G_2)$ be bichromatic
edges.

\begin{enumerate}[(i)]
  \item We first form the graph $G$ by identifying $x_1$ with $x_2$ into a
  new vertex $x$, and $x_1'$ with $x_2'$ into a new vertex $x'$, and retaining
  the edges $e_1,e_2$ between $x$ and $x'$ (possibly becoming a multi-edge). The colouring on all half-edges is inherited
  from $c_1$ and $c_2$. We call $(G,c)$ the \emph{retained W-union} of
  $(G_1,c_1)$ and $(G_2,c_2)$ along $e_1,e_2$.
  \item  Optionally, we can delete the edge(s) between the vertices $x,x'$, if there is still a \red{red} half-edge incident on them after the removal. 
\end{enumerate}
We will refer to both $(G,c)$ and the graph obtained after the edge-deletions as in (ii) simply as \emph{W-unions} of $(G_1,c_1)$ and $(G_2,c_2)$ along $e_1,e_2$, specifying
“retained” when we need to emphasise that both $e_1,e_2$ are present.
\end{definition}
The retained W-union of the graphs in \Cref{ex_decomp} would give the graph in \Cref{ex_with_edge}. After applying edge deletion as per (ii), we get the graph in \Cref{ex_wout_edge}.

\begin{figure}[ht]
\centering
\begin{tikzpicture}[line cap=round, line join=round]
  \tikzset{
    vert/.style={circle, draw=black, fill=white,
                 minimum size=7mm, inner sep=0pt, font=\footnotesize},
    thickedge/.style={line width=1.1pt},
  }

  \coordinate (v1) at (0,0);
  \coordinate (v2) at (-2,0);
  \coordinate (v3) at (-5.8,1.6);
  \coordinate (v4) at (-5.8,-1.6);
    \coordinate (u3) at (3.8,1.6);
  \coordinate (u4) at (3.8,-1.6);

  \coordinate (m12) at ($ (v1)!0.5!(v2) $);

  \coordinate (c13a) at (-2.5,1.9);
  \coordinate (c13b) at (-2.5,0.9);
  \coordinate (m13a) at (-2.7,1.35);
  \coordinate (m13b) at (-2.7,0.85);

\coordinate (f12a) at (-1,0.15);

\coordinate (f12aL) at ($0.5*(v1) + 0.5*(f12a)$);
\coordinate (f12aR) at ($0.5*(f12a) + 0.5*(v2)$);
\coordinate (g12a)  at ($0.5*(f12aL) + 0.5*(f12aR)$);

\draw[thickedge, blue]
  (v1)  .. controls (f12aL) .. (g12a);
\draw[thickedge, red]
  (g12a) .. controls (f12aR) .. (v2);

\coordinate (f12b) at (-1,-0.15);

\coordinate (f12bL) at ($0.5*(v1) + 0.5*(f12b)$);
\coordinate (f12bR) at ($0.5*(f12b) + 0.5*(v2)$);
\coordinate (g12b)  at ($0.5*(f12bL) + 0.5*(f12bR)$);

\draw[thickedge, red]
  (v1)  .. controls (f12bL) .. (g12b);
\draw[thickedge, blue]
  (g12b) .. controls (f12bR) .. (v2);

  \coordinate (m14) at ($ (v1)!0.5!(v4) $);
  \draw[thickedge, blue] (v1) -- (m14);
  \draw[thickedge, red]   (m14) -- (v4);

    \coordinate (n14) at ($ (v2)!0.5!(u4) $);
  \draw[thickedge, blue] (v2) -- (n14);
  \draw[thickedge, red]   (n14) -- (u4);

  \draw[thickedge, blue] (v2) -- (v3);
  \draw[thickedge, blue] (v2) -- (v4);
  \draw[thickedge, blue] (v3) -- (v4);

    \draw[thickedge, blue] (v1) -- (u3);
  \draw[thickedge, blue] (v1) -- (u4);
  \draw[thickedge, blue] (u3) -- (u4);

  \draw[thickedge,blue]
    (v1) .. controls ($(v1)!0.5!(c13a)$) .. (m13a);
  \draw[thickedge,red]
    (m13a) .. controls ($(c13a)!0.5!(v3)$) .. (v3);

  \draw[thickedge, red]
    (v1) .. controls ($(v1)!0.5!(c13b)$) .. (m13b);
  \draw[thickedge, blue]
    (m13b) .. controls ($(c13b)!0.5!(v3)$) .. (v3);

  \coordinate (q13a) at (0.5,1.9);
  \coordinate (q13b) at (0.5,0.9);
  \coordinate (n13a) at (0.7,1.35);
  \coordinate (n13b) at (0.7,0.85);
  \draw[thickedge,blue]
    (v2) .. controls ($(v2)!0.5!(q13a)$) .. (n13a);
  \draw[thickedge,red]
    (n13a) .. controls ($(q13a)!0.5!(u3)$) .. (u3);

  \draw[thickedge, red]
    (v2) .. controls ($(v2)!0.5!(q13b)$) .. (n13b);
  \draw[thickedge, blue]
    (n13b) .. controls ($(q13b)!0.5!(u3)$) .. (u3);

  \node[vert] at (v1) {$22'$};
  \node[vert] at (v2) {$11'$};
  \node[vert] at (v3) {3};
  \node[vert] at (v4) {4};
    \node[vert] at (u3) {$3'$};
  \node[vert] at (u4) {$4'$};
\end{tikzpicture}
\caption{A retained W-union of the graphs in \Cref{ex_decomp}}
\label{ex_with_edge}
\end{figure}

\begin{restatable}[Pasting two W-states along a bichromatic edge]{lemma}{PastingTwoWStatesAlongBichromaticEdge}
\label{lem_paste-over-edge}
Let $(G_1,c_1)$ and $(G_2,c_2)$ be W-state graphs, and let
$e_1\in E_b(G_1)$ and $e_2\in E_b(G_2)$ be bichromatic edges. Then every
W-union of $(G_1,c_1)$ and $(G_2,c_2)$ along $e_1,e_2$ is a W-state graph.
\end{restatable}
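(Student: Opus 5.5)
The plan is to verify the three defining properties of a W-state graph directly: matching-coveredness, the vertex condition \eqref{eq_VC} and the matching condition \eqref{eq_MC}. I would first treat the retained W-union $(G,c)$, and then check that the optional deletions in (ii) of \Cref{def_W-union} preserve everything. The basic tool is a parity observation about how a perfect matching of $G$ meets the two sides. Write $A_i := V(G_i)\setminus\{x_i,x_i'\}$, so that $V(G)=A_1\sqcup A_2\sqcup\{x,x'\}$. Since $e_i$ lies in some perfect matching of $G_i$, the set $|A_i|$ is even. Every edge of $G$ other than those between $x$ and $x'$ lies inside $G_1$ or inside $G_2$, and there are no edges between $A_1$ and $A_2$.

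\textbf{Matching condition.} Let $M\in\mathcal{PM}(G)$. Since $|A_1|$ is even and $A_1$ can be covered only by edges inside $A_1$ or edges to $x$ or $x'$, the number of vertices among $\{x,x'\}$ matched into $A_1$ is even. So exactly one of the following holds.
\begin{enumerate}
\item $M$ uses an edge $e\in\{e_1,e_2\}$ between $x$ and $x'$. Then $M_i := (M\cap E(G_i[A_i]))\cup\{e_i\}$ is a perfect matching of $G_i$ for $i=1,2$. Since $e_i$ is bichromatic, $M_i$ has no further bichromatic edge, so $M$ has exactly one bichromatic edge, namely $e$.
\item Both $x$ and $x'$ are matched into $A_1$. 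Then $M\cap E(G_1)$ is a perfect matching of $G_1$ and contributes exactly one bichromatic edge. Moreover $(M\cap E(G_2))\cup\{e_2\}$ is a perfect matching of $G_2$, so $M\cap E(G_2)$ contains no bichromatic edge.
\item Both $x$ and $x'$ are matched into $A_2$. This is symmetric to the previous case.
\end{enumerate}
In every case $|M\cap E_b(G)|=1$. Deleting edges only shrinks $\mathcal{PM}(G)$, so the matching condition also survives (ii).

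\textbf{Vertex condition.} Every vertex of $A_i$ keeps all its incident half-edges from $G_i$. The vertices $x$ and $x'$ are incident with $e_1$, which is bichromatic and therefore gives a red half-edge at one of them, and by the vertex condition in $G_1$ and $G_2$ each of $x,x'$ receives some red half-edge. The deletions in (ii) are allowed only if red half-edges at $x$ and $x'$ survive, so the vertex condition holds after deletion as well.

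\textbf{Matching-coveredness.} This is the step I expect to need the most care, because the perfect matchings used as certificates must avoid any $xx'$ edges that are later deleted. Let $f$ be an edge of $G_1$ other than $e_1$, and let $P_1\in\mathcal{PM}(G_1)$ contain $f$.
\begin{enumerate}
\item If $e_1\notin P_1$, then $P_1\cup(P_2\setminus\{e_2\})$ is a perfect matching of $G$ containing $f$, where $P_2\in\mathcal{PM}(G_2)$ contains $e_2$. This matching uses no $xx'$ edge.
\item If $e_1\in P_1$, I would instead take $P_1\setminus\{e_1\}$ together with a perfect matching $Q_2$ of $G_2$ avoiding $e_2$. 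Such a $Q_2$ exists whenever $x$ has a neighbour other than $x'$ in $G_2$: the corresponding edge is not between $x$ and $x'$ and lies in some perfect matching, which must then avoid all $xx'$ edges. The certificate still contains an $xx'$ edge, but only an edge of $G_2$ (or one in $P_1$ other than $e_1$), so this should be checked against what (ii) deletes.
\item In the degenerate case $|V(G_2)|=2$, the union is just $G_1$ with parallel bichromatic edges added, and the statement is immediate.
\end{enumerate}
For the $xx'$ edges themselves, $P_1\cup(P_2\setminus\{e_2\})$ with $e_1\in P_1$ covers $e_1$, and symmetrically $e_2$ is covered.

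The point to double-check is that, after the optional deletion, the certificates in the second case above avoid the deleted edges. Concretely, one needs perfect matchings of $G_1$ containing a given non-$xx'$ edge that avoid all $xx'$ edges. I would try to obtain these by combining factor-criticality of the two sides of $G_2$ (\Cref{thm_two-fc}) with the red half-edges guaranteed at $x$ and $x'$.
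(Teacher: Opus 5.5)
\textbf{Matching and vertex conditions.} Your argument for these is correct. The three-case restriction argument is essentially the paper's, but done directly, so you avoid the paper's detour through the factor-critical partition $X\sqcup X'$ to get odd parity. One repair is needed. Your first case must allow $M$ to use \emph{any} edge $e$ joining $x$ and $x'$, not only $e_1$ or $e_2$, because $G_i$ may already contain edges parallel to $e_i$ (every W-cone has parallel edges at its apex). The argument still goes through: if $e$ comes from $G_1$, then both $(M\cap E(G_1[A_1]))\cup\{e\}$ and $(M\cap E(G_1[A_1]))\cup\{e_1\}$ are perfect matchings of $G_1$. Hence the $A_1$-part has no bichromatic edge and $e$ itself is bichromatic.

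\textbf{The gap: matching-coveredness after the deletions in (ii).} You leave this open, and you partly misdiagnose it.
\begin{itemize}
\item In your first case, the claim that $P_1\cup(P_2\setminus\{e_2\})$ uses no $xx'$ edge is false when $P_1$ contains an $x_1x_1'$ edge other than $e_1$.
\item Conversely, your second-case certificate $(P_1\setminus\{e_1\})\cup Q_2$ contains no $xx'$ edge at all, contrary to what you wrote.
\end{itemize}
The fix is to split on whether $P_1$ contains \emph{some} edge $h$ joining $x_1,x_1'$. If it does, use $(P_1\setminus\{h\})\cup Q_2$, where $Q_2$ exists by $2$-connectivity of $G_2$ when $|V(G_2)|\geq 4$. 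If it does not, use $P_1\cup(P_2\setminus\{e_2\})$ with $e_2\in P_2$. Either way, $f$ lies in a perfect matching of $G$ using no $xx'$ edge. Each $xx'$ edge $g$ of $G_1$ lies in $P\cup(P_2\setminus\{e_2\})$ with $g\in P\in\mathcal{PM}(G_1)$, and the only $xx'$ edge of that matching is $g$. So when both $G_i$ have at least four vertices, every legal deletion preserves matching-coveredness.

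\textbf{The auxiliary fact you propose is false in general.} You suggest proving that every edge of $G_1$ not joining $x_1,x_1'$ lies in a perfect matching of $G_1$ avoiding all $x_1x_1'$ edges. Counterexample: in a W-cone with apex $x_1$ over a triangle $\{x_1',a,b\}$, the edge $ab$ lies only in perfect matchings that use an $x_1x_1'$ edge.

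\textbf{Where that fact is actually needed.} It is needed only when $|V(G_2)|=2$ and \emph{every} $xx'$ edge is deleted, so your ``immediate'' degenerate case is not immediate. (If some $xx'$ edge survives, just swap it in.) In that case the fact does hold, because legality forces $x_1$ and $x_1'$ to have bichromatic edges to other vertices. Take $V(G_1)=X_1\sqcup X_1'$ as in \Cref{thm_two-fc} with $x_1\in X_1$, and let $f$ be a blue edge of $G_1[X_1]$.
\begin{itemize}
\item Let $u$ be the $X_1$-endpoint of the bichromatic edge of some perfect matching of $G_1$ through $f$. Then $G_1[X_1]-u$ has a perfect matching containing $f$.
\item Choose a bichromatic edge $uv'$ not joining $x_1,x_1'$. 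For $u\neq x_1$, any bichromatic edge at $u$ works. For $u=x_1$, use the surviving red edge at $x_1$.
\item Complete with a perfect matching of the factor-critical graph $G_1[X_1']-v'$.
\end{itemize}
Bichromatic edges and blue edges in $X_1'$ are handled the same way.

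For comparison, the paper's step (I) uses essentially your certificates and argues only for the retained union. You were right to flag the issue, but as written this step of your proof is incomplete.
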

\begin{proof}
From \Cref{thm_two-fc}, let the vertex partitions of the  W-state graphs $(G_1,c_1)$ and $(G_2,c_2)$ be $V(G_i)=X_i \sqcup X_i'$ $(i=1,2)$ such that
$E_m(G_i)=E(G_i[X_i]) \sqcup E(G_i[X_i'])$ and
$E_b(G_i)=\delta(X_i,X_i')$.

Let $x_1\in X_1$, $x_1'\in X_1'$ and $x_2\in X_2$, $x_2'\in X_2'$.
Let $G$ be formed by the W-union as described in \Cref{def_W-union}. Let $c$ be the induced half-edge colouring on $G$. We now prove that $(G,c)$ is a W-state graph. Set
\[
Y_1 \coloneqq  X_1 \setminus \{x_1\},\quad
Y_1' \coloneqq  X_1' \setminus \{x_1'\},\quad
Y_2 \coloneqq  X_2 \setminus \{x_2\},\quad
Y_2' \coloneqq  X_2' \setminus \{x_2'\},
\]
and define
\[
X \coloneqq  Y_1 \cup Y_2 \cup \{x\},\qquad
X' \coloneqq  Y_1' \cup Y_2' \cup \{x'\}.
\]
Because we only identify vertices and do not create any new edges, every edge that was monochromatic in $G_1$ or $G_2$ still has both
endpoints in $X$ or both in $X'$, and every edge that was bichromatic still
has one endpoint in $X$ and one in $X'$. Hence,
\[
E_m(G) = E(G[X]) \sqcup E(G[X']) \qquad\text{and}\qquad
E_b(G) = \delta(X,X').
\]

Since $G_1[X_1]$ and $G_2[X_2]$ are factor-critical and $X$ is obtained by
taking a vertex-sum of $G_1[X_1]$ and $G_2[X_2]$ at $x$, the graph $G[X]$ is
also factor-critical. Similarly $G[X']$ is factor-critical. It follows that $|X|$ and $|X'|$ are odd.

\medskip
\noindent\emph{(I) $G$ is matching-covered}.
Let $M_2$ be a perfect matching of $G_2$ that contains $e_2$ (which exists
because $(G_2,c_2)$ is matching-covered). As $G_2$ is connected, there exists another edge $e'$ which is incident on the same vertex as $e_2$. Consider a perfect matching $M_3$ on $G_2$ containing $e'$. Observe that $M_3$ does not contain $e_2$. If $M$ is an arbitrary perfect matching of
$G_1$, then:
\begin{itemize}
  \item if $e_1 \in M$, then $M' \coloneqq  (M \setminus \{e_1\}) \,\cup\, M_3$
  is a perfect matching of $G$ containing the edges of $M$.
  \item if $e_1 \notin M$, then $M$ covers $x_1$ and $x_1'$ by edges
  other than $e_1$, and
$
    M' \coloneqq  M \,\cup\, (M_2 \setminus \{e_2\})
$
  is a perfect matching of $G$ containing the edges of $M$.
\end{itemize}
Since every edge of $G_1$ lies in some perfect matching of $G_1$, it follows that every edge of $G_1$ also lies in some perfect matching of $G$. By symmetry, every edge of $G_2$ lies in some perfect matching of $G$. Finally, any retained edge $e$ between $x$ and $x'$ lies
in a perfect matching $(M_1 \setminus \{e_1\}) \,\cup\, (M_2 \setminus \{e_2\}) \,\cup\, \{e\}$ of $G$, where $M_1$ and $M_2$ are perfect matchings of $G_1$ and $G_2$ containing
$e_1$ and $e_2$, respectively. Hence $G$ is matching-covered.

\medskip
\noindent\emph{(II) $G$ satisfies the vertex condition.}
Every vertex other than $x,x'$ has the same set of incident half edges from one of the $G_i$ and thus
has an incident \red{red} half-edge in $G$. The vertices
$x,x'$ inherit all incident edges (and colours) of $x_1,x_2$ and $x_1',x_2'$ and the edge removal between $x,x'$ could happen only when the vertex condition remains satisfied after the removal. Thus the vertex
condition holds in $(G,c)$.

\medskip
\noindent\emph{(III) $G$ satisfies the matching condition.}
Let $M$ be any perfect matching of $G$. As $|X|$ and $|X'|$ are odd, the number
$k := |M \cap \delta(X,X')|$ is odd. Since $E_b(G)=\delta(X,X')$, this means
$|M \cap E_b(G)|$ is odd. So it only remains to prove that
$|M \cap E_b(G)| \le 1$.

Suppose, towards a contradiction, that $M$ contains at least three bichromatic
edges. Suppose, $M$ uses the edge between $x$ and $x'$, then one of the two restrictions of $M$ to edges originating in $G_1$ and $G_2$ must have at least two bichromatic edges, a contradiction.

Without loss of generality, let $x$ be matched to a vertex in $V(G_1)\setminus \{x_1,x_1'\}$ . Observe that as $|V(G_1)\setminus \{x_1,x_1'\}|$ is even, if $x$ is matched to a vertex in $V(G_1)\setminus \{x_1,x_1'\}$ in $M$, then $x'$ is also matched to a vertex in $V(G_1)\setminus \{x_1,x_1'\}$. The restrictions of $M$ to edges originating in $G_1$ will have exactly one bichromatic edge (as $G_1$ is a W-state graph). Therefore, the restriction of $M$ to edges originating on the other side, excluding $x,x'$, say $M_2$, must have at least two bichromatic edges. But then $M_2 \cup \{e_2\}$ is a perfect matching of $G_2$ and it now contains at least two
bichromatic edges, again a contradiction. Therefore every perfect matching of
$G$ contains exactly one bichromatic edge.

Hence $(G,c)$ is a W-state graph.
\end{proof}

\section{Proof of \texorpdfstring{\Cref{mainstructthm}}{Theorem}: 3-connected components of W-state graphs} \label{sec_proof-main}

\linkedstatement{mainthm}{mainstructthm}{\mainthm*}

\begin{proof}[Proof of \Cref{mainstructthm}: ``only if'' direction]
Let $(G,c)$ be a W-state graph. By definition, it satisfies the vertex
condition, so it remains to show that each $3$-connected component of $G$ is a W-cone.

We argue by repeatedly applying the $2$-vertex cut decomposition as in
\Cref{def_W-block-decomp}. If $(G,c)$ itself is $3$-connected, then
by \Cref{lem_3conn-wcone}, $(G,c)$ is a W-cone,
and we are done.

Otherwise, $G$ has a $2$-vertex cut $\{x,x'\}$ with $x\in X$, $x'\in X'$
as in \Cref{lem_2-cut}. Let $(H_1,c_1),\dots,(H_t,c_t)$ be the W-blocks of $(G,c)$ with respect to the cut $\{x,x'\}$ as defined in
\Cref{def_W-block-decomp}. By \Cref{thm_W-blocks-are-W},
each $(H_i,c_i)$ is again a W-state graph, and each $H_i$ has strictly fewer
vertices than $G$.

We now apply the same reasoning recursively to each W-block that is not
$3$-connected. Since the number of vertices strictly decreases at each
step, this process terminates and yields a collection of induced subgraphs $(H^{(1)},c^{(1)}), \dots, (H^{(r)},c^{(r)})$ which are W-states and are $3$-connected. By \Cref{lem_3conn-wcone}, each $(H^{(j)},c^{(j)})$ is a W-cone.

Moreover, the standard decomposition of a graph along its $2$-vertex cuts implies that the graphs $H^{(1)}, \ldots, H^{(r)}$ obtained above are precisely the $3$-connected components of $G$ in the standard decomposition, with their virtual bichromatic edges. Therefore, every $3$-connected component of $G$ is a W-cone, as claimed.
\end{proof}

\begin{proof}[Proof of \Cref{mainstructthm}: ``if'' direction]
Suppose $(G,c)$ is a connected half-edge $2$-coloured, matching-covered graph such that
\begin{enumerate}[(i)]

    \item every vertex of $G$ is incident with a red half-edge, and

    \item each $3$-connected component of $G$, with the colouring inherited from $c$ on the real edges and with every virtual edge coloured bichromatically, is a W-cone.

\end{enumerate}
We prove that $(G,c)$ is a W-state graph. 

Each $3$-connected component $K$ of $G$ is a W-cone by assumption, hence a W-state graph by \Cref{wconiswstat}. Denote these components by $(K_1,d_1),\ldots,(K_r,d_r)$.

The usual $3$-connected decomposition of a $2$-connected graph expresses $G$ as a tree of such components glued along pairs of vertices (that is, along $2$-vertex cuts). By \Cref{def_W-block-decomp}, each component contains a bichromatic edge between the two vertices of the corresponding adhesion pair. Therefore each glueing step can be realized as a W-union along bichromatic edges in the sense of \Cref{def_W-union}. Starting from $K_1$ and repeatedly applying retained W-unions, and then deleting exactly those virtual bichromatic edges that were not present in $G$, we obtain a graph isomorphic to $(G,c)$. By \Cref{lem_paste-over-edge}, every intermediate graph is a W-state graph. In particular, $(G,c)$ is a W-state graph.
\end{proof}

Due to \Cref{aux_remark}, during the decomposition, one may end up creating new multi-edges. Therefore, one cannot directly rule out the possibility that a simple graph, upon decomposition, yields different multigraphs that are W-cones. The next proof rules this out.

\linkedstatement{NoSimpleWStateGraph}{simplegraphcor}{\NoSimpleWStateGraph*}
\begin{proof}
Towards a contradiction, let there be a simple W-state graph. Among all such W-state graphs, consider a vertex-minimal W-state graph $G$. As $G$ is simple, from \Cref{wcome_multi}, we know that $G$ is not a W-cone. Therefore, from \Cref{lem_2-cut}, there exists a $2$-vertex cut, say $\{x,x'\}$ with
\(x\in X\) and \(x'\in X'\).

Since $(G,c)$ is simple, there is at most one edge between $x$ and $x'$. Without loss of generality, there is no bichromatic edge between $x$ and $x'$ whose red half-edge is incident with $x$. Therefore, there exists a vertex $w\in V(G)\setminus\{x,x'\}$ such that $xw$ is a bichromatic edge whose red half-edge is incident with $x$.

Let $(H_1,c_1),\ldots,(H_t,c_t)$ be the W-blocks of $(G,c)$ obtained from the block decomposition of \Cref{def_W-block-decomp} along $\{x,x'\}$. From \Cref{thm_W-blocks-are-W}, these are W-state graphs. Without loss of generality, let $w\in V(H_1)$. By our assumption that $(G,c)$ is a vertex-minimal simple W-state graph, $(H_1,c_1)$ must be a multigraph. 

Now $x$ already has an incident red half-edge inside $H_1$, namely on $xw$. Hence, by \Cref{def_W-block-decomp}, in the construction of $H_1$ we add at most one virtual bichromatic edge between $x$ and $x'$. If the original edge $xx'$ is present in $G$, then it already provides the required bichromatic edge between $x$ and $x'$; otherwise exactly one virtual edge is added. In either case, $\{x,x'\}$ is not a multi-edge of $H_1$. Therefore, $H_1$ contains a multi-edge not between $x$ and $x'$, and that multi-edge was already present in $(G,c)$, a contradiction.
\end{proof}

\subsection{Proof of \texorpdfstring{\Cref{thm_algorithm}}
{Theorem}: W-colourability of a graph}
\label{sec_algorithms}
\smallskip 
\noindent\textbf{Faster algorithm for recognition of W-state graphs.}
We first describe a simple algorithm, implicit in \cite{vardi1,vardi2}, to verify whether a given graph is a W-state graph. One can check whether the graph is matching-covered and whether \Cref{eq_VC} (the vertex condition, referred to as FORALL-PMVC) holds in $O(|V|\cdot|E|)$ time. To verify \Cref{eq_MC} (the matching condition, referred to as EXISTS-PMVC condition), observe the following. First, one can check whether there is no perfect matching consisting entirely of monochromatic \blue{blue} edges; this certifies that every perfect matching M satisfies $|M \cap E_b(G)| \ge 1$. Second, one can check whether any pair of bichromatic edges can be simultaneously extended to a perfect matching; the absence of such a pair implies that $|M \cap E_b(G)| \le 1$ for every perfect matching M. Together, these checks establish the matching condition in $O(|E|^2)$ calls to perfect matching recognition, which, using \cite{Micaliv}, can be implemented in $O(|E|\sqrt{|V|})$ time. Therefore, 
this simple procedure requires in total $O(|E|^3 \sqrt{|V|})$ time. Our structure from \Cref{mainstructthm} gives rise to a simple algorithm which is as fast as finding a matching-covered graph, i.e., $O(|V|\cdot|E|)$ time. The recognition algorithm is given next.

\subsubsection{Recognition algorithm for W-state graphs}\label{supp_alg}

\begin{theorem}
There exists an $O(|V|\cdot |E|)$-time algorithm that, given a half-edge
$2$-coloured graph $(G,c)$, decides whether $(G,c)$ is a W-state graph.
\end{theorem}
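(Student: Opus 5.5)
The algorithm checks, in order, each hypothesis of \Cref{mainstructthm}. The structural facts from the ``only if'' analysis are used as necessary conditions that can be tested in linear time.

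First, we test whether $G$ is matching-covered in $O(|V|\cdot|E|)$ time; this is the bottleneck step. The standard approach computes one perfect matching $M$ with \cite{Micaliv}. An edge $uv\notin M$ then lies in some perfect matching if and only if it lies on an $M$-alternating cycle. Equivalently, $G-\{u,v\}$ has a perfect matching, which is tested by a single augmenting-path search of cost $O(|E|)$ from the $M$-partners of $u,v$. Organising these searches so that each vertex is handled once gives $O(|V|\cdot|E|)$. We reject if $G$ is not matching-covered. Next we check the vertex condition \Cref{eq_VC} directly in $O(|E|)$ time, and we check that there are no monochromatic red edges.

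Second, we test the necessary structure from \Cref{thm_two-fc} and \Cref{facdeco} in linear time. We compute the connected components of $G_m=(V,E_m(G))$ and reject unless there are exactly two components $X,X'$ with $E_b(G)=\delta(X,X')$. We also need $G[X]$ and $G[X']$ to be factor-critical. Once the remaining checks pass, this follows automatically: by \Cref{inessential_obs} and \Cref{thm_two-fc}, any W-state graph has this property. So it suffices to certify the matching condition directly, as follows.

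Third, we verify the matching condition \Cref{eq_MC}. Given the partition, $|X|$ odd forces every perfect matching to contain an odd number of edges of $\delta(X,X')=E_b(G)$. So \Cref{eq_MC} is equivalent to $\delta(X,X')$ being a tight cut with $|X|$ odd; we reject if $|X|$ is even. To test tightness without enumerating matchings, we use the decomposition route of \Cref{mainstructthm}. We compute the SPQR-tree (the standard $3$-connected decomposition) of $G$ in $O(|V|+|E|)$ time, colour the virtual edges bichromatically, and check that each $3$-connected component is a W-cone. For each component this means finding a vertex $v$ with $\delta(\{v\})$ exactly the bichromatic edges, $v$ universal, and all other edges monochromatic blue. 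Each test is linear in the component size, and the component sizes sum to $O(|E|)$. Polygon (S-)nodes and bond (P-)nodes must be handled in the same way. Triangles with a doubled bichromatic edge are W-cones; longer cycles are not, and we reject them.

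The correctness step I expect to be the main obstacle is justifying that the ``if'' direction of \Cref{mainstructthm} applies exactly to the SPQR output. Three points need care. The colouring of virtual edges must match \Cref{def_W-block-decomp}, which may add up to two parallel virtual edges. The vertex condition must hold inside each piece. The standard decomposition may also choose $2$-cuts other than the $X$--$X'$ cuts of \Cref{lem_2-cut}. To handle this, I would first verify the partition of the second step. Since every cut used by the SPQR-tree separating a W-state graph must straddle $X$ and $X'$ (the argument of \Cref{lem_2-cut}), any cut not of this form certifies rejection. Accepting pieces are W-cones, so the characterisation yields a W-state graph. All steps after the matching-covered test are linear, so the total running time is $O(|V|\cdot|E|)$.
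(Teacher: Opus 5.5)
\textbf{There is a genuine gap in your third step.} Your per-component test asks only for a universal apex $v$ with $\delta(\{v\})$ equal to the bichromatic edges and all other edges blue. This omits requirements of \Cref{conedef}: a W-cone must also be matching-covered (and satisfy the vertex condition). So the sentence ``accepting pieces are W-cones'' is false, and the ``if'' direction of \Cref{mainstructthm} cannot be invoked. That direction goes through \Cref{wconiswstat} and \Cref{lem_paste-over-edge}, and part (III) of the latter uses that $|V(G_1)\setminus\{x_1,x_1'\}|$ is even. This fails for an odd-order piece.

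Global matching-coveredness of $G$ does not rescue this. A separation pair of a matching-covered graph can have two odd shores, and then the split pieces have odd order. For the same reason, your claim that factor-criticality of $G[X],G[X']$ ``follows automatically once the remaining checks pass'' is circular.

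\textbf{A counterexample your algorithm accepts.} Let $F_0$ be $K_5$ minus the edge $u_1u_2$ on $\{u_1,u_2,w_1,w_2,w_3\}$. For $i=1,2$ let $F_i$ be a $K_4$ on $\{a,b_i,c_i,d_i\}$. All these edges are blue. Add the following bichromatic edges:
\begin{itemize}
\item $aw_2,aw_3$, red at $w_2,w_3$;
\item two parallel copies of $aw_1$, red at $a$ and at $w_1$ respectively;
\item for $i=1,2$, the edges $u_ic_i,u_id_i$, red at $c_i,d_i$;
\item for $i=1,2$, two parallel copies of $u_ib_i$, red at $u_i$ and at $b_i$ respectively.
\end{itemize}

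\emph{The graph passes all your checks.}
\begin{itemize}
\item Every perfect matching sends exactly one of $a,u_i$ into the odd set $\{b_i,c_i,d_i\}$. Going through the three possible patterns shows that $G$ is matching-covered.
\item The vertex condition holds.
\item $G_m$ has exactly two components, $X=\{a,b_1,c_1,d_1,b_2,c_2,d_2\}$ and $X'=\{u_1,u_2,w_1,w_2,w_3\}$, both odd, with $E_b(G)=\delta(X,X')$.
\item Every separation pair ($\{a,u_1\}$, $\{a,u_2\}$ and the parallel pairs) straddles $X$ and $X'$.
\item Every piece passes your apex test: apex $a$ in $K_6-u_1u_2$, apex $u_i$ in the $K_5$ on $\{u_i,a,b_i,c_i,d_i\}$, and the bonds trivially.
\end{itemize}

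\emph{Yet $G$ is not a W-state graph.} The set $\{u_1b_1,c_1d_1,u_2b_2,c_2d_2,aw_1,w_2w_3\}$ is a perfect matching with three bichromatic edges. Also, $G[X]$ (two $K_4$'s sharing $a$) is not factor-critical. The two $K_5$ pieces have odd order, so they are not matching-covered.

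\textbf{How the paper avoids this, and a possible repair.} The paper runs the Carvalho--Cheriyan matching-covered test on every component, at cost $O(n_im_i)$ each, summing to $O(|V|\cdot|E|)$. This is exactly the step you replaced with a linear check. A cheaper repair is possible: reject any piece of odd order, then prove directly that apex-structured pieces glued along even separations satisfy the matching condition. That proof could go by induction over the decomposition tree, mimicking part (III) of \Cref{lem_paste-over-edge}. Some such argument must be supplied, since as written your algorithm is unsound.
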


\begin{proof}
Since every W-state graph is connected, we first test connectivity and then apply
\Cref{mainstructthm}. Thus a half-edge $2$-coloured graph $(G,c)$ is a W-state graph if and only if
\begin{enumerate}
  \item $G$ is connected and matching-covered;
  \item the vertex condition holds: every vertex has an incident \red{red} half-edge;
  \item each $3$-connected component of the standard $3$-connected decomposition of $G$,
        with the colouring inherited from $c$ on the real edges and with every virtual edge of
        the decomposition coloured bichromatically, is a W-cone.
\end{enumerate}
We describe an algorithm that checks these conditions and then analyse its running time.

\begin{algorithm}[h]
  \caption{Recognition of W-state graphs}
  \label{alg_wstate}
  \DontPrintSemicolon

  \textbf{Input:} A half-edge $2$-coloured graph $(G,c)$\;
  \textbf{Output:} \textsc{true} if and only if $(G,c)$ is a W-state graph\;

  \BlankLine
  \textbf{1. Connected + matching-covered + vertex condition.}\\
  Check whether $G$ is connected. If not, return \textsc{false}.\\
  Use Carvalho--Cheriyan~\cite{DBLP:conf/soda/CarvalhoC05} to test whether $G$ is
  matching-covered. If not, return \textsc{false}.\\
  Check the vertex condition: if some $v\in V(G)$ has no incident \red{red} half-edge,
  return \textsc{false}.

  \BlankLine
  \textbf{2. Standard $3$-connected decomposition.}\\
  Compute the standard $3$-connected decomposition of $G$ using SPQR trees~\cite{spqr2}.\\
  Let $K_1,\dots,K_t$ be the resulting components. For each component $K_i$, keep the
  colouring inherited from $c$ on the real edges and colour every virtual edge of $K_i$
  bichromatically.

  \BlankLine
  \textbf{3. Check W-cones.}\\
  For each component $K_i$, test whether $K_i$ is a W-cone using \Cref{conedef}. If some
  $K_i$ is not a W-cone, return \textsc{false}.

  \BlankLine
  \textbf{4. Return.}\\
  If all components pass the test, return \textsc{true}.
\end{algorithm}

\medskip\noindent
\textbf{Correctness.}
Suppose first that the algorithm outputs \textsc{true}. Then Step~1 ensures that $G$ is
connected and matching-covered and that the vertex condition holds in $G$. Step~2 constructs
exactly the components that appear in \Cref{mainstructthm}, namely the components of the
standard $3$-connected decomposition of $G$, with the colouring inherited from $c$ on the
real edges and every virtual edge coloured bichromatically. Step~3 checks that each such
component is a W-cone. Hence, by \Cref{mainstructthm}, $(G,c)$ is a W-state graph.

Conversely, suppose that $(G,c)$ is a W-state graph. Then $G$ is connected and
matching-covered and the vertex condition holds by \Cref{def_Wstate}, so Step~1 cannot
reject. By \Cref{mainstructthm}, every component produced in Step~2 is a W-cone. Therefore
all components pass the test in Step~3, and the algorithm outputs \textsc{true}. Thus the
algorithm returns \textsc{true} if and only if $(G,c)$ is a W-state graph.

\medskip\noindent
\textbf{Running time.}
Checking connectivity and the vertex condition in Step~1 takes $O(|V|+|E|)$ time. The
matching-covered test in Step~1 takes $O(|V|\cdot |E|)$ time by the algorithm of
Carvalho and Cheriyan~\cite{DBLP:conf/soda/CarvalhoC05}.

The SPQR-tree in Step~2 can be constructed in $O(|V|+|E|)$ time~\cite{spqr,spqr2}, and the
resulting decomposition has total size $O(|V|+|E|)$. Assigning bichromatic colours to the
virtual edges takes time linear in the size of this decomposition, so Step~2 runs in
$O(|V|+|E|)$ time overall.

For Step~3, let $n_i := |V(K_i)|$ and $m_i := |E(K_i)|$ for $i\in[t]$. Checking that $K_i$
is matching-covered takes $O(n_i m_i)$ time by Carvalho--Cheriyan~\cite{DBLP:conf/soda/CarvalhoC05},
and the additional checks for the existence of a universal vertex and for the edge-partition
and vertex conditions of \Cref{conedef} take $O(n_i+m_i)$ time, which is dominated by
$O(n_i m_i)$. Hence Step~3 runs in
\[
  \sum_{i=1}^{t} O(n_i m_i)
  \;=\;
  O\Bigl(\bigl(\max_i n_i\bigr)\cdot \sum_{i=1}^{t} m_i\Bigr)
  \;=\; O(|V|\cdot |E|),
\]
since $\max_i n_i \le |V|$ and the total size of the decomposition is linear in $|V|+|E|$,
which is $O(|E|)$ because $G$ is connected.

Combining the three steps, the overall running time of the algorithm is
$O(|V|\cdot |E|)$, as claimed.
\end{proof}

\smallskip 
\noindent\textbf{W-realizability of graphs.}
 Using \Cref{mainstructthm}, we obtain an algorithm for deciding whether an uncoloured graph $G$ is
\emph{W-realizable}, that is, whether there exists a half-edge $2$-colouring $c$ such that
$(G,c)$ is a W-state graph. At a high level, the algorithm first checks the necessary
condition that $G$ is matching-covered. It then computes the standard $3$-connected
decomposition of $G$ and checks if each resulting component is a W-cone. For every component
$K_i$, this is done by verifying that $K_i$ is matching-covered and that it contains a candidate apex,
namely a universal vertex incident with a pair of parallel edges and with every virtual edge
of $K_i$. This is exactly the structure required for $K_i$ to underlie a W-cone in a
colouring in which the virtual edges are bichromatic. Note that a $K_i$ can be realized as a W-cone in multiple ways, and the algorithm keeps track of these realizations in a way that preserves a low runtime using the SPQR-tree decomposition \cite{spqr,spqr2} and some structural properties of W-state graphs. The algorithm then checks that
these W-cone realizations can be chosen compatibly so that the original graph $G$ is
recovered by gluing the components together via W-unions along the virtual bichromatic
edges. The full algorithm is given below.

From the structural characterization in
\Cref{mainstructthm}, if $G$ is W-realizable, then in any witnessing colouring the graph is
matching-covered and each $3$-connected component of the standard decomposition, with its
virtual edges viewed as bichromatic, is a W-cone. Hence every component passes the test, and the induced colourings are compatible with the reconstruction of $G$.
Conversely, if the algorithm accepts, then each component admits a suitable W-cone
realization, and the compatibility step ensures that such realizations can be glued
together along the decomposition tree by W-unions. As W-union preserves the W-state
property, the resulting colouring of $G$ is a W-state graph. We now describe the compatibility procedure and prove the correctness.

\subsubsection{W-realizability of graphs}\label{sec:wreal}

Call an uncoloured graph $G$ \emph{W-realizable} if there exists a half-edge $2$-colouring
$c$ such that $(G,c)$ is a W-state graph in the sense of \Cref{def_Wstate}. We now prove
\Cref{thm_algorithm}.

We first isolate the condition that certifies that a 3-connected component obtained after the decomposition can be
coloured as a W-cone while forcing its virtual edges to be bichromatic.

\begin{proposition}\label{prop:forced-wcone}
Let $H$ be an uncoloured graph and let $S \subseteq E(H)$. Then there exists a half-edge
$2$-colouring $d$ such that $(H,d)$ is a W-cone and every edge of $S$ is bichromatic if and
only if $H$ is matching-covered and contains a universal vertex $v$ that is incident with a
pair of parallel edges and with every edge of $S$.
\end{proposition}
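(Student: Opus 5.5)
The plan is to prove the two directions separately. In both, the colouring of a W-cone is pinned down almost entirely by the edge partition condition \eqref{eq_edge_partition}: every edge at the apex is bichromatic, and every other edge is monochromatic \blue{blue}.

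\emph{``Only if''.} Let $(H,d)$ be a W-cone with apex $v$ in which every edge of $S$ is bichromatic. By \Cref{conedef}, $H$ is matching-covered and $v$ is universal. Since $E_b(H)=\delta(\{v\})$, we get $S\subseteq E_b(H)=\delta(\{v\})$, so every edge of $S$ is incident with $v$.

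It remains to find a pair of parallel edges at $v$, which is the content of \Cref{wcome_multi}. The argument goes as follows, assuming $|V(H)|\ge 2$, as a matching-covered graph with an edge must be.
\begin{enumerate}
  \item By the vertex condition \eqref{eq_vertex_condition}, $v$ has an incident edge $e_1=vu$ with $d(e_1,v)=\red{1}$. Since $e_1$ is bichromatic, $d(e_1,u)=\blue{0}$.
  \item Now apply the vertex condition at $u$: some edge $e_2\in\delta(\{u\})$ has $d(e_2,u)=\red{1}$.
  \item Every edge at $u$ not incident with $v$ lies in $H-v$, so it is monochromatic \blue{blue}. Hence $e_2$ joins $u$ to $v$.
  \item Finally, $e_2\neq e_1$ because their half-edge colours at $u$ differ.
\end{enumerate}
So $e_1,e_2$ are parallel edges at $v$.

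\emph{``If''.} Suppose $H$ is matching-covered, $v$ is universal, $e_1,e_2$ are parallel edges between $v$ and some $u$, and $S\subseteq\delta(\{v\})$. Define the colouring $d$ as follows.
\begin{itemize}
  \item Every edge not incident with $v$ is monochromatic \blue{blue}.
  \item Every edge of $\delta(\{v\})$ is bichromatic. We set $d(e_1,v)=\red{1}$ and $d(e_1,u)=\blue{0}$.
  \item Every other edge $e=vw$, including $e_2$, gets $d(e,w)=\red{1}$ and $d(e,v)=\blue{0}$.
\end{itemize}
Then $E_b(H)=\delta(\{v\})$ and $E(H-v)=E_m(H)$, so the edge partition condition holds. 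In particular, every edge of $S$ is bichromatic, since $S\subseteq\delta(\{v\})$.

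For the vertex condition: $v$ is red on $e_1$, and $u$ is red on $e_2$. Every other $w\neq v$ is adjacent to $v$ by universality, and is red on any edge $vw$. Together with matching-coveredness, this shows $(H,d)$ is a W-cone.

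The only non-mechanical step is the forced parallel pair in the ``only if'' direction. Its key point is that the red half-edge at $v$ and the red half-edge at its neighbour $u$ must lie on distinct edges between the same two vertices. The construction in the ``if'' direction just needs the care of giving $v$ its red half-edge on one of the two parallel edges, so that $u$ keeps a red half-edge on the other.
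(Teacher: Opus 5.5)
Your proof is correct and matches the paper's proof essentially verbatim. The ``only if'' direction reads off $S\subseteq E_b(H)=\delta(\{v\})$ from the edge partition condition, and the ``if'' direction uses exactly the paper's colouring: all edges of $H-v$ are blue, the apex is red on one parallel edge, and every other apex edge is red at its non-apex end. The only difference is that you derive the parallel pair at the apex directly from the vertex condition at $v$ and at $u$, where the paper simply cites \Cref{wcome_multi}; your derivation correctly relies on the paper's standing convention that monochromatic edges are blue.
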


\begin{proof}
($\Rightarrow$)
Suppose that there exists a half-edge $2$-colouring $d$ such that $(H,d)$ is a W-cone and
every edge of $S$ is bichromatic. By \Cref{conedef}, $H$ is matching-covered and there exists
a universal vertex $v$ such that
\[
E_b(H)=\delta(\{v\}).
\]
Hence every edge of $S$ is incident with $v$. By \Cref{wcome_multi}, every W-cone contains a
multi-edge, and therefore $v$ is incident with a pair of parallel edges.

($\Leftarrow$)
Suppose that $H$ is matching-covered and contains a universal vertex $v$ that is incident
with a pair of parallel edges and with every edge of $S$. Let
\[
F := H[V(H)\setminus \{v\}].
\]
Choose a neighbour $u_0$ of $v$ and two distinct parallel edges $e_1,e_2$ with endpoints
$\{v,u_0\}$.

Colour every edge of $F$ monochromatically blue. Colour every edge incident with $v$
bichromatically as follows:
\[
d(e_1,v)=1,\quad d(e_1,u_0)=0,
\]
\[
d(e_2,v)=0,\quad d(e_2,u_0)=1,
\]
and for every other edge $e=\{v,u\}$ set
\[
d(e,v)=0,\qquad d(e,u)=1.
\]

Then every edge incident with $v$ is bichromatic and every edge of $F$ is monochromatically
blue. Hence
\[
\delta(\{v\})=E_b(H)
\qquad\text{and}\qquad
E(H[V(H)\setminus \{v\}])=E_m(H),
\]
so the edge-partition condition of \Cref{conedef} holds. Moreover, every vertex of
$V(H)\setminus\{v\}$ has an incident red half-edge: the vertex $u_0$ gets one from $e_2$,
and every other vertex $u$ gets one from its edge to $v$. The vertex $v$ gets an incident red
half-edge from $e_1$. Therefore the vertex condition of \Cref{conedef} also holds.

Finally, since $H$ is matching-covered and $v$ is universal, every perfect matching of $H$
contains exactly one edge incident with $v$, and all remaining matching edges lie in $F$ and
are monochromatically blue. Hence every perfect matching of $H$ contains exactly one
bichromatic edge. Therefore $(H,d)$ is a W-cone.

Since every edge of $S$ is incident with $v$, every edge of $S$ is bichromatic in the above
colouring as required.
\end{proof}

For a decomposition component $K$, let $\Sigma(K)$ be the set of distinct endpoint-pairs of
its virtual edges. We define
\[
A(K):=\left\{\, v \in V(K) \;\middle|\;
\begin{aligned}
&v \text{ is universal in } K \text{ and is incident with a pair of parallel edges,}\\
&\text{and every virtual edge of } K \text{ is incident with } v
\end{aligned}
\right\}.
\]

\begin{lemma}\label{lem:apex-candidates}
Let $K$ be a decomposition component. If $K$ admits a W-cone colouring in which every
virtual edge is bichromatic, then every feasible apex of $K$ belongs to $A(K)$. Moreover:
\begin{enumerate}
\item if $|\Sigma(K)|\geq 2$, then $A(K)$ is a singleton;
\item if $\Sigma(K)=\{\{x,y\}\}$, then $A(K)\subseteq \{x,y\}$.
\end{enumerate}
\end{lemma}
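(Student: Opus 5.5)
The first claim follows directly from the forward direction of \Cref{prop:forced-wcone}, applied with $H=K$ and $S$ equal to the set of virtual edges of $K$. Suppose $d$ is a W-cone colouring of $K$ in which every virtual edge is bichromatic, and let $v$ be its apex. By \Cref{conedef} we have $E_b(K)=\delta(\{v\})$ and $v$ is universal. Every virtual edge is bichromatic, so every virtual edge is incident with $v$. By \Cref{wcome_multi}, $v$ is incident with a pair of parallel edges. Hence $v\in A(K)$. I will note explicitly that in this argument the apex is the specific vertex $v$ of \Cref{conedef}, so the inclusion really concerns the apex of the colouring.

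For item 2, suppose $\Sigma(K)=\{\{x,y\}\}$. Every vertex of $A(K)$ must be incident with the virtual edge $xy$, so it lies in $\{x,y\}$. This needs only the definition of $A(K)$.

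For item 1, let $\{a,b\}\neq\{a',b'\}$ be two distinct pairs in $\Sigma(K)$. Any $v\in A(K)$ lies in $\{a,b\}\cap\{a',b'\}$. Since the pairs are distinct, this intersection has at most one element, so $|A(K)|\le 1$. Nonemptiness comes from the hypothesis: the apex of the assumed W-cone colouring lies in $A(K)$ by the first part. Therefore $A(K)$ is exactly a singleton.

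The only delicate point is reading the hypothesis correctly. The existence of a W-cone colouring is used only to guarantee $A(K)\neq\emptyset$ in item 1; item 2 is a pure inclusion. I should also check the convention that parallel virtual edges with the same endpoints count once in $\Sigma(K)$. This is why item 1 is stated in terms of distinct endpoint-pairs rather than the number of virtual edges.
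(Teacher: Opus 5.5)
Your proof is correct and follows essentially the same route as the paper. The apex $v$ of the assumed colouring is universal, carries all bichromatic (hence all virtual) edges, and is incident with a parallel pair by \Cref{wcome_multi}; items~1 and~2 then follow because every element of $A(K)$ must lie in every pair of $\Sigma(K)$. Your version is slightly tighter than the paper's: you obtain $|A(K)|\le 1$ and $A(K)\subseteq\{x,y\}$ directly from the definition of $A(K)$ and use the hypothesis only for nonemptiness, whereas the paper phrases these steps in terms of ``the apex''.
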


\begin{proof}
Let $d$ be a W-cone colouring of $K$ in which every virtual edge is bichromatic, and let $v$
be the apex of this W-cone. By \Cref{conedef}, all bichromatic edges of $K$ are exactly the
edges incident with $v$. Hence every virtual edge of $K$ is incident with $v$. Since $(K,d)$
is a W-cone, $v$ is universal and, by \Cref{wcome_multi}, is incident with a pair of parallel edges.
Therefore $v\in A(K)$.

Now assume that $|\Sigma(K)|\ge 2$. From \Cref{prop:forced-wcone}, as all virtual edges are incident with the apex,
every endpoint-pair in $\Sigma(K)$ must contain the apex. If there are two distinct
endpoint-pairs, then they have a unique common endpoint, and this endpoint must be the
apex. Hence $A(K)$ is a singleton.

Finally, if $\Sigma(K)=\{\{x,y\}\}$, then every virtual edge has endpoints $\{x,y\}$, and any
feasible apex must be incident with this edge. Therefore every feasible apex belongs to
$\{x,y\}$.
\end{proof}


\begin{lemma}[Compatibility along the decomposition tree]\label{compatabilitylemma}
Given the sets $A(K)$ for the nodes $K$ of the standard $3$-connected decomposition
tree $T$, the compatibility test in Step~4 can be implemented by a bottom-up dynamic
program on $T$. It accepts if and only if the decomposition components admit W-cone
colourings, with all virtual edges bichromatic, that are compatible with the standard
reconstruction of $G$; equivalently, $G$ is obtained from these coloured components by
retained W-unions followed by the legal deletion of every virtual edge not present in
$G$.
\end{lemma}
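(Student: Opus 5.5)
The plan is to root the SPQR-tree $T$ at an arbitrary node and run a bottom-up dynamic program. Its state space is built from the apex candidates. For each node $K$, the choice of apex $a_K\in A(K)$ determines the W-cone colouring of $K$ up to the orientation of the red half-edges. By \Cref{prop:forced-wcone}, every edge at $a_K$ is bichromatic and every other edge is blue. By \Cref{lem:apex-candidates}, $|A(K)|\le 2$ whenever $K$ has a virtual edge, and every node except a single-node tree has one. So each node has at most two states.

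For a tree edge $KK'$ with adhesion pair $\{x,y\}$, the apex of $K$ must lie in $\{x,y\}$, and the same holds for $K'$. Both copies of the virtual edge are bichromatic, so after identification the partitions $X\sqcup X'$ from \Cref{thm_two-fc} of the two pieces must agree on which side $x$ and $y$ lie. In a W-cone the apex forms one side on its own. Hence the compatibility condition on $KK'$ is simply that $\{a_K,a_{K'}\}\subseteq\{x,y\}$.

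The remaining constraint is the vertex condition in $G$ after the virtual edges are deleted. Each vertex $v$ of $G$ needs a red half-edge on some real edge. In a W-cone with apex $a$, every non-apex vertex $u$ can receive red on its real edge to $a$, if one exists. The apex needs red on one of its parallel edges, or on some real bichromatic edge. We record red orientations as a second, binary component of the state. This component says whether the adhesion vertex shared with the parent has already been given a real red half-edge inside the subtree. At most two apex choices times two flags gives $O(1)$ states per node.

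We process nodes in post-order. A state of $K$ is feasible if there is a feasible state for each child satisfying the pairwise condition above. In addition, every vertex of $K$ that does not belong to the parent adhesion pair must get a real red half-edge from $K$ or from some child containing it; vertices in the parent pair may instead be covered higher up in the tree. Each node does work linear in its size plus its degree in $T$, so the whole pass runs in $O(|V|+|E|)$ time.

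Correctness needs two directions. For soundness, an accepting run gives an apex and an orientation for each node. These yield W-cone colourings with bichromatic virtual edges, and consecutive nodes agree across each adhesion pair. Reconstructing $G$ by retained W-unions is then legal by \Cref{def_W-union}. The final deletion of virtual edges is legal because the flags guarantee that a real red half-edge survives at every vertex. For completeness, a witnessing colouring $c$ of $G$ restricts to W-cone colourings of the components by \Cref{mainstructthm}, which defines apices and flags that the DP accepts.

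I expect the main obstacle to be this completeness direction. One must check that the W-cone colouring induced on each component actually has the form produced by \Cref{prop:forced-wcone}, so that the apex side is a singleton that is consistent across each adhesion pair. One must also check that the red half-edges in $c$ are correctly captured by the one-bit flags, in particular at vertices lying in several adhesion pairs. I would first verify this using the global partition $X\sqcup X'$ of $G$: inside each component the apex side must be a singleton, and $x$ and $x'$ lie on opposite sides by \Cref{lem_2-cut}.
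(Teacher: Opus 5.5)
Your overall architecture matches the paper's. You root the decomposition tree, let a state consist of an apex choice $a\in A(K)$ plus information about red half-edge support at the parent adhesion pair, and accept exactly when every virtual edge can be deleted legally in the sense of \Cref{def_W-union}(ii).

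\textbf{The gap is in the support component.} You record a single bit for ``the adhesion vertex shared with the parent''. But $p(K)=\{x,y\}$ has two vertices. Both survive into the parent and possibly further up, and each may or may not have received a real red half-edge from $T_K$.

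These two facts are not independent. Take a single real edge $xy$ anywhere in $T_K$. The component containing it also carries a virtual edge at $\{x,y\}$, so its apex is $x$ or $y$. Hence the edge is bichromatic and gives red to only one of $x,y$. So $T_K$ may realise the support patterns $\{x\}$ and $\{y\}$ but not $\{x,y\}$. Whether the whole instance is feasible can then depend on which pattern is chosen, for example when the rest of the tree can supply red to $y$ but not to $x$.

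A one-bit flag cannot distinguish these cases, so the dynamic program as you describe it either loses feasible solutions or accepts infeasible ones. This is exactly the worry you raise at the end, and appealing to the global partition $X\sqcup X'$ does not resolve it.

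The paper's fix is to take states $(a,S)$ with $S\subseteq p(K)$ the set of adhesion vertices supported from the $K$-side. It stores feasibility for every such pair, which is still $O(1)$ states (at most $2\cdot 4$). Children are then combined one at a time, tracking which vertices of $p(K)$ are supported and whether the apex $a$ is supported. The local orientation at $K$ must be solved jointly with the children's patterns, since each real edge at the apex gives red to exactly one endpoint.

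Note also that your pairwise condition $\{a_K,a_{K'}\}\subseteq\{x,y\}$ holds automatically for any $a_K\in A(K)$ and $a_{K'}\in A(K')$, because every virtual edge is incident with the apex. Likewise the remark that the two partitions ``must agree'' is vacuous, since a W-union may identify the endpoints of $e_1,e_2$ either way. So all of the compatibility information lives in the support bookkeeping. That is why the under-specified flag is the crux of the argument rather than a detail.
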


\begin{proof}
Root $T$ at an arbitrary component $R$. For a node $K$, let $T_K$ denote the subtree
rooted at $K$. If $K \neq R$, let $p(K)=\{x_K,y_K\}$ be the adhesion pair shared by
$K$ and its parent.

Fix a realization of the components in $T_K$ by retained W-unions. For
$z \in p(K)$, say that $z$ is \emph{supported from the $K$-side} if, after all retained
W-unions inside $T_K$ have been performed and every virtual edge internal to $T_K$
has been deleted, the vertex $z$ is incident with a red half-edge that does not belong
to a virtual edge of the parent adhesion pair $p(K)$.

For each non-root node $K$, we compute a set $F(K)$ of feasible states. A state is a
pair $(a,S)$, where $a \in A(K)$ and $S \subseteq p(K)$. It is feasible if there exists
a choice of W-cone colourings for the components of $T_K$ such that
\begin{enumerate}
  \item $K$ is coloured as a W-cone with apex $a$;
  \item every virtual edge in $T_K$ is bichromatic;
  \item every virtual edge internal to $T_K$ is deleted legally; and
  \item after these deletions, the vertices of $p(K)$ supported from the $K$-side are
        exactly the vertices of $S$.
\end{enumerate}
For the root, we use the same definition without the set $S$.

By \Cref{lem:apex-candidates}, $|A(K)| \le 2$ for every component $K$. Since $|p(K)| \le 2$, each table
has constant size.

Fix a component $K$ and an apex choice $a \in A(K)$. In any W-cone colouring with
apex $a$, every edge incident with $a$ is bichromatic and every other edge is
monochromatically blue. Hence, once $a$ is fixed, the only remaining freedom is the
orientation of the bichromatic edges. In particular, for each adhesion pair of $K$, the
support contributed by $K$ itself can be determined locally.

If $K$ is a leaf, we enumerate the finitely many pairs $(a,S)$ and keep exactly those
that are locally feasible.

Now assume that $K$ has children $K_1,\dots,K_t$ and that the tables $F(K_i)$ are
already known. Fix $a \in A(K)$ and process the children one by one.

If $|\Sigma(K)| \ge 2$, then $a$ is unique by \Cref{lem:apex-candidates} and every child adhesion pair
has the form $\{a,z_i\}$. Since only edges incident with the apex are bichromatic in a
W-cone, the support of $z_i$ from the $K$-side is determined locally inside $K$ once
$a$ is fixed. Thus, while processing the children, the only nonlocal information that
must be carried is which vertices of $p(K)$ are already supported and whether the apex
$a$ is already supported.

If $\Sigma(K)=\{\{x,y\}\}$, then every virtual edge of $K$ uses the same pair
$\{x,y\}$, and it suffices to remember which of $x$ and $y$ are supported.

Consider a child $K_i$ attached to $K$ along the adhesion pair $q_i=\{u_i,v_i\}$.
Combine a current partial state of $K$ with a feasible state of $K_i$. The retained
W-union along $q_i$ may be followed by the deletion of the virtual edge(s) of $q_i$
exactly when, after the union, each of $u_i$ and $v_i$ is still incident with a red
half-edge outside the deleted virtual edge(s). This is precisely the legality condition
of \Cref{def_W-union}(ii). When the condition holds, \Cref{lem_paste-over-edge} implies that the retained
W-union is a W-state graph; we then delete the virtual edge(s) of $q_i$, update the
support information, and continue. Otherwise, the pair of partial states is discarded.

Since each node has at most two apex choices and only constantly many support patterns,
each child is processed in constant time once the local support information is available.
Hence the bottom-up traversal is linear in $|T|$.

If the root has a feasible state, selecting one feasible state at each node yields
compatible W-cone colourings of all decomposition components. By construction, the
standard reconstruction of $G$ is obtained by retained W-unions, and every virtual
edge not present in $G$ is deleted only when \Cref{def_W-union}(ii) permits it. Conversely,
any compatible family of local W-cone colourings induces a feasible state at every node,
so the dynamic program accepts. This proves the lemma.
\end{proof}

\begin{algorithm}[ht]
\caption{W-colourability of a graph}
\label{alg:wstatecol}
\textbf{Input:} An uncoloured graph $G=(V,E)$ \\
\textbf{Output:} \textsc{true} if and only if there exists a half-edge $2$-colouring $c$
such that $(G,c)$ is a W-state graph
\begin{enumerate}
\item \textbf{Global matching-covered test.}
Use the algorithm of Carvalho and Cheriyan~\cite{DBLP:conf/soda/CarvalhoC05}
to test whether $G$ is a connected matching-covered graph.
If not, return \textsc{false}.

\item \textbf{3-connected decomposition.}
Compute the standard $3$-connected decomposition tree $T$ of $G$ using SPQR-trees.
Let $K_1,\dots,K_t$ be its components, each equipped with its virtual edges.

\item \textbf{Local W-cone test.}
For each $i\in [t]$:
\begin{enumerate}
\item test whether $K_i$ is matching-covered; if not, return \textsc{false};
\item compute
\[
A(K):=\left\{\, v \in V(K) \;\middle|\;
\begin{aligned}
&v \text{ is universal in } K \text{ and is incident with a pair of parallel edges,}\\
&\text{and every virtual edge of } K \text{ is incident with } v
\end{aligned}
\right\}.
\]
\item if $A_i:=A(K_i)=\emptyset$, return \textsc{false}.
\end{enumerate}

\item \textbf{Compatibility test.} Run the bottom-up dynamic program of \Cref{compatabilitylemma}.
   If the root has no feasible state, return \texttt{false}.

\item \textbf{Return.}
If all tests pass, return \textsc{true}.
\end{enumerate}
\end{algorithm}

\paragraph*{Correctness of Algorithm~\ref{alg:wstatecol}.}

Suppose first that Algorithm~\ref{alg:wstatecol} outputs true. Then Step~1 ensures that $G$ is
a connected matching-covered graph. By Step~3 and Proposition~\ref{prop:forced-wcone}, every component $K_i$ admits a
W-cone colouring in which every virtual edge is bichromatic and whose apex belongs
to $A_i$. Step~4 then checks, by \Cref{compatabilitylemma}, whether these local realizations can be
chosen compatibly so that the standard reconstruction of $G$ is obtained by retained
W-unions and every virtual edge not present in $G$ is deleted legally. By \Cref{wconiswstat}, each coloured component is a W-state
graph, and by \Cref{lem_paste-over-edge} every such W-union again yields a W-state graph. Therefore every
intermediate graph in the reconstruction is a W-state graph, and the final graph obtained at
the end of Step~4 is a half-edge $2$-coloured graph $(G,c)$ that is a W-state graph.
Hence $G$ is W-realizable.

Conversely, suppose that $G$ is W-realizable, and let $c$ be a witnessing half-edge
$2$-colouring such that $(G,c)$ is a W-state graph. By \Cref{def_Wstate}, $G$ is matching-covered,
so Step~1 does not reject. By \Cref{mainstructthm}, every $3$-connected component of the standard
decomposition of $(G,c)$, with the inherited colouring on the non-virtual edges and every
virtual edge coloured bichromatically, is a W-cone. Hence, for every component $K_i$, its
apex is universal, is incident with every virtual edge of $K_i$, and is incident with a pair of
parallel edges. Thus $A_i\neq \emptyset$ and Step~3 cannot reject. Step~4 verifies, by \Cref{compatabilitylemma}, that these local W-cone colourings can be
chosen compatibly along the decomposition tree so that the standard reconstruction of $G$ is
realised by retained W-unions and every virtual edge not present in $G$ is deleted only
when this is legal in the sense of \Cref{def_W-union}(ii). Therefore
Algorithm~\ref{alg:wstatecol} outputs \textsc{true}.

\paragraph*{Running time.}
Step~1 takes $O(|V|\cdot |E|)$ time by Carvalho and Cheriyan.
Step~2 is linear in the size of the input graph.
For Step~3, each component is tested for matching-coveredness and for the existence of a
candidate apex. Since the sum of the sizes of all decomposition components is linear in the
size of the decomposition, the total cost of Step~3 is bounded by $O(|V|\cdot |E|)$.
By \Cref{compatabilitylemma}, Step~4 is a bottom-up traversal of the decomposition tree
and therefore runs in linear time. Hence Algorithm~\ref{alg:wstatecol} runs in $O(|V|\cdot |E|)$ time.

\subsection{A complexity barrier for Dicke states}\label{dicke_section}
A natural question is whether one can extend our ideas to Dicke states, which generalise W-states by superposing all basis states of a fixed Hamming weight
$k$. In the experiment-graph language, the Dicke analogue naturally separates into two requirements.
Unlike the W-state setting, one should not exclude monochromatic red edges a priori, since
such an edge contributes two red half-edges. For a perfect matching $M$, let
\[
R(M):=\{v\in V(G): \text{the half-edge of the edge of $M$ incident with $v$ is red}\}.
\]
Then the Dicke constraints as defined in \cite{vardi1,vardi2} are:

\begin{enumerate}

\item \textsc{EXISTS-PMVC}: There exists no perfect matching $M\in PM(G)$, such that
$|R(M)|\neq k$.
\item \textsc{FORALL-PMVC}: For all $S\subseteq V(G)$ with $|S|=k$, there exists
a perfect matching $M\in PM(G)$ such that $R(M)=S$.
\end{enumerate}

To isolate this barrier, we use the general PMVC framework of \cite{vardi1,vardi2}. Given a vertex colouring
$\varphi:V(G)\to\{0,1\}$, let $G_\varphi$ be the simple graph on $V(G)$ where
$uv\in E(G_\varphi)$ if and only if $G$ contains an edge $e=uv$ with
$c(e,u)=\varphi(u)$ and $c(e,v)=\varphi(v)$. Equivalently, $\varphi$ is induced by some
perfect matching of $(G,c)$ if and only if $G_\varphi$ has a perfect matching. For
$k\in\{0,\ldots,|V(G)|\}$, let
\[
\mathcal{C}_k(G):=\{\varphi:V(G)\to\{0,1\}\mid |\varphi^{-1}(1)|=k\}.
\]

\begin{definition}[Dicke \textsc{FORALL-PMVC}]
Given a half-edge $2$-coloured multigraph $(G,c)$ and an integer $k$, decide whether
$G_\varphi$ has a perfect matching for every $\varphi\in\mathcal{C}_k(G)$.
\end{definition}

\begin{restatable}[\cite{vardi1}]{observation}{ExistsPMVCPolynomial}
\label{obs:exists-pmvc-poly}
EXISTS-PMVC can be checked in
polynomial time.
\end{restatable}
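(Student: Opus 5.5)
The plan is to read EXISTS-PMVC, in the sense of Vardi and Zhang, as an \emph{existence} question: given $(G,c)$ and $k$, decide whether there exists a perfect matching $M\in\mathcal{PM}(G)$ whose induced colouring lies outside the target set, i.e.\ with $|R(M)|\neq k$. The Dicke condition stated above is exactly the negation of this. So it suffices to decide existence of such a violating matching in polynomial time and flip the answer. The approach is to turn $|R(M)|$ into a linear objective over perfect matchings and use weighted perfect matching algorithms.

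\textbf{Step 1 (linearisation).} Define an integer weight $w(e)\in\{0,1,2\}$ on each edge $e=\{u,v\}$ by $w(e):=c(e,u)+c(e,v)$, the number of red half-edges of $e$. Monochromatic red edges, which are allowed in the Dicke setting, get weight $2$. For any perfect matching $M$, every vertex is covered by exactly one edge of $M$, and that vertex is red exactly when its half-edge on that edge is red. Hence
\[
|R(M)| \;=\; \sum_{e\in M} w(e) \;=\; w(M).
\]
Parallel edges cause no trouble: keep all of them, or for the two optimisation problems keep only the extreme-weight copy between each pair.

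\textbf{Step 2 (extremes).} First test whether $G$ has any perfect matching, for instance via Micali--Vazirani~\cite{Micaliv}. If it has none, no violating matching exists and the Dicke EXISTS-condition holds vacuously.

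Otherwise, compute a minimum-weight perfect matching and a maximum-weight perfect matching with Edmonds' weighted blossom algorithm, which runs in polynomial time. Let their weights be $w_{\min}$ and $w_{\max}$.

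\textbf{Step 3 (decision).} A perfect matching $M$ with $w(M)\neq k$ exists if and only if $w_{\min}<k$ or $w_{\max}>k$.
\begin{itemize}
\item If $w_{\min}<k$, the minimiser itself is a witness.
\item If $w_{\max}>k$, the maximiser is a witness.
\item If neither holds, then $k\le w_{\min}\le w(M)\le w_{\max}\le k$ for every $M\in\mathcal{PM}(G)$, so every perfect matching has $|R(M)|=k$ and no violating matching exists.
\end{itemize}
Thus existence of a violating perfect matching is decided with two weighted matching computations. The EXISTS-PMVC condition is certified exactly when $w_{\min}=w_{\max}=k$, or when $\mathcal{PM}(G)=\emptyset$.

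\textbf{Main obstacle.} The only subtle point is conceptual. Asking for a perfect matching with $|R(M)|$ \emph{exactly} equal to $k$ is an Exact Matching instance, which is only known to lie in $\mathsf{RP}$. The argument must therefore use that the violating set $\{|R(M)|\neq k\}$ is a union of two half-lines, $\{<k\}$ and $\{>k\}$. Existence in each half-line is captured by a single optimisation, and this is why no Exact Matching subroutine is needed. I would also state explicitly that the reduction relies on the identity $|R(M)|=w(M)$, which holds because every vertex is covered exactly once.
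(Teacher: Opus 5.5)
Your proposal is correct and takes essentially the same approach as the paper. Your weight $c(e,u)+c(e,v)$ is exactly the paper's $2/1/0$ weighting of monochromatic red, bichromatic and monochromatic blue edges, and like the paper you certify the condition by checking that the minimum- and maximum-weight perfect matchings both have weight $k$; your explicit treatment of the case $\mathcal{PM}(G)=\emptyset$ is a small extra piece of care that the paper leaves implicit.
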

\begin{proof}
Assign weight $2$ to each monochromatic red edge, weight $1$ to each
bichromatic edge, and weight $0$ to each monochromatic blue edge. Then the weight of a
perfect matching $M$ is exactly $|R(M)|$. Hence EXISTS-PMVC holds if and only if the
minimum-weight and maximum-weight perfect matchings both have weight $k$.
\end{proof}

Thus the  difficulty lies in verifying FORALL-PMVC to identify Dicke-state graphs.
\begin{restatable}[\cite{vardi2}]{observation}{DickeForallPMVCInCoNP}
\label{conp_obs}
Dicke \textsc{FORALL-PMVC} is in coNP.
\end{restatable}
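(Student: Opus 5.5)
To show that Dicke \textsc{FORALL-PMVC} lies in coNP, it suffices to show that its complement, i.e.\ the set of instances $((G,c),k)$ for which \emph{some} $\varphi\in\mathcal{C}_k(G)$ makes $G_\varphi$ have no perfect matching, lies in NP. The certificate will simply be such a vertex colouring $\varphi:V(G)\to\{0,1\}$.

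The verifier proceeds in three steps, each polynomial in the size of $(G,c)$. First, it checks that $|\varphi^{-1}(1)|=k$; this takes linear time. Second, it constructs the simple graph $G_\varphi$ by scanning every edge $e=uv$ of the multigraph $G$ and keeping the pair $uv$ exactly when $c(e,u)=\varphi(u)$ and $c(e,v)=\varphi(v)$. This takes $O(|V|+|E|)$ time, and $G_\varphi$ has at most $|E|$ edges. Third, it decides whether $G_\varphi$ has a perfect matching, using Edmonds' blossom algorithm or Micali--Vazirani~\cite{Micaliv} in $O(|E|\sqrt{|V|})$ time. The verifier accepts, certifying a ``no'' instance of FORALL-PMVC, precisely when no perfect matching exists.

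For correctness, the definition of the problem says that an instance is a ``no'' instance if and only if there is some $\varphi\in\mathcal{C}_k(G)$ for which $G_\varphi$ has no perfect matching. Such a $\varphi$ is a certificate of size $|V(G)|$ that passes all three checks. Conversely, any $\varphi$ accepted by the verifier witnesses a ``no'' instance. The complement is therefore in NP, and FORALL-PMVC is in coNP.

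No step here is a genuine obstacle. The only point needing care is that $G$ may be a multigraph: $G_\varphi$ must include the pair $uv$ whenever \emph{some} parallel copy of $uv$ has half-edge colours matching $\varphi$. This matches the stated equivalence that $\varphi$ is induced by a perfect matching of $(G,c)$ iff $G_\varphi$ has a perfect matching, and it keeps the construction linear.
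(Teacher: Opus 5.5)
Your proposal is correct and follows the same approach as the paper: the NO-certificate is a colouring $\varphi\in\mathcal{C}_k(G)$ for which $G_\varphi$ has no perfect matching, checked by building $G_\varphi$ and running a polynomial-time perfect matching test. You spell out the verifier in more detail than the paper does, namely the check $|\varphi^{-1}(1)|=k$ and the handling of parallel edges when constructing $G_\varphi$.
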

\begin{proof}
A NO-certificate is a colouring
$\varphi\in\mathcal{C}_k(G)$ such that $G_\varphi$ has no perfect matching. Given
$\varphi$, the graph $G_\varphi$ can be constructed in polynomial time, and one can test
whether it has a perfect matching in polynomial time.
\end{proof}

The hardness of Dicke \textsc{FORALL-PMVC} was not known \cite{vardi2}. We resolve this.

\linkedstatement{DickeForallPMVCCoNPComplete}{thm:dicke-forall-pmvc-conp-complete}{\DickeForallPMVCCoNPComplete*}
\begin{proof}
Membership in coNP follows from \cref{conp_obs}. For coNP-hardness, it is enough to show that the complement of the problem is NP-hard.
We reduce from \textsc{Vertex Cover}. Let $(H=(V_H,E_H),k)$ be an instance of
\textsc{Vertex Cover}. We construct a half-edge $2$-coloured multigraph $(G,c)$ as follows.

For every $v\in V_H$, create two vertices $v^0$ and $v^1$, and add two further vertices
$a$ and $b$. Thus
\[
V(G)=\{v^0,v^1\mid v\in V_H\}\cup\{a,b\}.
\]
Now add the following edges.
\begin{enumerate}
\item For each $v\in V_H$, add four parallel edges between $v^0$ and $v^1$, one of each
type $(0,0)$, $(0,1)$, $(1,0)$, and $(1,1)$. Hence $v^0v^1\in E(G_\varphi)$ for every
vertex colouring $\varphi$.

\item For each $\{u,v\}\in E_H$, add one edge between $u^0$ and $v^0$ of type $(0,0)$.
Hence $u^0v^0\in E(G_\varphi)$ if and only if $\varphi(u^0)=\varphi(v^0)=0$.

\item For each $v\in V_H$, add four parallel edges between $a$ and $v^1$, and four parallel
edges between $b$ and $v^1$, again one of each type $(0,0)$, $(0,1)$, $(1,0)$, and
$(1,1)$. Thus $av^1,bv^1\in E(G_\varphi)$ for every vertex colouring $\varphi$.

\item Add no edge between $a$ and $b$, and no other edges.
\end{enumerate}

\begin{figure}
\scalebox{.72}{
 \begin{tikzpicture}[x=1cm,y=1cm, font=\sffamily]


\node[title] at (3.7,6.7) {\Large $H = P_3$};

\node[leftvertex] (L2) at (1.8,2.25) {\Large 2};
\node[leftvertex] (L1) at (3.9,5.1) {\Large 1};
\node[leftvertex] (L3) at (6.25,2.25) {\Large 3};
\draw[darkedge] (L2) -- (L1);
\draw[darkedge] (L2) -- (L3);

\node at (4.0,0.7)  {{\Large $V_H = \{1,2,3\},\quad E_H = \{\{1,2\},\{2,3\}\}$}};

\node[scale=2.2] at (9.35-1,3.6) {$\Longrightarrow$};

\node[title] at (16.0-2,6.7) {\large Constructed graph $G$};

\node[auxvertex] (a)   at (12.4-2,5.55) {$a$};
\node[auxvertex] (b)   at (18.7-2,5.55) {$b$};
\node[vone]      (v11) at (12.4-2,3.35) {$1^{1}$};
\node[vone]      (v21) at (15.55-2,3.35) {$2^{1}$};
\node[vone]      (v31) at (18.7-2,3.35) {$3^{1}$};
\node[vzero]     (v10) at (12.4-2,1.15) {$1^{0}$};
\node[vzero]     (v20) at (15.55-2,1.15) {$2^{0}$};
\node[vzero]     (v30) at (18.7-2,1.15) {$3^{0}$};

\draw[gedge] (a) -- (v11);
\draw[gedge] (a) -- (v21);
\draw[gedge] (a) -- (v31);
\draw[gedge] (b) -- (v11);
\draw[gedge] (b) -- (v21);
\draw[gedge] (b) -- (v31);
\draw[gedge] (v11) -- (v10);
\draw[gedge] (v21) -- (v20);
\draw[gedge] (v31) -- (v30);

\draw[bedge] (v10) -- (v20) -- (v30);

\node[legendtext] at (21.35-2.8,5.55+.25) {\Large virtual};
\node[legendtext] at (21.35-2.8,5.55-.5+.25) {\Large vertices};
\node[legendtext] at (21.35-2.8,3.35) {\Large $v^{1}$ copies};
\node[legendtext] at (21.35-2.8,1.15) {\Large $v^{0}$ copies};

\coordinate (lu) at (2.2,-0.2-.5);
\coordinate (lv) at (4.25,-0.2-.5);
\node[font=\bfseries\small, text=edgegreen] at ($(lu)!0.5!(lv)+(0,0.18+.1)$) {\normalsize $\times 4$};
\node[font=\Large] at (5.1-.3,-0.15-.5) {$=$};
\draw[gedge] (lu) -- (lv);
\fill[black!90] (lu) circle (2.2pt) node[below=3pt, font=\bfseries\large] {$u$};
\fill[black!90] (lv) circle (2.2pt) node[below=3pt, font=\bfseries\large] {$v$};

\begin{scope}[shift={(5.25,-0.2)}]
  \coordinate (u) at (0,0-.5);
  \coordinate (v) at (2.2,0-.5);
  \coordinate (mid) at ($(u)!0.5!(v)$);

  \draw[bedge] (u) to[bend left=60] (v);
  \node at ($(mid)+(0,0.72+.06)$) {\scriptsize $(0,0)$};

  \begin{scope}
    \clip (-0.3,-1.0) rectangle ($(mid)+(0,1.15)$);
    \draw[bedge] (u) to[bend left=15] (v);
  \end{scope}
  \begin{scope}
    \clip ($(mid)+(0,-1.0)$) rectangle (2.5,1.15);
    \draw[draw=red, line width=1.5pt] (u) to[bend left=15] (v);
  \end{scope}
  \node at ($(mid)+(0,0.35)$) {\scriptsize $(0,1)$};

  \begin{scope}
    \clip (-0.3,-1.0) rectangle ($(mid)+(0,1.0)$);
    \draw[draw=red, line width=1.5pt] (u) to[bend right=20] (v);
  \end{scope}
  \begin{scope}
    \clip ($(mid)+(0,-1.0)$) rectangle (2.5,1.0);
    \draw[bedge] (u) to[bend right=20] (v);
  \end{scope}
  \node at ($(mid)+(0,-0.38+.36)$) {\scriptsize $(1,0)$};

  \draw[draw=red, line width=1.5pt] (u) to[bend right=65] (v);
  \node at ($(mid)+(0,-0.78+.4)$) {\scriptsize $(1,1)$};

  \fill[edgeblue] (0.18,-1.07-.5) circle (2.2pt);
  \node[anchor=west, font=\small] at (0.28-.3,-1.07-.2) {colour\! 0};
  \fill[red] (1.42,-1.07-.5) circle (2.2pt);
  \node[anchor=west, font=\small] at (1.52-.3,-1.07-.2) {colour\! 1};

   \fill[black!90] (u) circle (2.2pt) node[below=3pt] {$u$};
  \fill[black!90] (v) circle (2.2pt) node[below=3pt] {$v$};
  
\end{scope}

\draw[gedge, line cap=rect] (10.75,-0.1) -- (11.65,-0.1);
\node[legendtext, anchor=west] at (11.9,-0.1) {$v^{0}\leftrightarrow v^{1}$ ($\times 4$ parallel, all half-edge types)};

\draw[gedge, line cap=rect] (10.75,-0.6) -- (11.65,-0.6);
\node[legendtext, anchor=west] at (11.9,-0.6) {$a,b$ $\leftrightarrow$ $v^{1}$ ($\times 4$ parallel, all half-edge types)};

\draw[bedge, line cap=rect] (10.75,-1.1) -- (11.65,-1.1);
\node[legendtext, anchor=west] at (11.9,-1.1) {$u^{0}\leftrightarrow v^{0}$ ($\times 1$, type $(0,0)$ from $E_H$)};
\end{tikzpicture}
}
\caption{Construction illustrated on $H = P_3$. 
Vertices $v^0, v^1$ are the two copies of each $v \in V_H$; vertices $a, b$ are virtual.}
 \label{fig:conp-construction}
\end{figure}


Fix a vertex colouring $\varphi:V(G)\to\{0,1\}$ and define
\[
C(\varphi):=\{v\in V_H\mid \varphi(v^0)=1\}.
\]
We claim that
\[
G_\varphi\text{ has a perfect matching}
\iff
C(\varphi)\text{ is not a vertex cover of }H.
\]

Suppose first that $C(\varphi)$ is not a vertex cover of $H$. Then there exists an edge
$\{u,v\}\in E_H$ with $u,v\notin C(\varphi)$. Hence $\varphi(u^0)=\varphi(v^0)=0$, so the
edge $u^0v^0$ belongs to $G_\varphi$. Now
\[
\{u^0v^0,au^1,bv^1\}\cup\{w^0w^1\mid w\in V_H\setminus\{u,v\}\}
\]
is a perfect matching of $G_\varphi$.

Conversely, suppose that $C(\varphi)$ is a vertex cover of $H$. Then every edge of $H$ has
at least one endpoint in $C(\varphi)$, so no edge between two $0$-copies survives in
$G_\varphi$. Therefore each vertex $v^0$ can only be matched to $v^1$, and any perfect
matching of $G_\varphi$ would have to contain all edges $v^0v^1$. But then every vertex
$v^1$ is already matched, while the vertices $a$ and $b$ have neighbours only among the
$v^1$-vertices and are not adjacent to each other. Hence $G_\varphi$ has no perfect
matching. This proves the claim.

We now complete the reduction. If $H$ has a vertex cover $C\subseteq V_H$ with $|C|\leq k$,
define a colouring $\varphi\in\mathcal{C}_k(G)$ by setting $\varphi(v^0)=1$ for
$v\in C$, $\varphi(v^0)=0$ for $v\notin C$, and placing the remaining $k-|C|$ red vertices
arbitrarily among $\{v^1\mid v\in V_H\}\cup\{a,b\}$. Then $C(\varphi)=C$ is a vertex cover,
so by the claim $G_\varphi$ has no perfect matching. Thus the constructed instance is a
NO-instance of Dicke \textsc{FORALL-PMVC}.

Conversely, suppose the constructed instance is a NO-instance of
Dicke \textsc{FORALL-PMVC}. Then there exists a colouring
$\varphi\in\mathcal{C}_k(G)$ such that $G_\varphi$ has no perfect matching. By the claim,
$C(\varphi)$ is a vertex cover of $H$, and clearly $|C(\varphi)|\leq k$ because
$C(\varphi)$ counts only some of the $k$ red vertices of $\varphi$. Hence $(H,k)$ is a
YES-instance of \textsc{Vertex Cover}.

Therefore $(H,k)$ is a YES-instance of \textsc{Vertex Cover} if and only if the constructed
instance is a NO-instance of Dicke \textsc{FORALL-PMVC}. The complement of
Dicke \textsc{FORALL-PMVC} is NP-hard, so Dicke \textsc{FORALL-PMVC} is coNP-hard.
Together with membership in coNP, this proves the theorem.
\end{proof}
\section{Conclusion and open problems}
We gave a complete structural characterization of W-state graphs, a graph-theoretic abstraction of experiments generating multipartite W-states. As consequences, we showed that no W-state graph is a simple graph and obtained an $O(|V|\cdot|E|)$-time recognition algorithm. Conceptually, our results place W-state graphs firmly within classical matching theory: the monochromatic edges form two factor-critical components, the cut between them is tight, and the 3-connected components are exactly W-cones, obtained from factor-critical bases by adding a universal vertex. We end with a few natural open problems.

In sharp contrast with the W-state case, the Dicke state analogue encounters a complexity barrier at the realizability level: EXISTS-PMVC is easy to verify, but FORALL-PMVC is coNP-complete. Understanding the hardness of FORALL-PMVC when restricted to graphs satisfying the EXISTS-PMVC condition for Dicke states is an interesting open question. Obtaining a structural characterization analogous to \Cref{mainstructthm} under these constraints is another natural open question.

The experiment-graph framework has also been developed to capture more sophisticated situations involving \textsl{destructive interference} \cite{Quantum_graphs_2, Quantum_graphs_3}. In that setting, one allows edge weights to be complex numbers and requires that they satisfy an appropriate system of non-linear equations. The unweighted case considered here corresponds to the regime in which all edge weights are positive real numbers, i.e., to circuits with no phase-shift gates \cite{Quantum_graphs}. It would be interesting to obtain analogous structural characterizations for versions of W-state graphs
where edges are weighted by complex numbers. More broadly, identifying quantum states that cannot be created by constructive interference alone but can only be realized using destructive interference remains an important open problem. 

W-state graphs also realize a “for all” version of the Exact Matching problem with parameter $k = 1$: every perfect matching contains exactly one \red{red} edge. Given a \red{red}–\blue{blue} edge-coloured graph $H$, the property that every perfect matching of $H$ has exactly $k$ \red{red} edges can be checked in polynomial time: assign weight $0$ to \blue{blue} edges and $1$ to \red{red} edges, and compute a minimum-weight and a maximum-weight perfect matching; the property holds if and only if both have total weight $k$. However, several related questions remain poorly understood. Given an uncoloured graph $G$ and an integer $k$, can we $2$-colour its edges so that every perfect matching of $G$ has \textsl{exactly} $k$ \red{red} edges? For which graphs and parameters does such a colouring exist? A better understanding of such universal Exact Matching constraints could give us insight into new ways of designing experiments that create more general quantum states and help precisely determine which experiments are possible.

\paragraph*{Acknowledgements.}
We thank Nishad Kothari for simplifying the proof by pointing us to the result of Edmonds, Lovász, and Pulleyblank (\Cref{thm_brickchar}).

\newpage 

\bibliography{literature_w_state}

\end{document}